\documentclass{article}



\newcommand{\powdelta}[1]{\delta^{#1}}

    \usepackage[preprint]{neurips_2025}



\usepackage[utf8]{inputenc} 
\usepackage[T1]{fontenc}    
\usepackage{hyperref}       
\usepackage{url}            
\usepackage{booktabs}       
\usepackage{amsfonts}       
\usepackage{nicefrac}       
\usepackage{microtype}      
\usepackage{xcolor}         
\usepackage{amsmath}
\usepackage{amsthm}
\usepackage{algorithm}
\usepackage{algorithmic}
\usepackage{todonotes}
\usepackage{subcaption}
\usepackage{float}
\usepackage[capitalise,nameinlink]{cleveref}
\crefname{figure}{Figure}{Figures}
\crefname{equation}{Equation}{Equations}

\newtheorem{theorem}{Theorem}[section]
\newtheorem{corollary}[theorem]{Corollary}
\newtheorem{lemma}[theorem]{Lemma}
\newtheorem{claim}[theorem]{Claim}
\newtheorem{observation}[theorem]{Observation}
\newtheorem{question}[theorem]{Question}
\newcommand{\tee}{\bar{t}}

\title{Ads that Stick: Near-Optimal Ad Optimization through Psychological Behavior Models \thanks{A part of this work was done when Kailash was an intern at the Indian Institute of Science. Akash Pareek's research is supported in part by the KIAC Postdoctoral Fellowship, Ittiam Systems CSR grant, and the Walmart Center for Tech Excellence at IISc (CSR
Grant WMGT-23-0001). Arindam Khan’s research is supported in part by the Google India Research Award,
SERB Core Research Grant (CRG/2022/001176) on “Optimization under Intractability and Uncertainty”, Ittiam Systems CSR grant, and the Walmart Center for Tech Excellence at IISc (CSR
Grant WMGT-23-0001). }}

\author{%
  Kailash Gopal Darmasubramanian \\
  Department of Computer Science\\
  Indian Institute of Technology, Madras\\
  \texttt{cs22b098@smail.iitm.ac.in} \\
   \And
  Akash Pareek \\
  Department of Computer Science $\&$ Automation \\
  Indian Institute of Science, Bangalore\\
  \texttt{akashpareek@iisc.ac.in} \\
   \AND
  Arindam Khan \\
  Department of Computer Science $\&$ Automation \\
  Indian Institute of Science, Bangalore\\
  \texttt{arindamkhan@iisc.ac.in} \\
  \And
  Arpit Agarwal \\
  Department of Computer Science $\&$ Engineering \\
  Indian Institute of Technology, Bombay \\
  \texttt{agarpit@outlook.com} \\
}

\begin{document}

\maketitle

\begin{abstract}

Optimizing the timing and frequency of advertisements (ads) is a central problem in digital advertising, with significant economic consequences.
Existing scheduling policies rely on simple heuristics, such as uniform spacing and frequency caps, that overlook long-term user interest.
However, it is well-known that users' long-term interest and engagement result from the interplay of several psychological effects (Curmei, Haupt, Recht, and Hadfield-Menell,
ACM CRS, 2022).

In this work, we model change in user interest upon showing ads based on three key psychological principles: \emph{mere exposure}, \emph{hedonic adaptation}, and \emph{operant conditioning}.
The first two effects are modeled using a concave function of user interest with repeated exposure, while the third effect is modeled using a \emph{temporal decay} function, which explains the decline in user interest due to overexposure.
Under our psychological behavior model, we ask the following question: Given a continuous time interval $T$, how many ads should be shown, and at what times, to maximize the user interest towards the ads?

Towards answering this question, we first show that, if the number of displayed ads is fixed, then the optimal ad-schedule only depends on the operant conditioning function. 
Our main result is a quasi-linear time algorithm that outputs a \emph{near-optimal} ad-schedule,
i.e., the difference in the performance of our schedule and the optimal schedule is exponentially small.
Our algorithm leads to significant insights about optimal ad placement and shows that simple heuristics such as uniform spacing are sub-optimal under many natural settings. The optimal number of ads to display, which also depends on the mere exposure and hedonistic adaptation functions, can be found through a simple linear search given the above algorithm. We further support our findings with experimental results, demonstrating that our strategy outperforms various baselines. 

\end{abstract}
\section{Introduction}\label{sec:intro}

Digital advertising is the backbone of the trillion-dollar modern internet economy. It is often the primary channel used by advertisers for acquiring new customers and maintaining interest among existing users.
Given that user attention is a scarce commodity and the economic consequences of capturing it are significant, \emph{optimizing the timing and frequency of 
ads} is an important problem for advertisers. Furthermore, this is quite relevant across different contexts, such as:
\begin{itemize}
\item Inserting ads into a (live) video/ audio stream.
\item Sending push notifications to app users. 
\item Embedding sponsored content within a single user session or across user sessions.
\end{itemize}
In use-cases like the ones mentioned above, the goal is to maximize user interest/recall while accounting for potential \emph{fatigue/satiation} in the long run.

A long line of empirical work in behavioral psychology has shown that the temporal spacing and frequency of ads have a significant effect on the memory retention and fatigue of the customer 
\cite{singh1994enhancing, sahni2015effect, curmei2022towards}.
For instance, following an initial positive neural response to repeated stimuli, individuals tend to revert toward a baseline level of interest, causing the “thrill” of the same message to fade. This initial boost is known as \emph{mere exposure}, while the subsequent tapering-off is termed \emph{hedonic adaptation} in the behavioral psychology literature \cite{curmei2022towards}.
Moreover, insufficient spacing between two ads can drain attention and affect memory retention in the long run \cite{singh1994enhancing}. This effect is referred to as \emph{operant conditioning}.

Despite the importance of this problem, there has been relatively little work studying the long-term cognitive and psychological effects of ad impressions \cite{sahni2015effect, aravindakshan2015understanding, rafieian2023optimizing} (see \cref{sec:relatedwork} for more details). Most existing approaches for ad placement rely on simple heuristics such as uniform spacing, front-loading, or frequency caps \cite{aravindakshan2015understanding, rafieian2023optimizing, despotakis2021first}, or on short-sighted policies that treat each ad impression as independent of the others \cite{TheocharousTG15}.

Some recent studies have tried to move beyond these heuristics. For instance, Freeman, Wei, and Yang \cite{freeman2022does} ran an experiment with 327 participants to test where ads should be placed in order to reduce negative reactions such as \emph{anger, irritation, and perceived intrusiveness}. Their main hypothesis was that \textit{"Mid-roll ads will elicit more anger and be seen as more intrusive than preroll ads"}, and their experiments confirmed this. They also tested several other hypotheses, all arriving at the same conclusion that placing ads at the beginning is generally more effective than inserting them in the middle. Similarly, Ritter and Cho \cite{ritter2009effects} conducted an experiment with 129 participants on audio podcasts. Their hypothesis was that \textit{"Advertising at the beginning of podcasts will generate less intrusiveness, less irritation, more favorable attitudes toward an ad, and less ad avoidance than advertising in the middle of podcasts"}. Their findings supported this claim. Other studies also suggest similar trends. For example, Goldstein, McAfee, and Suri \cite{goldstein2011effects} argue that a mix of ad placements, at the beginning, at the end, and a small number evenly spaced in the middle, can create a more positive overall experience for users. These experimental studies motivated us to ask the following questions.

\begin{question}\label{que:1}
    Can we design a "useful" theoretical model for ad scheduling that captures the users’ psychological behavior?
\end{question}

\vspace{-0.3em}
A “useful” model should be simple, robust to changes, and consistent with trends observed in the real world. Ideally, the mathematical formulation should also admit a closed-form solution that is easy to interpret. The guiding principles captured by such a model can then serve as a foundation for the design and development of new algorithms.

To answer \cref{que:1} in the affirmative, we study a dynamic model of user interest under repeated ad exposures. Let $n+1$ ads be shown at time $\tee = (t_0, t_1, \dots, t_n)$, where each $t_i \in [0, T]$ and $t_i \leq t_{i+1}$ for all $i \in \{0, 1, \dots, n\}$. Here, $t_i$ represents the time at which the $i^{\text{th}}$ ad is shown, and without loss of generality, we set $t_0 = 0$ and $t_n=T$. We refer to $\tee$ interchangeably as a \emph{strategy} or a \emph{schedule}. Given a time horizon $T$ and a strategy $\tee$, the reward obtained from showing the $i^{\text{th}}$ ad at time $t_i$ is denoted by $R(\tee, i)$.
\footnote{Throughout, we use the term "reward" to describe the advertiser’s optimization objective. Depending on the context, this could represent the probability of purchase, user satisfaction with the product, or overall engagement with the ads.} The total reward associated with strategy $\tee$ is then defined as $R(\tee) = \sum_{i=0}^{n} R(\tee, i)$. Under the model proposed in \cref{que:1}, we ask the following question.

\begin{question}\label{que:2}
    Can we design an optimal schedule that maximizes the total reward?
\end{question}

\vspace{-0.3em}
In response to \cref{que:2}, we give a near-optimal algorithm to compute an ad schedule efficiently under our model. To capture user psychology in our model, we study $R(\tee, i)$ as a combination of two functions. The first function captures the (positive) mere exposure and the hedonic adaptation effects and depends only on the number of ads shown previously. We denote this function by $B: \mathbb{Z}_{\ge 0} \rightarrow \mathbb{R}_+$, a concave function with respect to the number of ads shown till now, and represents the reward when the $i^{th}$ ad is shown. The concavity of the function $B$ is justified by the `diminishing returns' property implied by mere exposure and hedonic adaptation, and is common in recent literature that considers the dynamic effect of actions \cite{patil2022mitigating, blumnearly}.

The second function is a temporal exponential decay function with parameter $\delta\in [0, 1]$ which captures the (negative) operant conditioning effect so that reward at time $t$ decreases by $\delta^{t-t'}$ for any ad shown previously at time $t'$.
This temporal decay function is motivated by the classical theory of \cite{ebbinghaus1913contribution} on forgetting curves, which hypothesizes the exponential decline of memory retention over time.
This model is similar to the influential \emph{goodwill stock} model of \cite{nerlove1962optimal},
and \emph{exponential discounting} models used in control theory \cite{leqi2021rebounding}.

The parameter $\delta$ plays a key role in capturing different types of user psychological behavior, such as anger, irritation, intrusiveness, or interest. A low value of $\delta$ indicates that past ad impressions have little effect on the user and therefore additional ads do not strongly reduce engagement. In contrast, a high value of $\delta$ implies that ads have more long-term effects on the user~\cite{curmei2022towards}. This makes a user highly sensitive when an ad is shown, which can quickly lead to irritation, loss of interest, or a perception of intrusiveness. Thus, the parameter $\delta$ provides a compact way to encode the psychological effects observed in experimental studies such as \cite{freeman2022does, ritter2009effects, goldstein2011effects}.

Given these functions, the goal is to find the number of 
ads $n+1$ and the strategy $\tee = (t_0, \dots, t_n) \in [0,T]$,
such that the total reward $R(\tee)$ is maximized. Given this model of user reward, we first show that the problem has a special structure:
if the number of ads $n+1$ is fixed, then the optimal display timing only depends on the function capturing operant conditioning.

Our main algorithmic result is a quasi-linear time algorithm that, given a fixed number of ads $n+1$, outputs a \emph{near-optimal} ad-schedule. Let $t_i^*$ denote the optimal time to place the $i$-th ad. We show that each $t_i^*$ can be approximated by $t_i$ (produced by our algorithm) with exponentially small error, namely, $t_i^*-\frac{1}{2^n}\le t_i \le t_i^*+ \frac{1}{2^n}$. Our result is based on several key insights: (1) the operant conditioning function is strictly convex, leading to a single global optima, (2) the global optima has a special structure such that each $t_i$ can be recursively found using $t_1, \dots, t_{i-1}$, (3) $t_1$ can be found using binary search with \emph{bounded error} and, (4) error
propagation in the recursive computation can be controlled. Next, to determine the optimal number of ads to display, which depends on the mere exposure and hedonic adaptation functions, we perform a simple linear search, using the result discussed above. We further support our findings by studying how various strategies perform in our theoretical model, and further demonstrate via plots that our ad placement strategy outperforms various baselines for $\delta \in [0,1]$.

Recall that experimental studies \cite{freeman2022does, ritter2009effects, goldstein2011effects} have suggested that placing ads at the beginning is generally more effective than placing them in the middle. While these findings provide useful evidence regarding effective ad placement, they are limited in scope and cannot establish that such a strategy is universally optimal across all scenarios. This motivates the need for a theoretical framework that can explain and generalize these observations. In particular, our model offers deeper insights into ad placement strategies by explicitly accounting for all values of $\delta \in [0,1]$.  To systematically understand how $\delta$ influences the optimal policy and how ad placement strategies may vary across different values of $\delta$, we analyze the optimal ad policy in our model for all values of $\delta \in [0,1]$. Our analysis demonstrates that the structure of the optimal schedule changes as $\delta$ changes. Building on this, our near-optimal ad scheduling algorithm provides not only practical scheduling strategies but also valuable theoretical insights into how ad placement should adapt across the full spectrum of $\delta$. To be precise, we observe that for small $\delta$, the ads in the optimal strategy $\tee$ are placed almost evenly in $[T]$. As $\delta$ increases,  more and more ads start to concentrate at $0$ and $T$ such that the first $t_i > 0$ moves towards 0, and the last $t_j < T$ moves towards $T$. The remaining ads are evenly spaced between $t_i$ and $t_j$. All these insights show that simple heuristics such as uniform spacing or placing more ads at the beginning (as observed by the prior experimental studies) are sub-optimal under many values of $\delta$, highlighting the need to adapt schedules based on user behavior (i.e., $\delta$).

\subsection{Related Work}\label{sec:relatedwork}

\noindent
\textbf{Behavioral psychology.}
While the work on behavioral psychology is vast, 
in our work we follow \cite{curmei2022towards} and focus on three well-studied phenomena from psychology: mere exposure, operant conditioning, and hedonic adaptation. 
Several works in behavioral psychology has empirically shown
the effect of mere exposure and hedonic adaptation 
under various settings \cite{cox2002beyond, fang2007examination, chugani2015happily, yang2015sentimental}.
The most relevant to our work is the study of \cite{hekkert2013mere} who found that 
attractiveness of a product increased with the number of times it was shown to a user. Moreover, in a similar context \cite{nelson2008interrupted} also found evidence of hedonic adaptation as a function of the number of exposures.  
Similarly, operant conditioning has also been well-studied (See \cite{cooper2007applied} and references therein). 
While operant conditioning might have additional connotations in the psychology literature, we mainly use it to model the `annoyance' or `satiation' effect of repeated exposures \cite{sahni2015effect}. 
There has been some recent effort on psychology-aware recommendation systems \cite{curmei2022towards, jesse2021digital}, 
however, incorporating psychological effects in advertising has received less attention. 

\noindent
\textbf{Value of $\delta$ in real world.} There exists some work related to determining the exact curve to analyze user retention of content. In the work of Curmei et. al.~\cite{curmei2022towards}, the authors note that in real-world scenarios, the value of $\delta$ is 0.98. Other efforts as seen in  ~\cite{murre2015replication}, ~\cite{goldstein2011effects} gives the value of delta between 0.7 and 0.99. It is also to be noted that it is in this regime where there is the greatest difference between various strategies according to our experiments. We finally note that in order to practically use our model, the experiments in ~\cite{murre2015replication} and ~\cite{goldstein2011effects} can be adapted.

\noindent
\textbf{Ad scheduling.}
The problem of optimizing ad schedules has been studied across various communities such as marketing, operations research and machine learning.
One of the earliest work is due to \cite{nerlove1962optimal},
who proposed the goodwill‐stock model that treats advertising as an investment that builds a “stock” of consumer goodwill (or awareness) which then depreciates over time. 
Specifically, they considered a dynamical model of change in consumer goodwill given an ad impression, and cast the problem of optimizing the ad schedule as an \emph{optimal control} problem. However, they do not consider memory/satiation effects, and the optimal policy under their model is to show most of the ads at the start of the time-horizon.
\cite{naik1998planning} extended this work to consider memory effects, however, the optimal policy under their dynamical system is not interpretable. 
\cite{sahni2015effect} conducted field experiments to demonstrate that \emph{temporal spacing} of ads has a large effect on the memory retention and satiation of the users.
More recently, the problem of optimizing ad schedules has been 
casted as a reinforcement learning problem \cite{rafieian2023optimizing, TheocharousTG15}.
While these algorithms tend to be general-purpose, their solutions are hard to interpret and can be difficult to execute in practice given the complexity of ad exchanges \cite{despotakis2021first}.


\noindent
\textbf{Dynamic rewards and restless bandits.}
Leqi et al. \cite{leqi2021rebounding} introduced a bandit framework that models user satiation using linear dynamical systems and showed that the greedy strategy is optimal when all arms have the same base reward and decay profile. More broadly, several works have explored reward structures where the current payoff depends on historical actions and decays over time \cite{heidari2016tight, levine2017rotting, seznec2019rotting}. In contrast, Kleinberg and Immorlica \cite{kleinberg2018recharging} study a setting where rewards increase with time since the last pull. Related ideas appear in models where rewards evolve based on the number of pulls or the time elapsed since the last interaction \cite{cella2020stochastic, basu2019blocking, warlop2018fighting, mintz2020nonstationary}.

\noindent
\textbf{Digital advertising.}
In \cite{schwartz2017customer}, the authors investigated customer acquisition through advertisements on online platforms using multi-armed bandits, and designed a policy that achieves an 8$\%$ improvement in acquisition rate.  \cite{adany2013allocation} addressed the problem of allocating personalized ads to users, considering each user's profile and estimated viewing capacity.  \cite{seshadri2015scheduling} explored advertisement scheduling to meet advertiser's campaign goals while maximizing ad-sales revenue.  \cite{dobrita2025federated} proposed a framework that leverages $k$-nearest neighbors to predict ad positions, enhancing ad scheduling optimization. Aiming to maximize viewership under budget constraints,  \cite{czerniachowska2019scheduling} presented a scheduling solution aligned with advertisers' budgets.  \cite{sumita2017online} developed a $(1 - 1/\epsilon)$-competitive algorithm for envy-free allocation of video ads where $\epsilon > 0$ is a constant.

\subsection{Organization}

In \cref{sec:problem}, we formalize the problem, and in \cref{sec:algo}, we present a near-optimal algorithm for scheduling ads. Next, in \cref{sec:merge} and \cref{sec:assume}, we analyze this algorithm, followed by a discussion of the broader implications of our work in \cref{sec:implication}. We support our theoretical findings with experiments in \cref{sec:experiment}, and conclude by outlining the key takeaways, limitations, and possible extensions in \cref{sec:conclusion}.

\section{Problem Description}\label{sec:problem}
Let $T$ denote the total time horizon, which is known to us in advance. We consider the setting where we have to display $n+1$ homogeneous ads in the continuous time interval $[0, T]$. We assume for now that displaying these ads is instantaneous, though in ~\cref{sec:implication}, we will show how to extend this to the case when each ad needs the same amount of time. 
As discussed before, the function $R(\tee, i)$ is composed of two parts. The first part $B(i)$ is the reward when the $i^{th}$ ad is shown (note that $i$ could also be $0$ and corresponds to the ad shown at $t_0$), and is simply a function of the number of times we have shown the ad previously, and not of the times at which we show ads. 

The second function depends on both the number of times we show ads and the times $t_i$ at which ads were shown. Let $\tee = (t_0,t_1,\dots,t_{n-1},t_n)$ denote the time and order in which ads were displayed. 

The second function is $\gamma.\sum_{i=0}^{l-1} \delta^{t_l-t_i}$, where $\delta \in [0,1]$ and $\gamma \in \mathbb{R}_{+}$ is a constant that parameterizes the strength of this effect. 
The term $\sum_{i=0}^{l-1} \delta^{t_l-t_i}$, $\delta \in [0, 1]$ captures the temporal decay in the loss when the $l^{th}$ ad is shown. Hence, the reward for the $i^{\text{th}}$ ad under strategy $\tee$ is given by
$R(\tee, i) = B(i) - \gamma \sum_{j = 0}^{i - 1} \delta^{t_i - t_j}.$

Let $\tilde{n}\in \mathbb{N}$ be an upper bound on the number of ads that can be shown in $[0, T]$. 
Our objective is to find the number of ads $n+1 \le \tilde{n}$, and the strategy $\tee = (t_0, \dots, t_n)$ according to which ads should be shown so that the reward $R(\tee)$ is maximized.

The reward $R(\tee)$ can be written as:
\begin{align*}
    R(\tee) &= \sum_{i=0}^n R(\tee,i)= \sum_{i=0}^n \left(B(i) - \gamma\sum_{j=0}^{i-1}\delta^{t_i - t_j}\right)= \left(\sum_{i=0}^nB(i)\right)- \left(\gamma\sum_{j<i}(\delta^{t_i-t_j}) \right).
\end{align*}
Observe that the first term $\sum_iB(i)$ is independent of the times at which we show the ads, and only depends on the number of ads themselves. The coefficient of the second term, i.e., $\gamma$, only plays the role of a scaling factor. Hence, maximizing $R(\tee)$ for a given value of $n$ is equivalent to minimizing the {\em loss} $L(\tee)$, where $L(\tee)$ is defined as  $L(\tee) = \sum_{j<i}(\delta^{t_i-t_j})$.

Once we can find the optimal value of $R(\tee)$ for strategies that involve showing $n+1$ ads, we can iterate over $\tilde{n}$ possible values to find the optimal number of ads to be shown. Next, we provide an overview of the algorithm to find the optimal number of ads to be shown to maximize the reward, and the optimal strategy $\tee$ to show these ads.
\section{Near-Optimal Algorithm for Ad Scheduling}\label{sec:algo}

In this section, we provide an overview of our near-optimal algorithm for ad scheduling. We first consider the case when we know $n+1$, the number of ads to show. Given the time horizon $T$ and $\delta$, we want to compute the schedule $\tee = (t_0,t_1,\dots,t_n)$, such that for all $0\le i \le n-1$, $t_i \leq t_{i+1}$. 

The first step in our algorithm is to compute the number of ads to show at time $0$ and $T$. Note that multiple ads 
can be shown at the same time, given the instantaneous nature of the ads (we relax this assumption in  \Cref{sec:implication}).
Let $t_a>0$ be the first non-zero time at which an ad is shown, i.e., $a-1$ ads are shown at time $0$. We define the quantity $T_i= \delta^{t_i}$, and throughout the paper we derive our results in terms of $T_i$ for $i \in \{0, \dots, n\}$. 
Given the parameters $n$, $T$, $\delta$, our first algorithm (\Cref{alg:aT_a getting}) outputs the value of $a$ and $T_a=\delta^{t_a}$. Later in our analysis, we will show that all the subsequent ad timings can be found once we know $a$ and $T_a$. We now describe the algorithm to obtain $a$ and $T_a$.

\begin{algorithm}[H]
    \caption{Algorithm to obtain $a$ and $T_a$}
    \label{alg:aT_a getting}
    \begin{algorithmic}[1]
    \STATE \textbf{Input:} $n$, $\delta$, $T$.
   
    \STATE Find the smallest value of $a\in [n/2]$ s.t.\ $\frac{a^{n-2a}}{(a+1)^{n-2a}} > \delta^T$ and $\frac{1}{a^2}\cdot\frac{(a\delta^T)^{n+2-2a}}{(a\delta^T+1)^{n-2a}}< \delta^T$ holds. 
    \STATE For the above value of $a$, define the function $h(T_a) = \frac{1}{a^2}\cdot \frac{(aT_a)^{n-2a+2}}{(1+aT_a)^{n-2a}}$
    \STATE Compute the solution for the equation $h(T_a) = \delta^T$ via binary search. For binary search, initialize the search space to be $[\delta^{T}, 1]$. (see ~\cref{sec:approx} for more details)
    \RETURN $(a, T_a)$
    \end{algorithmic}
\end{algorithm}

\vspace{-1mm}
 As ~\cref{alg:aT_a getting} returns $a$, we now know the first non-zero time to show an ad.  Once $a$ and $T_a$ are known,  \cref{alg: optimal schedule} shows how to find the near-optimal schedule. Note that for the corner case when the value of $a$ does not exist, and for a detailed algorithm, refer to  \cref{sec:assume}.

\begin{algorithm}[H]
    \caption{Algorithm to obtain near-optimal schedule}
    \label{alg: optimal schedule}
    \begin{algorithmic}[1]
        \STATE \textbf{Input:} $n,\delta,T,a,T_a$
        \STATE Set $t_i =  0$, $\forall i\in \{0, \dots, a-1\}$ and $t_j = T$, $\forall j\in \{n-2a+1, \dots, n\}$
        \STATE Set $t_a = \ln (T_a)/ \ln (\delta)$ and $t_{n-a} = T-t_a$.
        \STATE Distribute the remaining $t_i, \forall i \in \{a+1,\dots,n-a-1\}$ between $t_a$ and $t_{n-a}$ such that they are equally spaced.
        \RETURN $\tee = <t_i>$
    \end{algorithmic}
\end{algorithm}

\cref{alg: optimal schedule} provides a near-optimal ad schedule when the number of ads $n$ is known in advance. If $n$ is unknown, we use  ~\cref{alg: optimal number of ads and schedule} to determine it. Below, we show how to compute the optimal number of ads $n$, given an upper bound $\tilde{n}$ on the total ads that can be shown within the time horizon $T$.

\begin{algorithm}[H]
    \caption{Optimal number of ads}
    \label{alg: optimal number of ads and schedule}
    \begin{algorithmic}[1]
        \STATE \textbf{Input:} $\delta,T,\tilde{n}$
        \STATE For each $n \in [\tilde{n}] $, compute the near-optimal strategy $\tilde{t_n}$ using ~\cref{alg: optimal schedule}
        \STATE For each near-optimal strategy, compute $R(\tilde{t_n})$
        \RETURN  the value of $n$ that maximizes $R(\tilde{t_n})$ 
    \end{algorithmic}
\end{algorithm}

\cref{alg: optimal number of ads and schedule} returns the optimal number of ads to show and the near-optimal strategy to display them.

For clarity, we focus our analysis on the case when $a = 1$, i.e., the case where only one ad is shown at time $0$ and time $T$, respectively. The general case, $a \neq 1$ is discussed in ~\cref{sec:assume} and follows a similar approach. When $a = 1$, our goal is to show that approximating $T_1$ helps us to approximate $t_1, \dots, t_{n}$, resulting in a nearly optimal schedule.  In the next section, we outline our approach for the $a = 1$ case, with full proofs provided in ~\cref{sec:merge}.

\section{Analysis of Our Algorithm for $a=1$}\label{sec:merge}
In this section, we present our analysis of the near-optimal strategy for the case $a = 1$. The analysis is divided into three parts: ~\cref{sec:uniqueminima} proves that $L$ has a unique minimum. ~\cref{sec:decoupling} derives closed-form expressions showing that each $T_i$ can be written in terms of $T_1$. ~\cref{sec:approx} first approximates $T_1$ and then $t_1$ to an additive error of $1/2^n$. Once $T_1$ and $t_1$ are approximated, the remaining $T_i$ and $t_i$ values can be computed easily. To this end, we show that $L$ has a unique minima.

\subsection{$L$ has a Unique Minima}\label{sec:uniqueminima}
In this section, we want to argue about the number of minima for the loss function $L$. Specifically, we want to show that $L$ is a strictly convex function, and hence, it has at most one minima. This would imply that local minima is also the global minimum \cite{boyd2004convex}. This characterization helps us to find the minima for $L$.

For a strategy ${\tee}$, to form a feasible solution, it must satisfy $t_i \leq t_{i+1}$ and $0 \leq t_i \leq T$. Then, the set of feasible solutions, which we denote by $\mathcal{D}$, is defined as follows:
$$t_i \in \mathbb{R}, 0 \leq t_i \leq T, t_i \leq t_{i+1}, t_0 = 0, t_n= T.$$

We first show why $t_0=0$ and $t_n=T$ are required to minimize $L$.

\begin{lemma}\label{lem:t0=0tn=n}
    In any optimal solution $\tee$ that minimizes the loss function $L$, we have $t_0=0$ and $t_n=T$.
\end{lemma}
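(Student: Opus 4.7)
The plan is to prove the lemma by partial minimization: fix all coordinates other than $t_0$ (respectively $t_n$) and show that the loss is a strictly monotonic function of the free variable on its feasible interval, forcing the optimum to lie at the boundary.

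First, I isolate the dependence of $L(\tee)$ on $t_0$. Among the pairs $(j,i)$ with $j<i$ in the sum $L(\tee)=\sum_{j<i}\delta^{t_i-t_j}$, only those with $j=0$ involve $t_0$, so I can write
\[
L(\tee) \;=\; \sum_{i=1}^{n}\delta^{t_i-t_0} \;+\; \underbrace{\sum_{1\le j<i\le n}\delta^{t_i-t_j}}_{\text{independent of }t_0}.
\]
For $\delta\in(0,1)$, each function $t_0\mapsto \delta^{t_i-t_0}$ is strictly increasing on the feasible interval $[0,t_1]$ (since $\delta<1$ and the exponent $t_i-t_0$ decreases as $t_0$ grows). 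Therefore, with $t_1,\dots,t_n$ held fixed, $L$ is a strictly increasing function of $t_0$, so its unique minimizer over $[0,t_1]$ is $t_0=0$.

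A symmetric argument handles $t_n$. The only terms in $L$ involving $t_n$ are $\sum_{j=0}^{n-1}\delta^{t_n-t_j}$; each is strictly decreasing in $t_n$ on $[t_{n-1},T]$ (the exponent grows as $t_n$ grows, and $\delta<1$). Hence $L$ is strictly decreasing in $t_n$ with the other coordinates fixed, and its minimizer is $t_n=T$. Combining the two conclusions, any minimizer of $L$ over the feasible region must satisfy $t_0=0$ and $t_n=T$, which is what we wanted.

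The only minor points to address are the degenerate values of $\delta$. If $\delta=1$, then $L(\tee)=\binom{n+1}{2}$ is a constant, so every schedule is optimal and we may pick one with $t_0=0$ and $t_n=T$. If $\delta=0$, then $\delta^{t_i-t_j}$ vanishes whenever $t_i>t_j$ (using the convention $0^x=0$ for $x>0$), and again the lemma holds with equality at any schedule in which the endpoints are placed at $0$ and $T$. There is no real obstacle to the proof; the main thing to be careful about is phrasing the monotonicity as \emph{strict} so that the conclusion about every optimal solution (not just the existence of one) is warranted in the non-degenerate regime $\delta\in(0,1)$, which is the case of interest for the rest of the paper.
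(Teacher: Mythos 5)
Your proof is correct and follows essentially the same route as the paper: the paper's exchange argument (replace $t_0$ by $0$ and observe that each term $\delta^{t_i-t_0}$ strictly decreases when $\delta\in(0,1)$) is exactly your monotonicity-in-$t_0$ observation, packaged as a contradiction, and the $t_n$ case is handled symmetrically in both. Your explicit treatment of the degenerate values $\delta\in\{0,1\}$ (where the paper's strict inequality, and the lemma as literally stated, break down) is a small but welcome addition beyond what the paper writes.
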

\begin{proof}
      We first argue that $t_0=0$, in any optimal solution $\tee$. The argument for $t_n = T$ is quite similar.
    
    For the sake of contradiction, assume that we have an optimal solution $\tee$ where $t_0\neq 0$. Consider a solution $t'$ where $t'_i = t_i \text{ for all } i > 0$ and $t'_0 = 0$. 

   We have
    \[
    L(\tee) = \sum_{j>0} \delta^{t_j-t_0} + \sum_{j> i > 0}\delta^{t_j -t_i}
    \]
    and 
    \begin{align*}
    L(t') &= \sum_{j>0} \delta^{t'_j-t'_0} + \sum_{j> i > 0}\delta^{t'_j -t'_i}\\
    &= \sum_j \delta^{t_j} + \sum_{j> i > 0}\delta^{t_j -t_i}    \\
    &< \frac{1}{\delta^{t_0}}\sum_j \delta^{t_j}+ \sum_{j> i > 0}\delta^{t_j -t_i}    \\
    &= \sum_j \delta^{t_j-t_0} + \sum_{j> i > 0}\delta^{t_j -t_i}\\
    &= L(\tee).
    \end{align*}

    This implies that $L(t') < L(t)$, which is a contradiction.
\end{proof}

For $L=\sum_{i>j}\delta^{t_i-t_j}$, let $a_{ij} = t_i - t_j$, where $i>j$, and denote $L'=\sum_{i>j} \delta^{a_{ij}}$. The feasible solution $\mathcal{D'}$ for $L'$ is given by:
$$a_{ij} \in \mathbb{R} \hspace{.2cm} \forall i > j, a_{ij}\ge 0, a_{n0} = T, a_{ij}+a_{jk} =  a_{ik}$$

Using \cref{lem:t0=0tn=n}, we only consider those $\tee$ where $t_0 = 0$ and $t_n = T$. Since $t_0$ and $t_n$ are fixed, all references to $\nabla L$ hereafter are with respect to $t_1, t_2, \dots, t_{n-1}$.

To show that $L$ has a unique global minima, we divide our analysis into four parts:
\begin{itemize}
    \item First, we establish that there exists exactly one global minima in $\mathcal{D'}$ for the function $L'$, by showing that $L'$ is strictly convex and the space $\mathcal{D'}$ is compact.
    \item Next, we show a bijection between $\mathcal{D}$ and $\mathcal{D'}$. 
    \item We then show that, according to the above bijection, $L'$ indeed models $L$.
    \item Finally, we show that the minima of $L'$ corresponds exactly to the minima of $L$, and vice versa. 
\end{itemize}

We now prove each of these parts in detail. To start we first show the following lemma.

\begin{lemma}
\label{lem:unique minima for L'}
    $L'$ admits a unique minima in $\mathcal{D}'$.
\end{lemma}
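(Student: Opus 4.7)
The plan is to prove the lemma by showing that $L'$ is a strictly convex continuous function on a non-empty compact convex set $\mathcal{D}'$, and then invoking the standard fact that such a function attains its minimum at exactly one point. This reduces the task to two largely independent checks: the geometric structure of $\mathcal{D}'$ and the analytic structure of $L'$.

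First, I would verify that $\mathcal{D}'$ is non-empty, compact, and convex. The equi-spaced point $a_{ij} = (i-j)T/n$ demonstrates non-emptiness. Convexity is immediate: $\mathcal{D}'$ is cut out by the linear equalities $a_{ij} + a_{jk} = a_{ik}$ and $a_{n0} = T$, together with the linear inequalities $a_{ij} \ge 0$. Closedness follows from the same constraints being closed, and boundedness follows from the additivity relation $a_{n0} = a_{ni} + a_{ij} + a_{j0}$ combined with non-negativity, which forces $a_{ij} \in [0, T]$ for every pair $i > j$.

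Second, I would establish that $L'$ is strictly convex. Writing $\delta^{a_{ij}} = \exp(a_{ij} \ln \delta)$ and noting that $\ln \delta < 0$ for $\delta \in (0,1)$, each summand is a composition of a strictly convex exponential with a linear function of the coordinate $a_{ij}$, hence strictly convex in that coordinate. Since these summands involve pairwise distinct coordinates, a direct check from the definition yields that $L'$ is strictly convex as a function on $\mathbb{R}^{\binom{n+1}{2}}$, and this property is inherited when restricting to the convex subset $\mathcal{D}'$. The boundary cases $\delta \in \{0,1\}$ are degenerate (either $L'$ collapses or becomes constant) and of no interest to the ad scheduling model used throughout the paper.

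Putting these pieces together, continuity of $L'$ together with compactness of $\mathcal{D}'$ yields existence of a minimizer, while strict convexity of $L'$ on the convex set $\mathcal{D}'$ yields uniqueness. I expect the only step that requires care is the strict convexity argument itself, because each summand is strictly convex only in its own variable $a_{ij}$; one must verify that summing such functions over pairwise disjoint coordinates does produce a jointly strictly convex function. This is a short algebraic fact (if $f(x,y) = g(x) + h(y)$ with $g, h$ both strictly convex, then $f$ is strictly convex in $(x,y)$, since equality in the convex combination would force equality coordinate-wise), but it is the only nontrivial ingredient in what is otherwise a routine application of the compactness–strict-convexity toolkit.
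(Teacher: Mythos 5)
Your proposal is correct and follows essentially the same route as the paper: establish that $\mathcal{D}'$ is a (compact, convex) polytope, that $L'$ is strictly convex, and conclude existence plus uniqueness of the minimizer from the standard compactness--strict-convexity argument. The only difference is cosmetic: you spell out the joint strict convexity of a sum of functions each strictly convex in its own coordinate, a point the paper passes over with the blanket statement that a sum of strictly convex functions is strictly convex.
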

\begin{proof}
Observe that $\delta^{a_{ij}}$ is strictly convex for constant $\delta \in (0,1)$, as the double derivative of $\delta^{a_{ij}}$ is positive. Since the sum of strictly convex functions is also strictly convex, $L'$ is a strictly convex function. 

Observe that $a_{ij} +a_{ni} + a_{j0} = a_{n0} =T$. Hence, $a_{ij}\leq T$ for all $i, j$. This means that the feasible region of $a_{ij}$ is bounded, and since the region is an intersection of linear inequalities, the feasible region of $a_{ij}$ is a polytope.

Also, note that $\mathcal{D}'$ describes a polytope and hence $\mathcal{D}'$ is a convex region. As $L'$ is strictly convex and $\mathcal{D}'$ is a convex region, $L'$ has at most one minima in $\mathcal{D}'$~\cite{boyd2004convex}. Again, we know that polyhedra are compact sets. Therefore, the function $L'$ has a unique minimum in $\mathcal{D}'$~\cite{boyd2004convex}.    
\end{proof}

We now present a bijection $\mathcal{B}$ between the space $\mathcal{D}$ and $\mathcal{D}'$ and show that $\mathcal{B}$ preserves minima. Define the mapping $\mathcal{B}:\mathcal{D}\rightarrow\mathcal{D}'$ as follows:

\begin{itemize}
    \item Given ${t_i}$, we set $a_{i0}= t_i$.
    \item Define $a_{ij} = a_{i0}-a_{j0}$. 
\end{itemize}

We now show that $\mathcal{B}$ is a bijection:

\[\mathcal{B}({\tee})_{ij} = t_i -t_j ,\forall i > j.\]

\begin{lemma}
\label{lem: B is bijection}
    The mapping $\mathcal{B}$ is a bijection from $\mathcal{D}$ to $\mathcal{D}'$.
\end{lemma}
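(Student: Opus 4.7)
The plan is to verify the three conditions needed to conclude bijection: that $\mathcal{B}$ is well-defined (sends $\mathcal{D}$ into $\mathcal{D}'$), injective, and surjective. The conceptual driver behind all three parts is that the cocycle relation $a_{ij} + a_{jk} = a_{ik}$ built into $\mathcal{D}'$ forces every coordinate $a_{ij}$ to be determined by the ``anchor'' values $a_{i0}$, so $\mathcal{D}'$ is really just a re-parameterization of $\mathcal{D}$ via the substitution $t_i \leftrightarrow a_{i0}$.

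For well-definedness, given $\tee \in \mathcal{D}$ I would set $a_{ij} = t_i - t_j$ for $i > j$ and check each defining constraint of $\mathcal{D}'$ in turn: non-negativity follows from the monotonicity $t_j \le t_i$ for $i > j$; the boundary condition $a_{n0} = t_n - t_0 = T - 0 = T$ uses the fact that $\mathcal{D}$ fixes $t_0 = 0$ and $t_n = T$; and the cocycle identity $a_{ij} + a_{jk} = (t_i - t_j) + (t_j - t_k) = t_i - t_k = a_{ik}$ is an immediate telescoping. Injectivity is then one line: if $\mathcal{B}(\tee) = \mathcal{B}(\tee')$ then in particular $t_i = t_i - t_0 = a_{i0} = a'_{i0} = t'_i - t'_0 = t'_i$ for every $i \ge 1$, and $t_0 = t'_0 = 0$ by definition of $\mathcal{D}$.

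For surjectivity, given $(a_{ij}) \in \mathcal{D}'$ I would construct a candidate preimage by defining $t_0 = 0$ and $t_i = a_{i0}$ for $i \ge 1$, and then verify $\tee \in \mathcal{D}$. The endpoint conditions $t_0 = 0$ and $t_n = a_{n0} = T$ are by definition; the bound $0 \le t_i \le T$ uses $a_{i0} \ge 0$ and $a_{i0} \le a_{i0} + a_{ni} = a_{n0} = T$ via the cocycle; and monotonicity $t_{i+1} - t_i = a_{i+1,0} - a_{i,0} = a_{i+1,i} \ge 0$ uses the cocycle a second time. It then remains to confirm $\mathcal{B}(\tee) = (a_{ij})$, which follows from $\mathcal{B}(\tee)_{ij} = t_i - t_j = a_{i0} - a_{j0} = a_{ij}$, where the last equality is the cocycle $a_{ij} + a_{j0} = a_{i0}$ applied with $k = 0$. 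No step here is genuinely hard; the only subtlety worth flagging is that without the cocycle identity one would have too many free parameters in $\mathcal{D}'$, so the verification that $a_{ij}$ is recovered on the nose relies on precisely this relation.
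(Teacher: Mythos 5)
Your proof is correct and follows essentially the same route as the paper's: map via $a_{i0} = t_i$, use the cocycle relation $a_{ij} + a_{jk} = a_{ik}$ for well-definedness, read off injectivity from the anchor coordinates, and invert by setting $t_i = a_{i0}$ for surjectivity. If anything, yours is slightly more thorough, since you also verify the boundary constraint $a_{n0} = T$ and explicitly check that the constructed preimage satisfies all the constraints of $\mathcal{D}$ (monotonicity, bounds, endpoints), steps the paper's proof leaves implicit.
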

\begin{proof}
    Let $\mathcal{B}({\tee})= {a}$, for an element ${\tee}\in \mathcal{D}$. By definition, $a_{ij} = t_i - t_j \geq 0$ since $i > j$. Also, note that $a_{ij} +a_{jk} =t_i - t_j +t_j -t_k = t_i - t_k = a_{ik}$. Hence, $a_{ij}$ satisfies the constraints associated with $\mathcal{D}'$. Hence, $\mathcal{B}$ is feasible in the domain $\mathcal{D}'$. 
    
     Observe that if $t_i\neq t'_i$, then $a_{i0} \neq a'_{i0}$, hence it is an injection.
     
     Given $a_{ij}$, let define $\tee$ where $t_i = a_{i0}$. We show that $\mathcal{B}({\tee}) = {a}$. Let $\mathcal{B}(\tee) = a'$, and we will show $a' = a$. Firstly, by definition of $a'$ and $t$, \[a'_{i0} = t_i = a_{i0}.\]

     We also have
     \[a'_{ij} = a'_{i0} - a'_{j0} = t_i-t_j = a_{i0} - a{j0} = a_{ij}.\]
     
     Therefore $a = a'$ and hence for each $a$, there exists a $\tee$ such that $\mathcal{B}(\tee) = a$. Therefore, $\mathcal{B}$ is a surjection. Since $\mathcal{B}$ is both an injection and a surjection, it is a bijection.
\end{proof}
\begin{corollary}
    As $\mathcal{B}$ is a bijection, it is also invertible. It can be verified that,
\[
(\mathcal{B}^{-1}(a))_i = a_{i0}.
\]
\end{corollary}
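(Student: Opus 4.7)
The plan is to verify the closed-form for the inverse map by direct substitution, leveraging the fact that \cref{lem: B is bijection} has already established $\mathcal{B}$ is a bijection. Since $\mathcal{B}$ has a well-defined inverse, to identify $\mathcal{B}^{-1}$ it suffices to exhibit any candidate map $\Phi : \mathcal{D}' \to \mathcal{D}$ satisfying $\mathcal{B} \circ \Phi = \mathrm{id}_{\mathcal{D}'}$; then left-invertibility of $\mathcal{B}$ forces $\Phi = \mathcal{B}^{-1}$.

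First I would define the candidate map $\Phi(a)_i := a_{i0}$ for $i \in \{0, \ldots, n\}$, with the convention $a_{00} = 0$. Next, I would verify that $\Phi(a) \in \mathcal{D}$, i.e., that the constructed tuple is a feasible schedule. Specifically, $t_0 = a_{00} = 0$ and $t_n = a_{n0} = T$ by the boundary constraint built into $\mathcal{D}'$. Non-negativity $t_i \ge 0$ follows from $a_{i0} \ge 0$, which is a defining inequality of $\mathcal{D}'$. For monotonicity $t_i \le t_{i+1}$, I would use the additivity relation $a_{(i+1)i} + a_{i0} = a_{(i+1)0}$ together with $a_{(i+1)i} \ge 0$ to obtain $t_{i+1} - t_i = a_{(i+1)i} \ge 0$. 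Finally $t_i \le T$ follows from the same additivity plus non-negativity, giving $a_{i0} \le a_{n0} = T$.

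After establishing feasibility, I would apply $\mathcal{B}$ to $\Phi(a)$ and compute $(\mathcal{B} \circ \Phi)(a)_{ij} = \Phi(a)_i - \Phi(a)_j = a_{i0} - a_{j0}$. Using the additivity constraint $a_{ij} + a_{j0} = a_{i0}$ from the definition of $\mathcal{D}'$, this simplifies to $a_{ij}$. Hence $\mathcal{B}(\Phi(a)) = a$ on all of $\mathcal{D}'$, so $\Phi = \mathcal{B}^{-1}$.

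There is essentially no significant obstacle here since the proof of \cref{lem: B is bijection} has already done the heavy lifting by constructing $\tee$ from $a$ via $t_i = a_{i0}$ to establish surjectivity; the corollary simply packages that construction as an explicit formula. The only mild subtlety to watch for is bookkeeping at the boundary indices $i = 0$ and $i = n$, ensuring the convention $a_{00} = 0$ is consistent and that the boundary constraint $a_{n0} = T$ in $\mathcal{D}'$ matches the requirement $t_n = T$ in $\mathcal{D}$.
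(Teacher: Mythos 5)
Your proposal is correct and follows essentially the same route the paper takes: the paper leaves the corollary as "can be verified," and the verification is exactly the construction already present in the surjectivity part of \cref{lem: B is bijection}, where $\tee$ is defined by $t_i = a_{i0}$ and one checks $\mathcal{B}(\tee) = a$ via the additivity constraint $a_{ij} = a_{i0} - a_{j0}$. Your added feasibility checks and the boundary bookkeeping ($a_{00}=0$, $a_{n0}=T$) are sensible but do not change the argument in any essential way.
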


To this end, we now show that $L'$ indeed models $L$ according to $\mathcal{B}$:

\begin{lemma}
\label{lem: L' models L}
    $L(t) = L'(\mathcal{B}(\tee)).$
\end{lemma}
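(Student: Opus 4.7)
The plan is to prove the identity $L(\tee) = L'(\mathcal{B}(\tee))$ by direct substitution, since both functions are defined as sums of the same form $\sum_{i>j}\delta^{(\cdot)}$ and the bijection $\mathcal{B}$ was defined precisely so that the exponent appearing in $L'$ matches the exponent appearing in $L$.

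The key step is to recall the two definitions side by side. By the definition of the loss function, $L(\tee) = \sum_{i>j}\delta^{t_i - t_j}$. By the definition of $\mathcal{B}$, for any $\tee \in \mathcal{D}$, the image $a = \mathcal{B}(\tee) \in \mathcal{D}'$ satisfies $a_{i0} = t_i$ and $a_{ij} = a_{i0} - a_{j0} = t_i - t_j$ for every pair $i > j$. Substituting this into $L'(a) = \sum_{i>j}\delta^{a_{ij}}$ gives
\[
L'(\mathcal{B}(\tee)) = \sum_{i>j}\delta^{a_{ij}} = \sum_{i>j}\delta^{t_i - t_j} = L(\tee),
\]
as required.

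There is no real obstacle here: the lemma is essentially a bookkeeping statement that $\mathcal{B}$ is compatible with the two parameterizations of the same objective. The only thing to be slightly careful about is ensuring that the sum in $L'$ ranges over exactly the same index set $\{(i,j) : i > j\}$ as the sum in $L$, which it does by construction of $\mathcal{D}'$ (entries $a_{ij}$ are indexed by pairs $i > j$). I would therefore present this as a short two-line calculation immediately following the reminder of definitions, and note that it together with \cref{lem: B is bijection} and \cref{lem:unique minima for L'} is what will let us transfer the uniqueness of the minimum of $L'$ on $\mathcal{D}'$ back to uniqueness of the minimum of $L$ on $\mathcal{D}$ in the next step of the argument.
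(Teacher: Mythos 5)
Your proposal is correct and uses the same direct-substitution argument as the paper: since $\mathcal{B}(\tee)_{ij} = t_i - t_j$, plugging into $L'(\mathcal{B}(\tee)) = \sum_{i>j}\delta^{a_{ij}}$ immediately recovers $L(\tee)$. This matches the paper's proof essentially line for line.
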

\begin{proof}
    \begin{align*}
        L(t) &= \sum_{i >j}\delta^{t_i-t_j}\\
        &=\sum_{i > j}\delta^{\mathcal{B}(t_i)-\mathcal{B}(t_j)}\\
        &= \sum_{i > j}\delta^{a_i0-a_j0}\\
        &=\sum_{i > j}\delta^{a_{ij}}\\
        &=L'(\mathcal{B}(\tee)).
    \end{align*}
\end{proof}

 Next, we demonstrate that $\mathcal{B}$ preserves the minima, i.e., $t^*$ is a minima of $L$ iff $\mathcal{B}(t^*)$ is a minima of $L'$.

 \begin{lemma}
 \label{lem: minima mapping}
     $t^*$ is a minima of $L$ iff $\mathcal{B}(t^*)$ is a minima of $L'$.
 \end{lemma}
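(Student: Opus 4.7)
The plan is to deduce this lemma directly from the two facts already established: (i) $\mathcal{B}: \mathcal{D} \to \mathcal{D}'$ is a bijection (\cref{lem: B is bijection}), and (ii) $L(\tee) = L'(\mathcal{B}(\tee))$ for every $\tee \in \mathcal{D}$ (\cref{lem: L' models L}). Once these are in hand, the statement is essentially a change-of-variables argument, and no new analytic machinery is needed.

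I would prove both directions by contrapositive. For the forward direction, suppose $t^*$ is a minimum of $L$ on $\mathcal{D}$, but $\mathcal{B}(t^*)$ is not a minimum of $L'$ on $\mathcal{D}'$. Then there exists some $a \in \mathcal{D}'$ with $L'(a) < L'(\mathcal{B}(t^*))$. By surjectivity of $\mathcal{B}$, pick $\tee \in \mathcal{D}$ such that $\mathcal{B}(\tee) = a$ (concretely, $t_i = a_{i0}$). Then by \cref{lem: L' models L},
\begin{equation*}
L(\tee) \;=\; L'(\mathcal{B}(\tee)) \;=\; L'(a) \;<\; L'(\mathcal{B}(t^*)) \;=\; L(t^*),
\end{equation*}
contradicting the minimality of $t^*$. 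The reverse direction is symmetric: assume $\mathcal{B}(t^*)$ minimizes $L'$, and suppose for contradiction that some $\tee' \in \mathcal{D}$ achieves $L(\tee') < L(t^*)$; applying \cref{lem: L' models L} to both sides contradicts the minimality of $\mathcal{B}(t^*)$ at the point $\mathcal{B}(\tee') \in \mathcal{D}'$.

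Combining this lemma with \cref{lem:unique minima for L'} immediately yields the desired conclusion that $L$ has a unique minimum on $\mathcal{D}$, since the unique minimizer of $L'$ pulls back through $\mathcal{B}^{-1}$ to a unique minimizer of $L$. I do not anticipate any real obstacle here; the only thing to be careful about is making sure we always exhibit preimages via the explicit inverse $(\mathcal{B}^{-1}(a))_i = a_{i0}$ so that the argument uses the surjectivity of $\mathcal{B}$ concretely rather than abstractly. Everything else is a one-line substitution.
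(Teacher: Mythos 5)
Your proof is correct, but it takes a genuinely different (and lighter) route than the paper. You prove the \emph{global}-minimum correspondence: using only that $\mathcal{B}$ is a bijection (\cref{lem: B is bijection}) and that $L = L' \circ \mathcal{B}$ (\cref{lem: L' models L}), any strictly better point on one side pulls back or pushes forward to a strictly better point on the other, so global minimizers correspond. The paper instead proves the correspondence for \emph{local} minima: it fixes an $\epsilon$-neighborhood witnessing local optimality and tracks how neighborhoods transform under $\mathcal{B}$ and $\mathcal{B}^{-1}$, including a Lipschitz-type bound $|\mathcal{B}(t)-\mathcal{B}(t^*)| \le 2n^2 \max_i |t_i - t_i^*|$ and the choice $\epsilon' = \epsilon/2n^2$ in the reverse direction. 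Your version buys simplicity and rigor — no norm or constant bookkeeping, which is the fiddliest part of the paper's argument — and it is enough for \cref{thm: unique-minima} and the downstream pipeline: the unique interior optimizer of $L$ necessarily satisfies $\nabla L = 0$, and uniqueness of the solution to the resulting equations then identifies it, so the local statement is never strictly needed. What the paper's local version buys is the direct transfer of local structure (local minima of $L$ map to local minima of the strictly convex $L'$, hence local $=$ global for $L$), which is the reading its proof implicitly adopts for the word ``minima''; if one wanted that stronger statement, your argument would need the extra (easy, since $\mathcal{B}$ is linear and injective with linear inverse on its image) observation that $\mathcal{B}$ and $\mathcal{B}^{-1}$ map small neighborhoods into small neighborhoods — essentially the step the paper carries out.
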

\begin{proof}
    Let us first prove the forward direction. If $t^*$ is a local minima of $L$, then by definition, there exists an $\epsilon$ such that for all $t$, $|t-t^*|< \epsilon$, so we have that $L(t) \geq L(t^*)$. Set $\epsilon' = \epsilon$. We want to show that in the neighborhood around $a^* = \mathcal{B}(t^*)$, for all $a$, $L'(a) \geq L'(a^*)$.

    For the sake of contradiction, assume that there exists $a\in \mathcal{D'}$ such that $|a-a^*| \leq \epsilon'$ and $L'(a) < L'(a^*)$. Consider $t = \mathcal{B}^{-1}(a)$. First, observe that $|t-t^*|\leq \epsilon$ since $a$ has $t$ embedded in it. Also, we have $L(t) = L'(a) < L'(a^*) = L(t^*)$, which is a contradiction to the fact that $t^*$ is a local minima.

    We now prove the other direction. Let $a^*$ be a local minimum for $L'$. Hence, by definition, there exists an $\epsilon$ such that for all $a, |a-a^*|< \epsilon$, we have $L(a) \geq L(a^*)$. Let us set $\epsilon' = \epsilon/2n^2$. 

    For the sake of contradiction assume there exists a $t\in \mathcal{D}$ such that $|t-t^*|\leq \epsilon'$ and $L(t) < L(t^*)$. Observe that $(|\mathcal{B}(t) - \mathcal{B}(t^*)|)\leq 2\cdot \text{max}_i(t_i-t^*_i)\cdot n^2 \leq 2n^2\epsilon'$ (using the fact that the magnitude of a vector is more than the value of each of its components). Also, $(L'(\mathcal{B}(t) - L'(\mathcal{B}(t^*))< 0$, which is a contradiction to the fact that $a^*$ was a local minima. 
\end{proof}

Putting everything together, we have the following theorem.

\begin{theorem}
\label{thm: unique-minima}
        The function $L$ admits a unique minima in $\mathcal{D}$.
    \end{theorem}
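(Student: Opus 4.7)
The plan is to compose the four preceding lemmas into a short two-direction argument. The essential point is that all heavy lifting has already been done: strict convexity plus compactness gave uniqueness on the $\mathcal{D}'$ side, the map $\mathcal{B}$ transports points back and forth between $\mathcal{D}$ and $\mathcal{D}'$, and $L' \circ \mathcal{B} = L$ ensures objective values line up.

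First, I would invoke \cref{lem:unique minima for L'} to obtain the unique minimizer $a^* \in \mathcal{D}'$ of $L'$. Next, define $\tee^* = \mathcal{B}^{-1}(a^*) \in \mathcal{D}$, which is well-defined by \cref{lem: B is bijection}. By \cref{lem: minima mapping} (applied in the reverse direction, since $\mathcal{B}(\tee^*) = a^*$ is a minimum of $L'$), $\tee^*$ is a local minimum of $L$. Because $L'$ is strictly convex on the convex set $\mathcal{D}'$ and $L = L' \circ \mathcal{B}$ takes the same set of values on $\mathcal{D}$ as $L'$ takes on $\mathcal{D}'$, the local minimum $\tee^*$ is in fact a global minimum of $L$ on $\mathcal{D}$; concretely, for any $\tee \in \mathcal{D}$, $L(\tee) = L'(\mathcal{B}(\tee)) \ge L'(a^*) = L(\tee^*)$ by \cref{lem: L' models L}.

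For uniqueness, I would argue by contradiction. Suppose $\tee_1^\star, \tee_2^\star \in \mathcal{D}$ are two distinct global minima of $L$. Since $\mathcal{B}$ is a bijection (\cref{lem: B is bijection}), $\mathcal{B}(\tee_1^\star) \ne \mathcal{B}(\tee_2^\star)$. By \cref{lem: L' models L}, both map to the same minimum value of $L'$, so both are global (hence local) minima of $L'$ in $\mathcal{D}'$, contradicting \cref{lem:unique minima for L'}. Hence $\tee^*$ is the unique minimum of $L$ in $\mathcal{D}$.

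There is no real obstacle here; the only mild subtlety to handle carefully is the passage from ``local minimum of $L$'' (which is what \cref{lem: minima mapping} delivers) to ``global minimum of $L$.'' I would address this directly by noting that the identity $L = L' \circ \mathcal{B}$ together with the bijectivity of $\mathcal{B}$ makes the global-minimum statement an immediate consequence: minimizing $L$ over $\mathcal{D}$ is the same optimization problem as minimizing $L'$ over $\mathcal{D}'$, and the latter admits a unique global minimum by strict convexity on a compact convex set.
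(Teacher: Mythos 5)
Your proposal is correct and follows essentially the same route as the paper: transfer the problem to $\mathcal{D}'$ via the bijection $\mathcal{B}$, and use the strict convexity/compactness argument of \cref{lem:unique minima for L'} to get a unique minimizer there, pulled back through \cref{lem: B is bijection} and \cref{lem: L' models L}. Your direct use of the value identity $L=L'\circ\mathcal{B}$ to handle both globality and uniqueness is in fact a slight streamlining of the paper's proof, which instead invokes the local-minima correspondence of \cref{lem: minima mapping}.
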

\begin{proof}
     First, we showed that the function $L'$ admits a unique minima in $\mathcal{D}'$ in ~\cref{lem:unique minima for L'}. We then proved that the number of minima of $L$ in $\mathcal{D}$ is the same as the number of minima of $L'$ in  $\mathcal{D}'$ in ~\cref {lem: minima mapping}. Hence, $L$ admits a unique minima in $\mathcal{D}$.
\end{proof}

 As we have shown that $L$ has a unique minima. To find the minima, we first argue that all $T_i$'s can be expressed in terms of $T_1$.

\subsection{Dependency of $T_i$'s on $T_1$}\label{sec:decoupling}
 In this section, and in ~\cref {sec:approx}, we focus on the case $a = 1$, and defer the general case ($a \neq 1$) to ~\cref{sec:assume}. When $a = 1$, the two conditions in Line 2 of ~\cref{alg:aT_a getting} simplify to $\frac{1}{2^{n-2}} > \delta^T$ and $\frac{(\delta^T)^n}{(\delta^T + 1)^{n-2}} < \delta^T$. The second condition always holds, while the first condition holds only when $T > n \log_{1/\delta}(2)$. Therefore, in this section and in ~\cref{sec:approx}, we assume $T > n \log_{1/\delta}(2)$ whenever needed. The case $T \le n \log_{1/\delta}(2)$, where $a \neq 1$, is handled in ~\cref{sec:assume}.
 
 Let $T_i=\delta^{t_i}$ for $i \in \{0,1,\dots,n\}$. In this section, we show that $T_2, T_3 \dots, T_n$ can be expressed in terms of $T_1$. As shown in  \cref{sec:uniqueminima}, we know that $L=\sum_{j>i} \delta^{t_j-t_i}$ has a unique minima and therefore our eventual goal is to find this minima. The relation of $T_2, T_3, \dots, T_n$ in terms of $T_1$ will help us to find the minima of $L$. To this end, we show how $T_i$ is can be expressed in terms of $T_1,\dots, T_{i-1}$ and $T_{i+1}$ to $T_n$.

 \begin{lemma}\label{lem:quardatic}
 For $1\le i\le n-1$, for any strategy $\tee = (0, t_1, \cdots, t_{n-1}, T)$ corresponding to which $\nabla L = 0$, the following relations hold:\[(\delta^{t_i})^2=\left(\frac{1}{\frac{1}{\delta^{t_0}} + \dots + \frac{1}{\delta^{t_{i-1}}}}\right).(\delta^{t_{i+1}} + \dots + \delta^{t_n})\]
\end{lemma}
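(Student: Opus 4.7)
The plan is to prove this by taking the gradient of $L$ and equating each partial derivative to zero. Since the claim assumes $\nabla L = 0$ and only involves interior indices $1 \le i \le n-1$ (where $t_0$ and $t_n$ are fixed by \cref{lem:t0=0tn=n}), the task reduces to a direct rearrangement of the first-order optimality condition at coordinate $t_i$.

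First, I would write $L(\tee) = \sum_{k < j} \delta^{t_j - t_k}$ and isolate the terms that depend on $t_i$. These split into two groups: the terms where $t_i$ appears as the upper index, namely $\delta^{t_i - t_k}$ for $k < i$, and the terms where $t_i$ appears as the lower index, namely $\delta^{t_j - t_i}$ for $j > i$. All other terms are independent of $t_i$. Differentiating with respect to $t_i$ gives
\[
\frac{\partial L}{\partial t_i} \;=\; \ln(\delta) \sum_{k < i} \delta^{t_i - t_k} \;-\; \ln(\delta) \sum_{j > i} \delta^{t_j - t_i}.
\]
Setting this to zero and dividing out $\ln(\delta) \ne 0$ (valid since $\delta \in (0,1)$) yields
\[
\sum_{k < i} \delta^{t_i - t_k} \;=\; \sum_{j > i} \delta^{t_j - t_i}.
\]

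Second, I would factor out the $t_i$ dependence on each side: the left-hand side equals $\delta^{t_i} \sum_{k < i} \delta^{-t_k}$, and the right-hand side equals $\delta^{-t_i} \sum_{j > i} \delta^{t_j}$. Multiplying through by $\delta^{t_i}$ and dividing by $\sum_{k<i} \delta^{-t_k}$ (which is strictly positive since each $\delta^{-t_k} > 0$) isolates $(\delta^{t_i})^2$ and gives exactly
\[
(\delta^{t_i})^2 \;=\; \left(\frac{1}{\tfrac{1}{\delta^{t_0}} + \cdots + \tfrac{1}{\delta^{t_{i-1}}}}\right) \cdot \bigl(\delta^{t_{i+1}} + \cdots + \delta^{t_n}\bigr).
\]

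There is no real obstacle here; the proof is a short computation. The only thing to be slightly careful about is correctly accounting for both roles of $t_i$ in the double sum (it appears $n$ times in total across the $(i,k)$ and $(j,i)$ index pairs), so as not to miss a sign or a term when differentiating. Once the derivative is written correctly, the algebraic rearrangement into the stated form is immediate.
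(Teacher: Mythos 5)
Your proposal is correct and follows essentially the same route as the paper: differentiate $L$ with respect to $t_i$, cancel the $\ln(\delta)$ factor, and rearrange the first-order condition $\delta^{t_i}\sum_{k<i}\delta^{-t_k} = \delta^{-t_i}\sum_{j>i}\delta^{t_j}$ into the stated identity. The paper's proof of \cref{lem:quardatic} does exactly this (writing the condition first for $t_1$ and then for general $i$), so there is nothing to add beyond your slightly more explicit handling of the differentiation.
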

\begin{proof}
     As we know $\nabla L = 0$, therefore the partial derivative with respect to $t_1$ can  be written as 

   $$\frac{\partial(L)}{\partial(t_1)} = 0$$
   This implies that 
   $$\frac{\powdelta{t_1}}{\powdelta{t_0}}-\frac{1}{\powdelta{t_1}}(\powdelta{t_2}+\powdelta{t_3}+\cdots +\powdelta{t_n})=0$$
   We can rewrite the above equation as 
   $$(\powdelta{t_1})^2=\left(\frac{1}{\frac{1}{\powdelta{t_0}}}\right)\cdot (\powdelta{t_2}+\powdelta{t_3}+\cdots +\powdelta{t_n})$$

Similarly, for any $i\in \{2, \cdots, n-1\}$, we have

   $$\frac{\partial(L)}{\partial(t_{i})} = 0$$
   $$\powdelta{t_{i}}\left(\frac{1}{\powdelta{t_0}}+\frac{1}{\powdelta{t_1}}+\cdots +\frac{1}{\powdelta{t_{i-1}}}\right)-\frac{1}{\powdelta{t_{i}}}(\powdelta{t_{i+1}} + \dots +\delta^{t_n})=0$$
   \begin{equation}\label{eq:T_idefined}
       (\powdelta{t_{i}})^2=\left(\frac{1}{\frac{1}{\powdelta{t_0}}+\frac{1}{\powdelta{t_1}}+\cdots +\frac{1}{\powdelta{t_{i-1}}}}\right)\cdot (\powdelta{t_{i+1}}+ \dots + \delta^{t_n}).
   \end{equation}

This proves the lemma.
\end{proof}

Let $H_0=\left(\frac{1}{\delta^{t_0}}\right), H_1=\left(\frac{1}{\delta^{t_0}} + \frac{1}{\delta^{t_1}}\right)$, similarly $H_i=\left(\frac{1}{\delta^{t_0}} + \frac{1}{\delta^{t_1}} + \dots + \frac{1}{\delta^{t_i}} \right)=\sum_{j=0}^i \frac{1}{\delta^{t_j}}$. In the following lemma, we show how $T_i$ can be expressed in terms of $T_{i-1}$, $H_{i-1}$, $H_{i-2}$, which will later help us to express $T_i$ in terms of $T_1$.

\begin{lemma}\label{lem:Ti}
   For $1\le i\le n-1$, for any strategy $\tee = (0, t_1, \cdots, t_{n-1}, T)$ corresponding to which $\nabla L = 0$, the following relations hold:
    \begin{equation}
    \label{Eq: Decoupling}
        T_i = \frac{-1+\sqrt{1+4H_{i-2}H_{i-1}(T_{i-1}^2)}}{2H_{i-1}}
    \end{equation}
\end{lemma}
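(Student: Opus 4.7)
The plan is to derive the formula by combining the two consecutive instances of the stationarity relation from Lemma~\ref{lem:quardatic} (at indices $i$ and $i-1$) so that the tail sum $\delta^{t_{i+1}} + \cdots + \delta^{t_n}$ is eliminated, leaving a quadratic equation in $T_i$ whose positive root gives the claimed expression.

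Concretely, I would first rewrite Lemma~\ref{lem:quardatic} in the $T_j$-notation as
\begin{equation*}
T_i^2 \, H_{i-1} \;=\; T_{i+1} + T_{i+2} + \cdots + T_n, \qquad
T_{i-1}^2 \, H_{i-2} \;=\; T_i + T_{i+1} + \cdots + T_n.
\end{equation*}
Subtracting the first equation from the second eliminates the tail sum $T_{i+1}+\cdots+T_n$ and produces
\begin{equation*}
T_{i-1}^2 \, H_{i-2} \;=\; T_i \,+\, T_i^2 \, H_{i-1},
\end{equation*}
i.e., a quadratic in $T_i$:
\begin{equation*}
H_{i-1}\, T_i^2 \;+\; T_i \;-\; H_{i-2}\, T_{i-1}^2 \;=\; 0.
\end{equation*}

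From here I would apply the standard quadratic formula with coefficients $(a,b,c) = (H_{i-1}, 1, -H_{i-2} T_{i-1}^2)$ to obtain
\begin{equation*}
T_i \;=\; \frac{-1 \pm \sqrt{1 + 4 H_{i-2} H_{i-1} T_{i-1}^2}}{2 H_{i-1}}.
\end{equation*}
Finally, I would argue that the positive root is the only admissible one: since $T_j = \delta^{t_j} \in (0,1]$ for every $j$, each $H_j > 0$ and so $T_i > 0$ forces the $+$ sign, yielding exactly the stated Equation~\ref{Eq: Decoupling}.

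The proof is essentially just careful algebra, so there is no substantial obstacle; the only subtlety worth flagging is the sign choice, which is fully handled by positivity of $T_i$ and of the $H_j$'s. One could note for completeness that the case $i=1$ requires the convention $H_{-1}$ to be interpreted as $0$ (or handled separately from the $\partial L/\partial t_1 = 0$ equation directly), but since the lemma is stated for $1 \le i \le n-1$ and uses $H_{i-1}, H_{i-2}$ with $i \ge 1$, the derivation above applies uniformly for $i \ge 2$, while $i=1$ can either be treated as a separate base case using $T_0 = 1$ or subsumed by adopting $H_{-1} = 0$.
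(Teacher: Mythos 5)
Your proposal is correct and follows essentially the same route as the paper: apply the stationarity relations of Lemma~\ref{lem:quardatic} at indices $i$ and $i-1$, subtract to cancel the tail sum, and solve the resulting quadratic $H_{i-1}T_i^2 + T_i - H_{i-2}T_{i-1}^2 = 0$ for its positive root. Your side remark about $i=1$ is a fair observation (the paper itself only ever invokes the formula for $i \ge 2$, with $T_2$ handled via $H_0=1$, $H_1$), though note that the convention $H_{-1}=0$ would force $T_1=0$ and so does not actually extend the formula to $i=1$.
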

\begin{proof}
    As $T_i=\delta^{t_i}$, using \cref{lem:quardatic} we have 

\begin{equation}
    T_i^2 = \frac{1}{H_{i-1}}\cdot \left(\sum_{j=i+1}^nT_j\right)
\end{equation}

We can rewrite the above equation as 
\begin{equation}\label{eq:ti}
    H_{i-1}T_i^2 = \sum_{j=i+1}^nT_j
\end{equation}

Similarly, we have

\begin{equation}\label{eq:ti-1}
    H_{i-2}T_{i-1}^2 = \sum_{j=i}^nT_j
\end{equation}

Subtracting (\ref{eq:ti-1}) from (\ref{eq:ti}), we get

    $$H_{i-1}T_i^2 - H_{i-2}T_{i-1}^2 = \sum_{j=i+1}^nT_j - \sum_{j=i}^nT_j$$
    This implies
    $$H_{i-1}T_i^2 + T_i - H_{i-2}T_{i-1}^2 =0$$

Observe that we now have a quadratic equation in $T_i$. Also note that $T_i > 0$, as $\powdelta{t_i}>0$, the quadratic equation above can only have a positive root. Therefore,
\begin{equation*}
    T_i = \frac{-1+\sqrt{1+4H_{i-2}H_{i-1}(T_{i-1}^2)}}{2H_{i-1}}
\end{equation*}
\end{proof}

We now want to show that every $T_i$ can be expressed in terms of $T_1$. To this end, we first show that $T_2$ can be expressed as a function of $T_1$ and then formalize it for any $T_i$.

\begin{claim}\label{claim:t2}
    At any time $(t_1, \cdots, t_{n})$ where $\nabla L = 0$, $T_2$ can be written in terms of $T_1$ as follows:
    \begin{equation}
        T_2 = \frac{T^2_1}{1+T_1}.
    \end{equation}
\end{claim}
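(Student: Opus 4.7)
The plan is to apply Lemma~\ref{lem:Ti} (the recursive formula for $T_i$) at the specific index $i=2$ and simplify. Since we are in the case $a=1$, we have $t_0 = 0$, hence $T_0 = \delta^{t_0} = 1$ and $H_0 = 1/T_0 = 1$. Likewise, $H_1 = H_0 + 1/T_1 = 1 + 1/T_1 = (T_1+1)/T_1$. Substituting these into the formula
\[
T_2 \;=\; \frac{-1+\sqrt{1+4H_0 H_1\, T_1^2}}{2H_1}
\]
from Lemma~\ref{lem:Ti} gives
\[
T_2 \;=\; \frac{-1 + \sqrt{1 + 4\,(T_1+1)\,T_1}}{2\,(T_1+1)/T_1}
     \;=\; \frac{T_1\bigl(-1 + \sqrt{1 + 4T_1 + 4T_1^2}\bigr)}{2(T_1+1)} .
\]

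The key simplification, and the only nontrivial step, is recognizing that the radicand is a perfect square: $1 + 4T_1 + 4T_1^2 = (1+2T_1)^2$. Since $T_1 = \delta^{t_1} > 0$, we have $1 + 2T_1 > 0$, so the square root equals $1+2T_1$. The numerator then collapses to $T_1 \cdot 2T_1 = 2T_1^2$, and cancelling the factor of $2$ with the denominator yields $T_2 = T_1^2/(1+T_1)$, as claimed.

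I do not expect any real obstacle here: the entire content of the claim is the $i=2$ instance of the already-established recurrence, combined with the boundary condition $T_0 = 1$. The only thing to be careful about is choosing the positive root of the quadratic, which is automatic from the constraint $T_2 = \delta^{t_2} > 0$ (and was already justified inside the proof of Lemma~\ref{lem:Ti}). This also serves as the base case for expressing every $T_i$ ($i \ge 2$) as a function of $T_1$ alone by iterating Lemma~\ref{lem:Ti}, which is presumably the purpose of isolating this claim.
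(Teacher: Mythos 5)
Your proposal is correct and follows essentially the same route as the paper: instantiate Lemma~\ref{lem:Ti} at $i=2$, use $t_0=0$ to get $H_0=1$ and $H_1=(1+T_1)/T_1$, and simplify via the perfect square $1+4T_1+4T_1^2=(1+2T_1)^2$. No gaps.
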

\begin{proof}

    From  \cref{lem:Ti}, we have 
\begin{equation*}
    T_i = \frac{-1+\sqrt{1+4H_{i-2}H_{i-1}(T_{i-1}^2)}}{2H_{i-1}}
\end{equation*} For $i=2$ we get

\begin{equation*}
    T_2 = \frac{-1+\sqrt{1+4H_{0}H_{1}(T_{1}^2)}}{2H_{1}}
\end{equation*}

 As $t_0=0$, we have $\delta^{t_0}=1$ which implies $H_0=1$ and $H_1=\left(1+ \frac{1}{T_1}\right)$. Putting this in the above equation, we get:
    
    \begin{align*}
        T_2 &= \frac{-1 + \sqrt{1+4\cdot 1\cdot (1+\frac{1}{T_1})T_1^2}}{2\cdot(1+\frac{1}{T_1})}\\
        &= T_1\frac{-1 + \sqrt{4T_1^2 + 4T_1 + 1}}{2\cdot(T_1+1)}\\
        &= T_1 \frac{-1 + 2T_1 + 1}{2(T_1 + 1)}\\
        &= \frac{T_1^2}{1+ T_1}.
    \end{align*}
\end{proof}

We now use \cref{lem:Ti} to show that every $T_i$ can be expressed in terms of $T_1$, which is the main theorem of this section.

\begin{theorem}
\label{thm:T_i expression}
    For any strategy $\tee = (0, t_1, \cdots, t_{n-1}, T)$ corresponding to which $\nabla L = 0$, for all $i$ in $\{1, \cdots, n-1\}$, $T_i$ can be written in terms of $T_1$ as follows:
   \begin{equation}
     T_i = \frac{T_1^i}{(1+T_1)^{i-1}}
    \end{equation}
\end{theorem}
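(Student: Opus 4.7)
My plan is to prove \cref{thm:T_i expression} by strong induction on $i$, using the recursive relation \eqref{Eq: Decoupling} established in \cref{lem:Ti}. The base case $i=1$ is trivial (both sides equal $T_1$), and \cref{claim:t2} already handles $i=2$, which also serves as a useful sanity check for the algebra that follows.

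For the inductive step, suppose the formula $T_j = \frac{T_1^j}{(1+T_1)^{j-1}}$ holds for all $1 \le j \le i-1$. The key preparatory step is to obtain a closed form for $H_j = \sum_{k=0}^{j}\frac{1}{\delta^{t_k}} = \sum_{k=0}^{j}\frac{1}{T_k}$ under this hypothesis. Using $T_0 = 1$ and the inductive formula, I would write
\[
H_j = 1 + \sum_{k=1}^{j}\frac{(1+T_1)^{k-1}}{T_1^k} = 1 + \frac{1}{T_1}\sum_{k=1}^{j}\left(\frac{1+T_1}{T_1}\right)^{k-1},
\]
and recognize the inner sum as a geometric series with ratio $r = \frac{1+T_1}{T_1}$, where $r-1 = \frac{1}{T_1}$. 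A short computation then collapses this to the clean identity $H_j = \left(\frac{1+T_1}{T_1}\right)^j$ for every $j \le i-1$.

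With this in hand, I plug $H_{i-1}$, $H_{i-2}$, and the inductive expression for $T_{i-1}$ into \eqref{Eq: Decoupling}. The product $H_{i-2}H_{i-1}T_{i-1}^2$ simplifies dramatically: the exponents of $(1+T_1)$ and $T_1$ telescope to give $H_{i-2}H_{i-1}T_{i-1}^2 = T_1(1+T_1)$, independently of $i$. Consequently the discriminant inside the square root becomes $1 + 4T_1 + 4T_1^2 = (1+2T_1)^2$, a perfect square. Substituting back yields
\[
T_i = \frac{-1 + (1+2T_1)}{2 H_{i-1}} = \frac{T_1}{H_{i-1}} = \frac{T_1 \cdot T_1^{i-1}}{(1+T_1)^{i-1}} = \frac{T_1^i}{(1+T_1)^{i-1}},
\]
closing the induction.

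The only place I see a potential obstacle is verifying that the recursion \eqref{Eq: Decoupling} is actually applicable at $i$, which requires $i \le n-1$ so that $T_{i+1}, \dots, T_n$ appear in the first-order optimality conditions used in \cref{lem:Ti}; this is exactly the range in which the theorem is stated, so there is no gap. The rest is clean algebra, with the geometric-series evaluation of $H_j$ being the one step where a little care is needed to get the telescoping right.
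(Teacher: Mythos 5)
Your proposal is correct and follows essentially the same route as the paper: induction with \cref{claim:t2} as the base case, a geometric-series computation giving $H_j = \left(\frac{1+T_1}{T_1}\right)^j$, and substitution into \cref{Eq: Decoupling} where the discriminant collapses to $(1+2T_1)^2$. The only (minor) difference is presentational — you note the clean identity $H_{i-2}H_{i-1}T_{i-1}^2 = T_1(1+T_1)$ up front, which the paper obtains implicitly during the same simplification.
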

\begin{proof}
      We prove this theorem by induction. As our base case, we know that the theorem holds for $T_2$ (\cref{claim:t2}).

    Let us assume that the induction hypothesis holds till $T_{i-1}$. We want to show that the theorem also holds for $T_{i}$. To start with, let us first calculate $H_{k}$ for $k<i$,
    \begin{align*}
        H_{k} &= \sum_{j=0}^{k} \frac{1}{T_j}\\
        &= 1+\sum_{j = 1}^{k}\frac{(1+T_1)^{j-1}}{T_1^j} \hspace{2cm}\text{ (Using induction hypothesis)}\\
        &= 1+ \frac{1}{T_1}\cdot \frac{(\frac{1+T_1}{T_1})^{k}-1}{\frac{1+T_1}{T_1}-1}\hspace{1.9cm}\text{(Using geometric progression formula)}\\
        &= 1+\frac{1}{T_1}\cdot \frac{(\frac{1+T_1}{T_1})^{k}-1}{\frac{1}{T_1}}\\
        &= 1+ \left(\frac{1+T_1}{T_1}\right)^{k}-1\\
        &= \left(\frac{1+T_1}{T_1}\right)^{k}.
    \end{align*}

    Using ~\cref{Eq: Decoupling}, we have,
    \begin{align*}
        T_{i} &= \frac{-1+\sqrt{1+4H_{i-2}H_{i-1}(T_{i-1}^2)}}{2H_{i-1}}\\
        &= \frac{-1 + \sqrt{1+4\cdot (\frac{(1+T_1)}{T_1})^{i-2}\cdot (\frac{(1+T_1)}{T_1})^{i-1}\cdot (\frac{T_1^{i-1}}{(1+T_1)^{i-2}})^2}}{2(\frac{1+T_1}{T_1})^{i-1}}\\
        &= T_1^{i-1}\cdot \frac{-1 + \sqrt{1 + 4(T_1+1)(T_1)}}{2(1+T_1)^{i-1}}\\
        &= T_1^{i-1}\cdot \frac{-1 +2T_1 + 1}{2(1+T_1)^{i-1}}\\
        &= \frac{T_1^{i}}{(1+T_1)^{i-1}}.
    \end{align*}
\end{proof}
\begin{corollary}
    At any time $(t_1, \cdots, t_n)$ where $\nabla L = 0$, for all $i$ in $[n-1]$, $H_i$ can be expressed in terms of $T_1$ as follows:
    \begin{equation}
        H_{i} = \left(\frac{1+T_1}{T_1}\right)^{i}
    \end{equation}
\end{corollary}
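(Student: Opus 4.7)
The plan is to recognize that this corollary follows almost immediately from Theorem \ref{thm:T_i expression}, in fact the computation already appears as an intermediate step within the induction in that theorem's proof. The approach is simply to substitute the closed-form expression for $T_j$ into the definition of $H_i$ and evaluate the resulting geometric sum.

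First I would recall that $H_i = \sum_{j=0}^i \tfrac{1}{\delta^{t_j}} = \sum_{j=0}^i \tfrac{1}{T_j}$. Since $t_0 = 0$ gives $T_0 = 1$, the $j=0$ term contributes $1$. For $j \ge 1$, Theorem \ref{thm:T_i expression} yields $T_j = \tfrac{T_1^j}{(1+T_1)^{j-1}}$, hence $\tfrac{1}{T_j} = \tfrac{(1+T_1)^{j-1}}{T_1^j}$. So
\[
H_i = 1 + \sum_{j=1}^i \frac{(1+T_1)^{j-1}}{T_1^j}.
\]

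The remaining sum is a finite geometric series with first term $\tfrac{1}{T_1}$ and common ratio $\tfrac{1+T_1}{T_1}$. Applying the standard formula gives
\[
\sum_{j=1}^i \frac{(1+T_1)^{j-1}}{T_1^j} = \frac{1}{T_1} \cdot \frac{\left(\tfrac{1+T_1}{T_1}\right)^i - 1}{\tfrac{1+T_1}{T_1} - 1} = \frac{1}{T_1} \cdot \frac{\left(\tfrac{1+T_1}{T_1}\right)^i - 1}{1/T_1} = \left(\frac{1+T_1}{T_1}\right)^i - 1,
\]
so the constant $1$ cancels and one obtains $H_i = \bigl(\tfrac{1+T_1}{T_1}\bigr)^i$, which is the desired identity.

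There is no real obstacle here: the only thing to be slightly careful about is making sure the index range on the geometric series is correct (starting at $j=1$, not $j=0$, since the $j=0$ term is handled separately using $T_0 = 1$) and that the identity for $T_j$ from Theorem \ref{thm:T_i expression} is applicable for all $j \in \{1,\dots,i\}$, which it is because we assumed $i \le n-1$. The algebraic simplification of $\tfrac{1+T_1}{T_1} - 1 = \tfrac{1}{T_1}$ makes the denominator collapse cleanly, giving the stated closed form.
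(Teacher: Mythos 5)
Your proposal is correct and follows exactly the paper's reasoning: the paper leaves the corollary without a separate proof, but the identical computation (substituting $T_j = \tfrac{T_1^j}{(1+T_1)^{j-1}}$, summing the geometric series with ratio $\tfrac{1+T_1}{T_1}$, and cancelling the $+1$) already appears as the $H_k$ calculation inside the proof of \cref{thm:T_i expression}. Your handling of the $j=0$ term via $T_0=1$ and the index range is the same as the paper's.
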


Note that in \cref{thm:T_i expression}, we have shown that $T_2, \dots, T_{n-1}$ can be expressed in terms of $T_1$. We now show how to express $T_n$ in terms of $T_1$.

\begin{lemma}\label{lem:T_nexpression}
    For $1\le i\le n$, for any strategy $\tee = (0, t_1, \cdots, t_{n-1}, T)$ corresponding to which $\nabla L = 0$, the following relation holds:
    \begin{equation}
        \frac{T_1^n}{(1+T_1)^{n-2}} = T_n
    \end{equation}
\end{lemma}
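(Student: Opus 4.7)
The plan is to derive this identity by applying the first-order stationarity condition from \cref{lem:quardatic} at the index $i=n-1$ and then substituting the closed-form expressions for $T_{n-1}$ and $H_{n-2}$ that were already established. Importantly, \cref{thm:T_i expression} only covers $i\in\{1,\dots,n-1\}$ because those are the free coordinates where $\partial L/\partial t_i = 0$; the endpoint $t_n=T$ is fixed, so the value $T_n=\delta^T$ does not come from a stationarity equation of its own but must be forced through the constraint coming from the neighbor $t_{n-1}$.

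Concretely, I would first specialize \cref{lem:quardatic} to $i=n-1$, which yields
\[
T_{n-1}^2 \;=\; \frac{1}{H_{n-2}}\cdot T_n,
\qquad\text{i.e.,}\qquad T_n \;=\; H_{n-2}\,T_{n-1}^2.
\]
Then I would plug in the two already-proven closed forms in terms of $T_1$: from \cref{thm:T_i expression},
\[
T_{n-1} \;=\; \frac{T_1^{n-1}}{(1+T_1)^{n-2}},
\]
and from its corollary,
\[
H_{n-2} \;=\; \left(\frac{1+T_1}{T_1}\right)^{n-2}.
\]

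The remainder is a one-line algebraic simplification: multiplying these gives
\[
T_n \;=\; \left(\frac{1+T_1}{T_1}\right)^{n-2}\cdot \frac{T_1^{2(n-1)}}{(1+T_1)^{2(n-2)}} \;=\; \frac{T_1^{\,2(n-1)-(n-2)}}{(1+T_1)^{\,2(n-2)-(n-2)}} \;=\; \frac{T_1^{n}}{(1+T_1)^{n-2}},
\]
which is exactly the claimed equality. There is no real obstacle here; the substance of the lemma is already carried by \cref{thm:T_i expression} and its corollary, and this statement merely records the boundary equation that pins down $T_1$ implicitly via $T_n=\delta^T$. The only minor subtlety to mention is why \cref{lem:quardatic} is applicable at $i=n-1$ (it requires $n-1\ge 1$, i.e., $n\ge 2$); the degenerate case $n=1$ reduces to the trivial identity $T_1 = T_n$ with the convention $(1+T_1)^{-1}\cdot T_1^1 / (1+T_1)^{-1}= T_1$, which is consistent.
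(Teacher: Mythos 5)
Your proof is correct, but it takes a slightly different route from the paper's. The paper also starts from \cref{lem:quardatic}, but at the index $i=1$: it writes the stationarity condition $T_1^2 = T_2 + T_3 + \cdots + T_n$, substitutes the closed forms $T_i = T_1^i/(1+T_1)^{i-1}$ from \cref{thm:T_i expression} for $2\le i\le n-1$, sums the resulting geometric series to get $T_2+\cdots+T_{n-1} = T_1^2 - T_1^n/(1+T_1)^{n-2}$, and reads off $T_n$. You instead invoke \cref{lem:quardatic} at the other end, $i=n-1$, obtaining $T_n = H_{n-2}T_{n-1}^2$, and then substitute the closed forms for $T_{n-1}$ and for $H_{n-2}$ (from the corollary to \cref{thm:T_i expression}); this replaces the geometric-series summation with a one-line product, at the cost of additionally relying on the $H_i$ corollary. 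Both arguments use exactly the ingredients already available and are equally rigorous; there is no circularity in your use of the $i=n-1$ equation, since all stationarity equations hold simultaneously under the hypothesis $\nabla L = 0$. One small quibble: your aside on the degenerate case $n=1$ is not quite right --- for $n=1$ the claimed formula would read $T_1(1+T_1)=T_n$ rather than $T_1=T_n$, so the identity simply should be understood as requiring $n\ge 2$ (the paper's own proof implicitly assumes this as well); this does not affect the substance of your argument.
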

\begin{proof}
    From \cref{lem:quardatic}, we have that
 
\begin{equation}\label{eq:t12}
    T_1^2 = (T_2+T_3+\cdots + T_n)
\end{equation}

Which can be written as 
\[
T_1^2-(T_2+T_3+\cdots +T_{n-1})=T_n
\]

By using \cref{thm:T_i expression}, we substitute $T_i = \frac{(T_1)^i}{(1+T_1)^{i-1}}$ for $1<i<n$, to get

\begin{align*}
    T_2 + T_3 + \cdots + T_{n-1}&=\frac{(T_1)^2}{(1+T_1)}+ \frac{(T_1)^3}{(1+T_1)^2}+\cdots +\frac{(T_1)^{n-1}}{(1+T_1)^n}\\
    &= \frac{(T_1)^2}{(1+T_1)}\cdot \frac{1-(\frac{T_1}{1+T_1})^{n-2}}{1-\frac{T_1}{1+T_1}}\\
    &= T_1^2- \frac{T_1^n}{(1+T_1)^{n-2}}
\end{align*}

Hence, we have
\begin{equation*}
   \frac{T_1^n}{(1+T_1)^{n-2}} = T_n.
\end{equation*}
\end{proof}
\begin{corollary}
\label{cor: t_1 wrt t_n}
    For $1\le i\le n$, at any time $(t_1, \cdots, t_{n})$ where $\nabla L = 0$, the following relation holds:
    \begin{equation}
        \frac{\delta^{t_1n}}{(1+\delta^{t_1})^{n-2}} = \delta^T.
    \end{equation}
\end{corollary}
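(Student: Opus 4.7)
The plan is to observe that this corollary is an immediate substitution from Lemma~\ref{lem:T_nexpression}, which establishes the algebraic identity $\frac{T_1^n}{(1+T_1)^{n-2}} = T_n$ at any critical point of $L$. The only work left is to unpack the definition $T_i = \delta^{t_i}$ and use the boundary condition on $t_n$ already established earlier in the paper.

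Concretely, I would proceed in three short steps. First, invoke Lemma~\ref{lem:T_nexpression} to write $\frac{T_1^n}{(1+T_1)^{n-2}} = T_n$ at the given critical point $\tee = (0, t_1, \dots, t_{n-1}, T)$. Second, rewrite $T_1 = \delta^{t_1}$, so that $T_1^n = (\delta^{t_1})^n = \delta^{t_1 n}$ and $(1+T_1)^{n-2} = (1+\delta^{t_1})^{n-2}$, turning the left-hand side into exactly the expression in the corollary. Third, rewrite the right-hand side: since the critical point under consideration already has $t_n = T$ (which is enforced by the domain $\mathcal{D}$, and which Lemma~\ref{lem:t0=0tn=n} shows is consistent with optimality), we have $T_n = \delta^{t_n} = \delta^T$. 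Combining these gives the claimed equation.

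There is no real obstacle here; the corollary is purely a change of variables from $T_i$ back to $\delta^{t_i}$, and the only subtlety is to note that $t_n = T$ is part of the hypothesis via the parametrization $\tee = (0, t_1, \dots, t_{n-1}, T)$. Since this is a one-line derivation, the proof need only exhibit the substitution explicitly rather than introduce any new ideas.
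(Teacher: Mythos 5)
Your proposal is correct and matches the paper's (implicit) derivation exactly: the corollary is just Lemma~\ref{lem:T_nexpression} with the substitution $T_1=\delta^{t_1}$ and $T_n=\delta^{t_n}=\delta^{T}$, the latter using $t_n=T$ from the parametrization of $\tee$ (consistent with Lemma~\ref{lem:t0=0tn=n}). Nothing further is needed.
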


To conclude, we derived many interesting relations between $T_1,T_2,\dots,T_n$ in this section. In the following section, we use them to find an approximate value of $T_i$ and $t_i \forall i \in \{1,\dots,n-1\}$.
\subsection{Approximating $t_i$'s}\label{sec:approx}

In \cref{sec:uniqueminima}, we showed that a unique minima exists for $L$, and in \cref{sec:decoupling}, we showed that  $(T_2,\dots, T_n)$ can be expressed in terms of $T_1$. In this section, we show how to find an approximate solution for $(T_2,\dots, T_n)$ by finding an approximate solution for $T_1$. Note that as $T_1=\delta^{t_1}$, once we obtain an approximate solution for $T_1$, we also get a solution for $t_1$. To this end, we describe the outline of this section before proving everything in detail.

\begin{enumerate}
 \item We first show that $T_1$ has a unique solution.   
\item We then demonstrate how to approximate $T_1$ using binary search. Once we obtain an approximation for $T_1$, we use it to approximate $T_i$, $i\in \{2, \cdots, n\}$ using \cref{thm:T_i expression}.

    \item Next, we approximate $t_1$ using $T_1$, which further helps in approximating $t_2,\dots,t_n$.
\end{enumerate}

We first show that $T_1$ has a unique solution.
\subsubsection{$T_1$ has a unique solution}
In \cref{lem:T_nexpression}, we showed a relation between $T_1$ and $T_n$. Here we first show that the function $ \frac{T_1^n}{(1+T_1)^{n-2}}$ with respect to $T_1$ is strictly monotone and then argue that $T_1$ has a unique solution which minimizes $L$ provided that $T> n\log_{1/\delta}(2)$.

\begin{lemma}\label{lem:strictlyinc}
    Let $g(T_1)= \frac{T_1^n}{(1+T_1)^{n-2}}$. Then $g$ increases strictly monotonically with $T_1$.
\end{lemma}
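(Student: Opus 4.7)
The plan is to verify strict monotonicity directly by computing the derivative of $g$ on the positive reals, where $T_1=\delta^{t_1}\in(0,1]$ lives. Since $g(T_1)>0$ for $T_1>0$, it is convenient to work with the logarithmic derivative rather than differentiating $g$ as a quotient.

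First I would take
\[
\ln g(T_1) = n \ln T_1 - (n-2)\ln(1+T_1),
\]
which is valid for $T_1>0$. Differentiating,
\[
\frac{g'(T_1)}{g(T_1)} = \frac{n}{T_1} - \frac{n-2}{1+T_1} = \frac{n(1+T_1) - (n-2)T_1}{T_1(1+T_1)} = \frac{n + 2T_1}{T_1(1+T_1)}.
\]
For $T_1>0$ and $n\geq 1$, both the numerator and denominator on the right are strictly positive, so $g'(T_1)/g(T_1)>0$. Combined with $g(T_1)>0$, this gives $g'(T_1)>0$ on the entire positive domain, which yields strict monotonic increase.

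I expect no serious obstacle: the only thing to be slightly careful about is that the formula for $g$ mixes a power $n$ in the numerator with a power $n-2$ in the denominator, which could in principle allow a sign change if $n<2$. But in our setting $n$ is the number of interior ads and the statement is invoked under $n\geq 2$ (indeed the whole analysis of Section~\ref{sec:decoupling} needs at least two free variables $T_1,\ldots,T_{n-1}$), so the bound $n+2T_1>0$ holds trivially. For completeness, I would also note that at the boundary $T_1=0$ one has $g(0)=0<g(T_1)$ for $T_1>0$, so strict monotonicity extends to the closed interval $[0,1]$ used later when solving $g(T_1)=\delta^T$ by binary search.
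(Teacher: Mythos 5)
Your proof is correct and follows essentially the same route as the paper: both compute the derivative of $g$ and show its numerator reduces to $n T_1^{\,n-1} + 2T_1^{\,n}$ (equivalently your $(n+2T_1)/(T_1(1+T_1))$ as a logarithmic derivative), which is strictly positive on the relevant domain. The logarithmic-derivative presentation and the remarks about $n\ge 2$ and the endpoint $T_1=0$ are harmless cosmetic additions, not a different argument.
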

\begin{proof}
    \begin{align*}
     \frac{d}{dT_1}\frac{T_1^n}{(1+T_1)^{n-2}} &=   \frac{nT_1^{n-1}}{(1+T_1)^{n-2}} - (n-2)\frac{T_1^n}{(1+T_1)^{n-1}}\\
     &=\frac{nT_1^{n-1}+2T_1^n}{(1+T_1)^{n-1}}\\
     &> 0
    \end{align*}
    Hence, $g(T_1)$ strictly monotonically increasing with $T_1$.
\end{proof}

Note that as $g(T_1)$ strictly monotonically increasing with $T_1$ and $T_n=\delta^T$ is a constant, we show that $T_1$ has a unique solution. To show this, for convenience, we assume that $T> n\log_{1/\delta}(2)$. We remove this assumption later in \cref{sec:assume}.

\begin{lemma}\label{lem:T>}
    Assuming $T> n\log_{1/\delta}(2)$, $g(T_1) = \delta^T$ admits a unique solution.
\end{lemma}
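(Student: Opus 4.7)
The plan is to combine the strict monotonicity of $g$ established in \cref{lem:strictlyinc} with the intermediate value theorem applied on the natural range of $T_1$. Since $t_1 \in [0, T]$ and $T_1 = \delta^{t_1}$ with $\delta \in (0,1)$, the relevant domain for $T_1$ is the interval $[\delta^T, 1]$, on which $g$ is continuous and (by \cref{lem:strictlyinc}) strictly increasing. Strict monotonicity alone yields \emph{at most one} solution; only existence needs the hypothesis $T > n\log_{1/\delta}(2)$.

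For existence, I would evaluate $g$ at the two endpoints. At $T_1 = \delta^T$,
\[
g(\delta^T) = \frac{\delta^{nT}}{(1+\delta^T)^{n-2}} < \delta^T,
\]
because $\delta^{(n-1)T} < 1 \le (1+\delta^T)^{n-2}$ for $n \ge 2$. At $T_1 = 1$, we get $g(1) = 2^{-(n-2)}$, and the hypothesis $T > n\log_{1/\delta}(2)$ gives $\delta^T < 2^{-n} < 2^{-(n-2)} = g(1)$. Thus
\[
g(\delta^T) < \delta^T < g(1),
\]
so by continuity and the intermediate value theorem, there exists some $T_1^\star \in (\delta^T, 1)$ with $g(T_1^\star) = \delta^T$. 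Uniqueness then follows from strict monotonicity.

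The only delicate point — and the main thing to double-check — is the endpoint comparison at $T_1 = 1$: the hypothesis has to be strong enough to guarantee $\delta^T < 2^{-(n-2)}$ (the lemma's assumption $T > n\log_{1/\delta}(2)$ gives more than needed, with some slack). Apart from verifying this inequality carefully, the argument is an essentially routine IVT-plus-monotonicity invocation, so I would not expect a serious obstacle. This lemma is what justifies initializing the binary search in \cref{alg:aT_a getting} on the interval $[\delta^T, 1]$, since the unique root is guaranteed to lie there.
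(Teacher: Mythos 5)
Your proof is correct and follows essentially the same route as the paper: evaluate $g$ at the endpoints $T_1=\delta^T$ and $T_1=1$ of the search interval, use $T>n\log_{1/\delta}(2)$ to get $\delta^T<2^{-(n-2)}=g(1)$, invoke the intermediate value theorem for existence, and strict monotonicity (\cref{lem:strictlyinc}) for uniqueness. Your endpoint justifications are even slightly more explicit than the paper's, but the argument is the same.
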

\begin{proof}
    Observe that $t_1\in \{0, \cdots,T\}$, so $T_1\in [\delta^T, 1]$.
    
    When $T_1 = \delta^T$,
    \begin{align*}
        g(T_1) &= \frac{T_1^n}{(1+T_1)^{n-2}}\\
        &< T_1\\
        & = T_n
    \end{align*}

    By our assumption that $T> n\log_{1/\delta}(2)$, let us show that $\delta^T < \frac{1}{2^{n-2}}$
    \begin{align*}
        T &> (n-2)\log_{1/\delta}(2)\\
        T\log(1/\delta) &> (n-2)\log(2)\\
        \delta^T &< \frac{1}{2^{n-2}}
    \end{align*}
    
    When $T_1 = 1$, 
    \begin{align*}
        g(T_1) &= 1/2^{n-2}\\
        &> T_n \hspace{1cm}\text{ (Assuming $T> n\log_{1/\delta}(2))$.}
    \end{align*}

    As $g(T_1) < T_n$ when $T_1 = \delta^T$ and $g(T_1) > T_n$ when $T_1 = 1$, using intermediate value theorem, there exists a point $p\in [\delta^T, 1]$ such that $g(p) = T_n$. Since $g$ is a continuous and strictly monotonically increasing function, the point $p$ is unique. Hence, $g(T_1)$ admits a unique solution.
\end{proof}

\begin{lemma}\label{lem:corresponding}
    A solution to $g(T_1) = \delta^T$ corresponds to a solution to $\nabla L = 0$.
\end{lemma}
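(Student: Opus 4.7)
The plan is to verify the converse of what was established in \cref{sec:decoupling}: starting from any $T_1^* \in (\delta^T, 1)$ with $g(T_1^*) = \delta^T$, I will explicitly build a strategy that makes every partial derivative of $L$ vanish. Concretely, set $T_i^* := (T_1^*)^i/(1+T_1^*)^{i-1}$ for $2 \le i \le n-1$, put $T_n^* := \delta^T$, and define $t_i^* := (\ln T_i^*)/(\ln \delta)$; the goal is then to verify $\nabla L(t_1^*, \ldots, t_{n-1}^*) = 0$.

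First I would recall that $\partial L/\partial t_k = 0$ is algebraically equivalent (by exactly the rearrangement performed inside the proof of \cref{lem:quardatic}, which amounts to differentiating and isolating $T_k$) to
\[
T_k^2 \, H_{k-1} \;=\; \sum_{j=k+1}^{n} T_j, \qquad k = 1, \ldots, n-1.
\]
So the entire task reduces to verifying these $n-1$ equations at the proposed $T_i^*$'s.

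Next I would evaluate both sides in closed form. Invoking the corollary $H_{k-1} = ((1+T_1^*)/T_1^*)^{k-1}$, the LHS simplifies directly to $(T_1^*)^{k+1}/(1+T_1^*)^{k-1}$. For the RHS, I split the sum as $\sum_{j=k+1}^{n-1} T_j^* + T_n^*$ and apply the geometric-series computation with ratio $r = T_1^*/(1+T_1^*)$ (essentially the manipulation used inside the proof of \cref{lem:T_nexpression}), obtaining
\[
\sum_{j=k+1}^{n} T_j^* \;=\; \frac{(T_1^*)^{k+1}}{(1+T_1^*)^{k-1}} \;-\; \frac{(T_1^*)^n}{(1+T_1^*)^{n-2}} \;+\; T_n^*.
\]
The hypothesis $g(T_1^*) = \delta^T = T_n^*$ is precisely what cancels the middle two terms, leaving the LHS, so stationarity holds for every $k \in \{1, \ldots, n-1\}$.

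The main obstacle is more bookkeeping than substance. I must separately handle the boundary case $k = n-1$, where the interior sum is empty and the identity collapses directly to $(T_{n-1}^*)^2 H_{n-2} = T_n^*$, which again reduces to $g(T_1^*) = \delta^T$. I should also briefly check feasibility — that the constructed $t_i^*$'s lie strictly inside $[0, T]$ and are increasing. This follows from $T_i^*/T_{i-1}^* = T_1^*/(1+T_1^*) \in (0,1)$, which forces $T_1^* > T_2^* > \cdots > T_{n-1}^* > T_n^* = \delta^T$ and hence $0 < t_1^* < \cdots < t_{n-1}^* < T$, using the range $T_1^* \in (\delta^T, 1)$ established in \cref{lem:T>} under the standing assumption $T > n\log_{1/\delta}(2)$.
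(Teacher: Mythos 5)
Your proposal is correct, and it is in fact more explicit than the paper's own argument. The paper proves \cref{lem:corresponding} by asserting that the system $g(T_1)=\delta^T$ together with the closed forms of \cref{thm:T_i expression} is an equivalent (``independent'') rewriting of the original stationarity equations~(\cref{eq:T_idefined}), so any solution of the former satisfies the latter; the reversibility of the algebraic manipulation is claimed rather than checked. You instead carry out the converse direction concretely: you construct the candidate point $T_i^* = (T_1^*)^i/(1+T_1^*)^{i-1}$, compute $H_{k-1} = ((1+T_1^*)/T_1^*)^{k-1}$ directly from these values, and verify each condition $T_k^2 H_{k-1} = \sum_{j=k+1}^{n} T_j$ (the rearranged form of $\partial L/\partial t_k = 0$ from \cref{lem:quardatic}) via the same geometric-series identity underlying \cref{lem:T_nexpression}, with the hypothesis $g(T_1^*)=\delta^T$ supplying exactly the needed cancellation; you also check feasibility ($0<t_1^*<\cdots<t_{n-1}^*<T$) using $T_1^*\in(\delta^T,1)$ from \cref{lem:T>}. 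So the two arguments rest on the same algebra, but yours substantiates the step the paper leaves implicit, and the feasibility check (which the paper omits entirely) is a genuine addition needed to conclude the point lies in the interior of $\mathcal{D}$. One cosmetic remark: the corollary giving $H_k$ is stated in the paper only at stationary points, so when you invoke it you should note (as you implicitly do) that it is a pure algebraic consequence of the defining formulas for the $T_i^*$, and that for the last pair the ratio is $T_n^*/T_{n-1}^* = T_1^*$ rather than $T_1^*/(1+T_1^*)$ --- neither affects the argument.
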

\begin{proof}
    We began with a set of \( n \) independent equations (~\cref{eq:T_idefined}) and, through algebraic manipulation, derived another set of \( n \) independent equations (~\cref{cor: t_1 wrt t_n} and~\cref{thm:T_i expression}). Therefore, solving for \( g(T_1) = \delta^T \) and computing the corresponding values of \( T_i \) yields a solution that also satisfies the original equations. Since ~\cref{eq:T_idefined} implies \( \nabla L = 0 \), this solution corresponds to a minimum of \( L \).

\end{proof}

~\cref{lem:T>} states that when  $T > n\log_{1/\delta}(2)$, i.e., $T$ is much larger compared to $n$, there exists a unique interior point solution to our problem. In \cref{sec:assume}, we remove this assumption and show how to find the solution of $g(T_1)$. But for simplicity, we proceed with the assumption now. Next, we show how to approximate $T_1$ using binary search.

\subsubsection{Approximating $T_1$ using Binary Search}\label{sec:approxbinary}

As we know that $g(T_1)$ has a unique solution and $g$ is monotonically increasing with $T_1$, we perform binary search and estimate $T_1$ to an accuracy of $1/8^n$. 

\begin{lemma}\label{lem:binary}
     Assuming $T> n\log_{1/\delta}(2)$, we estimate the value of $T_1$ to an accuracy of $1/8^n$, in polynomial time.
\end{lemma}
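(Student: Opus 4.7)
The plan is a direct binary search on the interval $[\delta^T,1]$, justified by the monotonicity and continuity of $g$ established in \cref{lem:strictlyinc} and \cref{lem:T>}. Since $g$ is strictly increasing and continuous, with $g(\delta^T) < \delta^T$ and $g(1) > \delta^T$ (under the assumption $T > n\log_{1/\delta}(2)$), the standard bisection procedure converges to the unique root $T_1^*$ of $g(T_1) = \delta^T$, and at each step the length of the bracketing interval halves.

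Concretely, I would initialize $\ell_0 = \delta^T$ and $r_0 = 1$, so that $r_0 - \ell_0 \le 1$. At iteration $k$, set $m_k = (\ell_{k-1}+r_{k-1})/2$ and compute $g(m_k)$. By monotonicity of $g$, if $g(m_k) < \delta^T$ then the unique root lies in $[m_k, r_{k-1}]$, so I set $\ell_k = m_k$ and $r_k = r_{k-1}$; otherwise I set $\ell_k = \ell_{k-1}$ and $r_k = m_k$. In either case $r_k - \ell_k = (r_{k-1}-\ell_{k-1})/2$, so after $k$ steps the interval has length at most $2^{-k}$, and the approximation $\widehat{T}_1 := (\ell_k+r_k)/2$ satisfies $|\widehat{T}_1 - T_1^*| \le 2^{-(k+1)}$.

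Choosing $k = 3n$ then yields $|\widehat{T}_1 - T_1^*| \le 2^{-(3n+1)} < 1/8^n$, as required. Each iteration requires one evaluation of $g(m_k) = m_k^n/(1+m_k)^{n-2}$, which can be done with $O(n)$ arithmetic operations via repeated squaring (or a simple product). The total work is therefore $O(n^2)$ arithmetic operations, which is polynomial in $n$.

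The only genuine subtlety is guaranteeing that the initial bracket $[\delta^T,1]$ actually contains the root: this is exactly where the hypothesis $T > n\log_{1/\delta}(2)$ enters, and it has already been verified in the proof of \cref{lem:T>} via the strict inequalities $g(\delta^T) < \delta^T$ and $g(1) = 1/2^{n-2} > \delta^T$. Once that sign condition is in hand, monotonicity of $g$ makes the rest of the argument a routine convergence analysis of bisection, so no serious obstacle remains.
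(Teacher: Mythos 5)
Your bisection setup (bracket $[\delta^T,1]$, monotonicity of $g$, sign check at the endpoints from \cref{lem:T>}) matches the paper's, but there is a genuine gap in the error target: you stop after $3n$ iterations and conclude an \emph{additive} error $|\widehat{T}_1 - T_1^*| \le 2^{-(3n+1)} < 1/8^n$, whereas what the paper needs, and what its proof of this lemma actually delivers, is a \emph{relative} (multiplicative) accuracy of $1/8^n$, i.e.\ $T_1^*(1-1/8^n)\le \widehat{T}_1 \le T_1^*(1+1/8^n)$. The downstream argument depends on this: \cref{lem:approxt_n} takes a multiplicative bound $T_1^*(1-\epsilon)\le T_1\le T_1^*(1+\epsilon)$ as its hypothesis, and \cref{lem:optimalti} converts multiplicative error in $T_i$ into additive error in $t_i=\log_\delta T_i$; an additive error $\eta$ in $T_1$ translates into an error of order $\eta/T_1^*$ in $t_1$. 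Since $g(T_1^*)=\delta^T$ forces $T_1^*\approx \delta^{T/n}$ (indeed $\delta^{T/n}\le T_1^*\le 2\delta^{T/n}$), the root can be exponentially small in $T\log(1/\delta)/n$, and whenever $T\log_2(1/\delta)$ is much larger than about $3n^2$ your additive error $2^{-3n}$ exceeds $T_1^*$ itself, so the relative error is unbounded and the final guarantee $t_i^*-1/2^n\le t_i\le t_i^*+1/2^n$ collapses.

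The fix is exactly the extra work in the paper's proof: run $r = 3n + T\log_2(1/\delta)$ bisection steps, so the bracket length shrinks to $2^{-r} = \frac{1}{8^n}\,\delta^T \le \frac{1}{8^n}\,T_1^*$, which is a relative error of $1/8^n$ because $T_1^*\ge \delta^T$. This is still polynomial time, with the iteration count depending on $T$ and $\log(1/\delta)$ (the paper states $O\left((3n+T\log_2(1/\delta))\ln(nT)\right)$); your $O(n^2)$ count, with iterations independent of $T$, cannot suffice. The rest of your argument (correct bracketing update by monotonicity, halving of the interval, cost per evaluation of $g$) is fine and mirrors \cref{alg:binary search}.
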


\begin{proof}
    We know, $g(T_1) = \frac{T_1^n}{(1+T_1)^{n-2}}$ is strictly monotonically increasing with $T_1$. Let $$h(T_1) = g(T_1)-T_n.$$ As $T_n$ is a constant, $h(T_1)$ also strictly monotonically increasing with $T_1$. 

    Also, as $T_1\in [\delta^T, 1]$ and $T> n\log_{1/\delta}(2)$, it can be verified that at $T_1 = \delta^T$, $h(T_1) = h(\delta^T) < 0$, and at $T_1 = 1$, $h(T_1) = h(1) > 0$.

    We now perform binary search as described in ~\cref{alg:binary search}, to obtain $T_1$. We later show that \cref{alg:binary search} takes $O(3n+T\log_2{(1/\delta)})$ iterations to obtain an approximation of $1/8^n$. Since we only do constant-time operations and function evaluations in each iteration, we need $O(\ln(nT))$ time per iteration. Therefore, the overall time to obtain an approximation of $T$ to a factor of $1/8^n$ is $O((3n+T\log_2{(1/\delta)})\ln(nT))$.

    Note that the size of our search domain is initially $1-\delta^T < 1$ and in each round, the size of our domain decreases by a factor of $2$. Let $r$ be the total number of iterations, then after $r = 3n+T\log_2{(1/\delta)}$ iterations, our domain size decreases to a factor of less than
    \begin{align*}
        \frac{1}{2^{r}} &= \frac{1}{2^{3n+T\log_2{(1/\delta)}}}\\
        &= \frac{1}{8^n}\cdot {\delta^T}\\
        &\leq \frac{1}{8^n}\cdot T_1 \hspace{1cm} \text{($T_1 \geq \delta^T$)}
    \end{align*}

\end{proof}

\begin{algorithm}
    \caption{Binary Search}
    \label{alg:binary search}
    \begin{algorithmic}[1]
    \STATE $low \gets \delta^T$
    \STATE $high \gets 1$
    \STATE $n_{rounds} \gets 3n+T\log{(1/\delta)}$
    \STATE $mid \gets (low + high)/2$
    \STATE $curr \gets 1$
    \WHILE{$curr \leq n_{rounds}$}
        \STATE $mid \gets  (low + high) / 2 $
        \IF{$h(mid) = 0$}
            \RETURN $mid$
        \ELSIF{$h(mid) < 0$}
            \STATE $low \gets mid$
        \ELSE
            \STATE $high \gets mid$
        \ENDIF
        \STATE $curr \gets curr + 1$
    \ENDWHILE
    \RETURN $mid$
    \end{algorithmic}
    \end{algorithm}

For an optimal $t_1^*$, let $T_1^*=\delta^{t_1^*}$ be the optimal value of $T_i$ that satisfies $\nabla L = 0$. Similarly, define $T_2^*,\dots,T_n^*$. In the following lemma, we bound the error in $T_i$'s with respect to the estimation we make for $T_1$.

\begin{lemma}
\label{lem:approxt_n}
    Assuming $T_1^*(1-\epsilon) \leq T_1 \leq T_1^*(1+\epsilon)$, then $T_i$ can be bounded by  \[
T_i^*\cdot (1-4^n\epsilon)\leq T_i \leq T_i^*\cdot (1+4^n\epsilon)
\]
\end{lemma}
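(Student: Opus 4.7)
The plan is to leverage Theorem~\ref{thm:T_i expression}, which gives the closed form $T_i = T_1^i/(1+T_1)^{i-1}$ at every point with $\nabla L = 0$. Applying this formula to both the true optimum $(T_1^*, T_i^*)$ and to the approximate pair $(T_1, T_i)$ reduces the whole statement to a perturbation analysis of a single rational function, namely
\[
\frac{T_i}{T_i^*} \;=\; \left(\frac{T_1}{T_1^*}\right)^{\!i}\cdot\left(\frac{1+T_1^*}{1+T_1}\right)^{\!i-1}.
\]

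The next step is to bound the two factors separately. The first factor is immediately sandwiched between $(1-\epsilon)^i$ and $(1+\epsilon)^i$ by the hypothesis. For the second, I would write $T_1 = T_1^* + \Delta$ with $|\Delta|\le T_1^*\epsilon$ and rewrite
\[
\frac{1+T_1^*}{1+T_1} \;=\; \frac{1}{1 + \Delta/(1+T_1^*)},
\]
observing that $|\Delta/(1+T_1^*)|\le T_1^*\epsilon/(1+T_1^*)\le \epsilon$, since $T_1^*\le 1+T_1^*$. Combining the two ranges gives
\[
\frac{(1-\epsilon)^i}{(1+\epsilon)^{i-1}} \;\le\; \frac{T_i}{T_i^*} \;\le\; \frac{(1+\epsilon)^i}{(1-\epsilon)^{i-1}}.
\]

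The final step is a crude combinatorial estimate showing that these bounds lie in $[1-4^n\epsilon,\,1+4^n\epsilon]$. Factoring $(1+\epsilon)^i/(1-\epsilon)^{i-1} = (1+\epsilon)\bigl((1+\epsilon)/(1-\epsilon)\bigr)^{i-1}$ and applying the elementary inequality $(1+\epsilon)/(1-\epsilon)\le 1+4\epsilon$ for $\epsilon\le 1/2$, one expands $(1+4\epsilon)^{i-1}-1 = \sum_{k\ge 1}\binom{i-1}{k}(4\epsilon)^k$, uses $\binom{i-1}{k}\le 2^n$ and $4^k\le 4^n$, and sums geometrically to get an $O(4^n)\epsilon$ remainder. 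The symmetric lower bound on $T_i/T_i^*$ is handled by the same argument.

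The main obstacle is conceptual rather than technical: the perturbation in $T_1$ enters both the numerator $T_1^i$ and the denominator $(1+T_1)^{i-1}$ of $T_i$ with opposite effects, so one must use both halves of the hypothesis rather than a one-sided bound. The generous $4^n$ slack in the conclusion makes it unnecessary to optimize the combinatorial estimate; a tighter analysis would yield $1+O(n)\epsilon$, but this is superfluous because the subsequent lemma instantiates $\epsilon = 1/8^n$ from the binary-search accuracy, which dominates every polynomial-in-$n$ factor.
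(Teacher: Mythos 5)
Your overall route is the same as the paper's: both proofs rest on \cref{thm:T_i expression}, apply the closed form $T_i = T_1^i/(1+T_1)^{i-1}$ at the perturbed and the optimal value of $T_1$, and arrive at the identical sandwich $\frac{(1-\epsilon)^i}{(1+\epsilon)^{i-1}} \le T_i/T_i^* \le \frac{(1+\epsilon)^i}{(1-\epsilon)^{i-1}}$. Up to that point your argument is fine. The genuine gap is in the last step. Your chain bounds $\sum_{k\ge 1}\binom{i-1}{k}(4\epsilon)^k$ by taking $\binom{i-1}{k}\le 2^n$ and $4^k\le 4^n$ term by term and then summing the geometric series $\sum_{k\ge 1}\epsilon^k \le 2\epsilon$; this yields a remainder of order $2\cdot 8^n\epsilon$, not $O(4^n)\epsilon$, and in particular it does not establish the stated inequality $T_i \le T_i^*(1+4^n\epsilon)$ with the exact factor $4^n$. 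Moreover the looseness is not harmless, contrary to your closing remark: the corollary immediately after the lemma instantiates $\epsilon = 1/8^n$ and needs $4^n\epsilon = 2^{-n}$ to conclude an exponentially small error, so an $8^n$ factor would only give a constant-size error and would break the downstream bound on the $t_i$'s. The extra factor your estimate leaks is exponential in $n$, not the ``polynomial-in-$n$'' slack you describe.

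The repair is short and is essentially what the paper does: for $\epsilon < 1/2$ use $\frac{1}{1-\epsilon}\le 1+2\epsilon$ to get $\frac{(1+\epsilon)^i}{(1-\epsilon)^{i-1}} \le (1+2\epsilon)^{2i-1}$, then apply the single inequality $(1+x)^r \le 1+(2^r-1)x$ for $x\in[0,1]$ (valid since $x^k\le x$ for $k\ge 1$) to obtain $(1+2\epsilon)^{2i-1} \le 1+(2^{2i-1}-1)\cdot 2\epsilon \le 1+4^i\epsilon \le 1+4^n\epsilon$; the lower bound follows symmetrically from $(1-\epsilon)^{2i-1}\ge 1-(2^{2i-1}-1)\epsilon$. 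With that substitution your proof coincides with the paper's.
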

\begin{proof}
        Using \cref{thm:T_i expression} we know,
\begin{equation*}
    T_i = \frac{T_1^i}{(1+T_1)^{i-1}}
\end{equation*}

Let us assume that we can bound $T_1$ as
\[
T_1^*(1-\epsilon)\leq T_1 \leq T_1^*(1+\epsilon)
\]

Then we get,
\begin{align*}
    T_i &= \frac{T_1^i}{(1+T_1)^{i-1}}\\
    &\leq \frac{(T_1^*(1+\epsilon))^i}{(1+T_1^*(1-\epsilon))^{i-1}}\\
    &\le \frac{(T_1^*(1+\epsilon))^i}{((1+T_1^*)(1-\epsilon))^{i-1}}\\
    &= \frac{(T_1^*)^i}{(1+T_1^*)^{i-1}}\cdot \frac{(1+\epsilon)^i}{(1-\epsilon)^{i-1}}\\
    &=T_i^*\cdot \frac{(1+\epsilon)^i}{(1-\epsilon)^{i-1}}
\end{align*}

Using some known inequalities like $(1+x)^r\leq 1+(2^r-1)x$ and $\frac{1}{1-x} \leq 1+2x$ for $\epsilon < 1/2$.

We have, 
\begin{align*}
    T_i &\leq T_i^*\cdot \frac{(1+\epsilon)^i}{(1-\epsilon)^{i-1}}\\
    &\leq T_i^*\cdot (1+\epsilon)^i(1+2\epsilon)^{i-1}\\
    &\leq T_i^*\cdot (1+2\epsilon)^{2i-1}\\
    &\leq T_i^*\cdot (1+(2^{2i-1}-1)2\epsilon)\\
    &\leq T_i^*\cdot (1+(2^{2i}-1)\epsilon)
\end{align*}

Similarly, we get 
\begin{align*}
    T_i &= \frac{T_1^i}{(1+T_1)^{i-1}}\\
    &\geq \frac{(T_1^*(1+\epsilon))^i}{(1+T_1^*(1+\epsilon))^{i-1}}\\
    &\ge \frac{(T_1^*(1+\epsilon))^i}{((1+T_1^*)(1+\epsilon))^{i-1}}\\
    &= \frac{(T_1^*)^i}{(1+T_1^*)^{i-1}}\cdot \frac{(1-\epsilon)^i}{(1+\epsilon)^{i-1}}\\
    &=T_i^*\cdot \frac{(1-\epsilon)^i}{(1+\epsilon)^{i-1}}
\end{align*}

Again, we know that $(1-x)^r\geq 1-(2^r-1)x$ and $\frac{1}{1+x} \geq 1-x$.

We have, 
\begin{align*}
    T_i &\geq T_i^*\cdot \frac{(1-\epsilon)^i}{(1+\epsilon)^{i-1}}\\
    &\geq T_i^*\cdot (1-\epsilon)^i(1-\epsilon)^{i-1}\\
    &= T_i^*\cdot (1-\epsilon)^{2i-1}\\
    &\geq T_i^*\cdot (1-(2^{2i-1}-1)\epsilon)
\end{align*}

Putting everything together, we get  
\[
T_i^*\cdot (1-(2^{2i}-1)\epsilon)\leq T_i \leq T_i^*\cdot (1+(2^{2i-1}-1)\epsilon)
\]
We can rewrite the above equation as
\[
T_i^*\cdot (1-4^i\epsilon)\leq T_i \leq T_i^*\cdot (1+4^i\epsilon)
\]

Since $i\leq n$
\[
T_i^*\cdot (1-4^n\epsilon)\leq T_i \leq T_i^*\cdot (1+4^n\epsilon)
\]

Hence, we have proved the lemma.
\end{proof}
\begin{corollary}
    Setting $\epsilon = 1/8^n$, we get,  $$T_i^*\cdot (1-1/2^n)\leq T_i \leq T_i^*\cdot (1+1/2^n).$$
\end{corollary}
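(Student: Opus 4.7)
The plan is to observe that this corollary is an immediate arithmetic consequence of Lemma~\ref{lem:approxt_n}, so the main work is just verifying that the required hypothesis is met and that the constants simplify as claimed. First, I would invoke Lemma~\ref{lem:binary}, which guarantees that Algorithm~\ref{alg:binary search}, run for $O(3n + T\log_2(1/\delta))$ iterations, returns a value $T_1$ with $|T_1 - T_1^*| \le T_1^*/8^n$. Rewriting this as $T_1^*(1 - 1/8^n) \le T_1 \le T_1^*(1 + 1/8^n)$ supplies exactly the hypothesis of Lemma~\ref{lem:approxt_n} with $\epsilon = 1/8^n$.

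Next, I would simply substitute $\epsilon = 1/8^n$ into the conclusion of Lemma~\ref{lem:approxt_n}, namely $T_i^*(1 - 4^n\epsilon) \le T_i \le T_i^*(1 + 4^n\epsilon)$. The only computation is
\[
4^n \cdot \epsilon \;=\; \frac{4^n}{8^n} \;=\; \Bigl(\tfrac{4}{8}\Bigr)^n \;=\; \frac{1}{2^n},
\]
which plugged in yields precisely $T_i^*(1 - 1/2^n) \le T_i \le T_i^*(1 + 1/2^n)$, as desired.

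The one thing to double-check is that the intermediate inequalities used inside the proof of Lemma~\ref{lem:approxt_n} (in particular $\frac{1}{1-\epsilon} \le 1 + 2\epsilon$ and $(1+x)^r \le 1 + (2^r-1)x$, which require $\epsilon < 1/2$) remain valid for our choice $\epsilon = 1/8^n$. This holds trivially for every $n \ge 1$, so no additional care is needed. There is essentially no obstacle here: the corollary is just a clean specialization of the preceding lemma combined with the accuracy guarantee of the binary search, and the only ``trick'' is recognizing that $4^n/8^n$ collapses to $1/2^n$.
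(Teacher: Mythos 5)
Your proposal is correct and matches the paper's reasoning: the paper likewise obtains the corollary by combining the $1/8^n$ accuracy of $T_1$ from \cref{lem:binary} with \cref{lem:approxt_n}, using $4^n\cdot(1/8^n)=1/2^n$. Your additional check that $\epsilon=1/8^n<1/2$ keeps the inequalities inside \cref{lem:approxt_n} valid is a sensible, if routine, verification.
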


The above corollary is obtained by putting $\epsilon=1/8^n$ because we have already shown in \cref{lem:binary} that $T_1$ can be estimated to a factor of $1/8^n$ using \cref{alg:binary search}.

We now use the above lemma to approximate $t_i$, which is our main result.

\subsubsection{Approximating $t_i$ using $T_i$}

As we have shown that we can estimate $T_1,\dots, T_n$ in terms of multiplicative error, and since $T_i=\delta^{t_i}$, we now bound $t_i$ in terms of additive error.

\begin{lemma}\label{lem:optimalti}
   For $\epsilon < 0.5$, $t_i = \log_{\delta}T_i$ is bounded by
\[
    t_i^*-\frac{2\ln(2)\epsilon}{\log(1/\delta)} \leq t_i \leq t_i^*  +\frac{2\ln(2)\epsilon}{\log(1/\delta)}
\]
\end{lemma}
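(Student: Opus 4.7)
The plan is to convert the multiplicative bounds on $T_i$ established in Lemma~\ref{lem:approxt_n} into additive bounds on $t_i = \log_\delta T_i$. Starting from $T_i^*(1-\epsilon) \le T_i \le T_i^*(1+\epsilon)$, I would apply $\log_\delta$, which is strictly decreasing on $(0, \infty)$ since $\delta \in (0,1)$, so both inequalities reverse orientation. Writing $\log_\delta x = -\ln x / \ln(1/\delta)$, this yields
\[
  t_i^* - \frac{\ln(1+\epsilon)}{\ln(1/\delta)} \;\le\; t_i \;\le\; t_i^* - \frac{\ln(1-\epsilon)}{\ln(1/\delta)}.
\]

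The remaining task is to bound $|\ln(1\pm\epsilon)|$ by $2\ln(2)\,\epsilon$ uniformly for $\epsilon \in (0, 1/2)$. For the upper side, I would use that $\epsilon \mapsto -\ln(1-\epsilon)$ is convex on $[0, 1/2]$ (its second derivative $1/(1-\epsilon)^2$ is positive) and takes values $0$ at $\epsilon = 0$ and $\ln 2$ at $\epsilon = 1/2$; hence it lies below its secant line $\epsilon \mapsto 2\ln(2)\,\epsilon$ on the whole interval. For the lower side, $\ln(1+\epsilon) \le \epsilon \le 2\ln(2)\,\epsilon$ is immediate since $2\ln 2 > 1$.

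Combining these with the displayed bounds gives $|t_i - t_i^*| \le 2\ln(2)\,\epsilon / \ln(1/\delta)$, which is exactly the claim. There is no real obstacle here; the proof is just a careful translation from multiplicative to additive error via the logarithm. The only subtlety worth flagging is the sign reversal coming from $\delta < 1$, which is why the final denominator is $\ln(1/\delta)$ (a positive quantity) and not $\ln\delta$. Combined with the corollary of Lemma~\ref{lem:approxt_n} (which delivers $\epsilon = 1/2^n$ on the $T_i$ side), this will plug into the downstream argument to yield the advertised $1/2^n$ additive accuracy on the schedule $\tee$.
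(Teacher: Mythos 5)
Your proposal is correct and follows essentially the same route as the paper: take $\log_\delta$ of the multiplicative bounds from \cref{lem:approxt_n}, handle the inequality reversal since $\delta<1$, and bound $\ln(1+\epsilon)\le\epsilon$ and $-\ln(1-\epsilon)\le 2\ln(2)\epsilon$ for $\epsilon<1/2$. Your secant-line/convexity justification of the latter inequality is a nice explicit detail that the paper simply asserts, but it does not change the argument.
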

\begin{proof}
    As $T_i$ satisfies \[
T_i^*(1-\epsilon)\leq T_i \leq T_i^*(1+\epsilon)
\] We have,
    \begin{align*}
        T_i &\leq T_i^*(1+\epsilon)\\
        \delta^{t_i} &\leq \delta^{t_i^*}(1+\epsilon)\\
        t_i &\geq t_i^*+\log_{\delta}(1+\epsilon)\\
        t_i &\geq t_i^* - \frac{1}{\log{1/\delta}}\cdot \log(1+\epsilon)\\
        t_i &\geq t_i^* - \frac{1}{\log{1/\delta}}\cdot (\epsilon)
    \end{align*}

Similarly, we have
    \begin{align*}
    T_i &\geq T_i^*(1-\epsilon)\\
        \delta^{t_i} &\geq \delta^{t_i^*}(1-\epsilon)\\
        t_i &\leq t_i^*+\log_{\delta}(1-\epsilon)\\
        t_i &\leq t_i^*-\frac{1}{\log(1/\delta)}\log(1-\epsilon)\\
        t_i &\leq t_i^* - \frac{1}{\log(1/\delta)}(-2\ln(2)\epsilon) \\
        t_i &\leq t_i^* + \frac{2\ln(2)\epsilon}{\log(1/\delta)}
    \end{align*}
    Hence, $
    t_i^*-\frac{2\ln(2)\epsilon}{\log(1/\delta)} \leq t_i \leq t_i^*  +\frac{2\ln(2)\epsilon}{\log(1/\delta)}$.
\end{proof}

\begin{corollary}\label{cor:epsilon}
   
    Setting $\epsilon=\frac{1}{2^n}\cdot \frac{\log(1/\delta)}{2\ln(2)}$ we have,

    $$t_i^*-\frac{1}{2^n}\le t_i \le t_i^*+ \frac{1}{2^n}.$$

\end{corollary}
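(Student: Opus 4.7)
The plan is a direct substitution into Lemma~\ref{lem:optimalti}. That lemma gives the additive error bound
\[
t_i^* - \frac{2\ln(2)\epsilon}{\log(1/\delta)} \;\le\; t_i \;\le\; t_i^* + \frac{2\ln(2)\epsilon}{\log(1/\delta)},
\]
so I would simply plug in the stated choice $\epsilon = \frac{1}{2^n}\cdot\frac{\log(1/\delta)}{2\ln 2}$. The factors $\log(1/\delta)$ and $2\ln 2$ cancel cleanly, leaving the half-width equal to exactly $\frac{1}{2^n}$, which gives the claimed two-sided bound.

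Before invoking the lemma, I would quickly check that the hypothesis $\epsilon < 1/2$ is satisfied for the parameter regime of interest. Since $n \geq 1$, the factor $\frac{1}{2^n}\leq \frac{1}{2}$, and provided $\log(1/\delta)\leq 2\ln 2$ (equivalently $\delta$ not too small), the product stays below $\tfrac12$; for the remaining regime one can either absorb the extra constant into the binary-search precision or observe that the interesting case of the problem is $\delta$ close to $1$ where the bound is trivially valid.

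The only piece of bookkeeping worth spelling out is the chain of error propagations that justifies \emph{achieving} this value of $\epsilon$: Algorithm~\ref{alg:binary search} produces $T_1$ with multiplicative error $\epsilon_1$, Lemma~\ref{lem:approxt_n} amplifies this to a multiplicative error of at most $4^n\epsilon_1$ on each $T_i$, and then Lemma~\ref{lem:optimalti} converts this to an additive error of $\frac{2\ln 2}{\log(1/\delta)}\cdot 4^n\epsilon_1$ on $t_i$. To match the $\epsilon$ of the corollary, one sets $\epsilon_1 = \epsilon/4^n$ in the binary search, which takes $O(n + \log(1/\epsilon_1))$ extra rounds compared to Lemma~\ref{lem:binary}, still polynomial in $n$ and $T\log(1/\delta)$.

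There is no real obstacle here — this is essentially a one-line algebraic substitution. The only subtlety is making sure the reader sees that the constant $\frac{\log(1/\delta)}{2\ln 2}$ in $\epsilon$ is precisely the factor engineered to cancel the $\frac{2\ln 2}{\log(1/\delta)}$ produced by the logarithm in Lemma~\ref{lem:optimalti}.
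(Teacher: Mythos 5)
Your proposal is correct and matches the paper's (implicit) argument exactly: the corollary is just the substitution of $\epsilon=\frac{1}{2^n}\cdot\frac{\log(1/\delta)}{2\ln 2}$ into the bound of \cref{lem:optimalti}, with the constants cancelling to give half-width $\frac{1}{2^n}$. Your extra bookkeeping about choosing the binary-search precision $\epsilon_1=\epsilon/4^n$ (and checking $\epsilon<1/2$) is consistent with, and if anything slightly more careful than, the paper's own chain through \cref{lem:binary} and \cref{lem:approxt_n}.
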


We have thus shown that each $t_i$ can be approximated within an exponentially small error of the optimal $t_i^*$, completing the analysis of our near-optimal strategy for $a = 1$. We next discuss several important implications and behaviors of the solution.

\section{Analysis when $a\neq 1$}\label{sec:assume}
In our analysis so far, we have assumed that $T> n\log_{1/\delta}(2)$, i.e., $a=1$. We now study the case when $T\le  n\log_{1/\delta}(2)$, i.e., $a\neq 1$. In this case,   the minima do not lie in the interior of space $\mathcal{D}$. However, in \cref{thm: unique-minima}, we have shown that there exists a unique minimum in $\mathcal{D}$. Therefore, such a minimum must lie on the boundary of $\mathcal{D}$.

To find the minima when $T \le n\log_{1/\delta}(2)$, we divide our analysis into three parts.
\begin{enumerate}
    \item We first discuss the set of inequalities for which the minima lie at the boundary of $\mathcal{D}$.
    
    \item Next, similar to \cref{sec:decoupling} we describe the relation between the $T_i$'s when $T \le n\log_{1/\delta}(2)$, which will eventually help us find the minima.
    \item  We then provide an algorithm to iteratively check the existence of the minima, if it exists, and how to find it.
\end{enumerate}

\subsection{Finding the right subspace to search for the minima}
This section focuses on determining the value of $a$. In our algorithm, we perform a linear search over possible values of $a$, and in each iteration, we check whether the conditions $$
\frac{a^{n - 2a}}{(a + 1)^{n - 2a}} \geq \delta^T \quad \text{and} \quad \frac{1}{a^2} \cdot \frac{(a\delta^T)^{n + 2 - 2a}}{(a\delta^T + 1)^{n - 2a}} \leq \delta^T $$

are satisfied. \cref{thm:domain} shows that this linear search correctly identifies the value of $a$, i.e., if $a$ satisfies these conditions, then in the optimal schedule, we have $t_i = 0$ and $t_{n - i} = T$ for all $i < a$.

From the definition of space $\mathcal{D}$ in \cref{sec:uniqueminima}, we note that the boundaries of $\mathcal{D}$ are $t_i \geq 0$, $t_i \leq T$ and $t_i \leq t_{i+1}$. But the set of equations obtained so far indicates that for the optimal $\tee$, $t_i<t_{i+1}$. We therefore show that any solution with the added constraint $t_i=t_{i+1}$ also corresponds to a solution to our initial set of equations for $\mathcal{D}$. To this end, with the additional constraint that $t_i=t_{i+1}$, we obtain the following two sets of equations.
    
    \begin{enumerate}
    
        \item Substituting $t_{i+1}$ for $t_{i}$ in $L$ and writing $\nabla L = 0$ where $\nabla$ is taken with respect to $\{t_1, \dots, t_{i-1}, t_{i+1}, \dots, t_{n-1}\}$.
        \item Substituting $t_{i}$ for $t_{i+1}$ in $L$ and writing $\nabla L = 0$ where $\nabla$ is taken with respect to $\{t_1, \dots, t_{i}, t_{i+2}, \dots, t_{n-1}\}$.
    \end{enumerate}
 We denote $\{t_1, \dots, t_{i-1}, t_{i+1}, \dots, t_{n-1}\}$ as our first equation and $\{t_1, \dots, t_{i}, t_{i+2}, \dots, t_{n-1}\}$ as our second equation. We now show the following lemma.
 \begin{lemma}\label{lem:ti=ti1}
     Any solution to the set of equations obtained by solving $\nabla L=0$, under the additional assumption $t_i = t_{i+1}$, is also a solution to the original set of equations obtained by solving $\nabla L = 0$.
 \end{lemma}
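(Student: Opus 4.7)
The plan is to carry out both substitutions via the chain rule, match the resulting reduced equations term by term against the original system $\nabla L=0$, and then interpret the single combined equation at the collapsed index as the stationarity condition that $\nabla L=0$ actually imposes on the face $\{t_i=t_{i+1}\}$ of $\mathcal{D}$.

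First I would record the partial derivative
\begin{equation*}
\frac{\partial L}{\partial t_\ell}=\ln(\delta)\left(\sum_{k<\ell}\delta^{t_\ell-t_k}-\sum_{j>\ell}\delta^{t_j-t_\ell}\right),
\end{equation*}
so the original system $\nabla L=0$ is the $n-1$ equations obtained by setting each of these to zero for $\ell\in\{1,\dots,n-1\}$. For substitution (1), let $\tilde L$ be the function of the $n-2$ variables $\{t_1,\dots,t_{i-1},t_{i+1},\dots,t_{n-1}\}$ obtained by replacing every occurrence of $t_i$ in $L$ by $t_{i+1}$. By the chain rule, $\partial\tilde L/\partial t_\ell$ equals $\partial L/\partial t_\ell$ evaluated at the collapsed point for every $\ell\neq i,i+1$, while the partial with respect to the merged variable equals $\partial L/\partial t_i+\partial L/\partial t_{i+1}$ at that point. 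Substitution (2) is completely symmetric and yields the same set of equations. Consequently any $\tee$ solving the reduced system has $t_i=t_{i+1}$ and satisfies $\partial L/\partial t_\ell=0$ \emph{verbatim} for every $\ell\neq i,i+1$, i.e.\ $n-2$ of the $n-1$ original equations are met literally.

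The remaining task is to interpret the combined condition $\partial L/\partial t_i+\partial L/\partial t_{i+1}=0$ as the honest reduction of $\nabla L=0$ to the face $\{t_i=t_{i+1}\}$: the only admissible perturbation of $(t_i,t_{i+1})$ on that face moves both coordinates together, and the corresponding directional derivative is exactly this sum, so the combined equation is what the stationarity requirement $\nabla L=0$ becomes after restriction. The main obstacle is that a direct calculation gives $\partial L/\partial t_{i+1}-\partial L/\partial t_i=2\ln(\delta)\neq 0$ whenever $t_i=t_{i+1}$, so the two individual interior equations cannot coexist at a coincident point; the proof must therefore be careful to read ``solution to the original $\nabla L=0$'' in the constrained/boundary sense, so that the combined equation produced by the reduced system is recognized as the legitimate replacement for the two collapsed interior equations rather than a strictly weaker condition.
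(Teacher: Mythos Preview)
Your chain-rule reduction is correct and is cleaner than the paper's route. The paper runs both substitutions separately, names the resulting equations $\mathcal{E}^1_j$ and $\mathcal{E}^2_j$, appeals to uniqueness of the minimum to argue the two reduced systems share a common solution $\tee'$, then claims that $\mathcal{E}^1_j$ (resp.\ $\mathcal{E}^2_j$) coincides with the original $\mathcal{E}_j$ at the collapsed point for each $j$ away from the merged index, and finally asserts that patching the two families recovers all of $\mathcal{E}_1,\dots,\mathcal{E}_{n-1}$. Your computation pinpoints exactly why that last assertion cannot hold: at the merged index the reduced partial is the \emph{sum} $\partial L/\partial t_i+\partial L/\partial t_{i+1}$, not either summand, and those summands differ by $2\ln\delta\neq 0$ everywhere on the face, so $\mathcal{E}_i$ and $\mathcal{E}_{i+1}$ can never simultaneously vanish at a point with $t_i=t_{i+1}$. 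The paper's own exclusion ``$j\neq i+1$'' quietly skirts this index but never returns to it; you have located a genuine gap, and the paper's sketch shares it rather than resolving it.

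Your proposed fix---reading the conclusion as stationarity of the restriction of $L$ to the face---is the honest statement your argument actually establishes. Be aware, though, that in this weakened form the lemma is close to tautological and is not obviously strong enough for its intended use in Theorem~\ref{thm:domain}, which needs to rule out the global minimum sitting on a pure $t_i=t_{i+1}$ face. One clean way to recover what is really needed, bypassing the lemma entirely, is to observe that $L$ is invariant under permutations of $(t_1,\dots,t_{n-1})$; hence the ordering constraints $t_k\le t_{k+1}$ may be dropped without changing the minimum value, and on the resulting box the only boundary faces are $t_j=0$ and $t_j=T$.
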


\begin{proof}
    We here present a proof sketch as the proof is similar to the proof of \cref{lem:quardatic}. 

    Let us denote our original set of equations obtained by $\partial(L)/\partial t_j$ in \cref{lem:quardatic} by $\mathcal{E}_j$. When we perform $\partial L/\partial t_j$, in our first set of equations, it looks like we have substituted $t_i$ with $t_{i+1}$ in our set of original equations. Denote by $\mathcal{E}^1_j$ the equation obtained by $\partial (L)/\partial t_j$ in our first set of equations. For example, when $\partial L/\partial t_1 = 0$, we get
    \[
    \frac{\partial(L)}{\partial t_1} = \delta^{t_1}(\frac{1}{\delta^{t_0}})+(\frac{1}{\delta^{t_1}})(\delta^{t_2}+\dots+2\delta^{t_{i+1}}+\dots+\delta^{t_n}).
    \]
    
    Similarly, our second set of equations looks like we have substituted $t_{i+1}$ with $t_{i}$ in our set of original equations.  Denote by $\mathcal{E}^2_j$ the equation obtained by $\partial (L)/\partial t_j$ in our second set of equations. For example, when $\partial L/\partial t_1 = 0$, we get
    \[
    \frac{\partial(L)}{\partial t_1} = \delta^{t_1}(\frac{1}{\delta^{t_0}})+(\frac{1}{\delta^{t_1}})(\delta^{t_2}+\dots+2\delta^{t_{i}}+\dots+\delta^{t_n}).
    \]
    
    Observe that since there is a unique minima for $L$, solving the above two equations must give the same solution (or it can be shown that both equations have no solution). We can now construct $\tee'=(t_1, \cdots, t_i, t_i, \dots, t_{n-1})$, where $\tee' = (t_1, \dots, t_{n-1})$ is the solution obtained by the first set of equations by calculating $\nabla L =0$, once substituting $t_{i+1}$ by $t_i$.

    Let us first talk about the constraints that are obtained by equation $\mathcal{E}_j^1$, where $j \neq i+1$. In our solution $\tee'$, we have that $t_i = t_{i+1}$. But $\mathcal{E}^1_j$ is identical to what we obtain when we set $t_i = t_{i+1}$ in $\mathcal{E}_j$. Hence, if we start with a solution $\tee'$ of $\mathcal{E}^1_j$, it will also be a solution to $\mathcal{E}_j$.

    Let $\tee''=(t_1, \cdots, t_{i+1}, t_{i+1}, \dots, t_{n-1})$, where $(t_1, \dots, t_{n-1})$ is the solution of the second equation. We observe that the solutions that we get by solving the first and second sets of equations must correspond to the minima, and we can easily show that this maps to the minima of $L$ by setting $t_{i+1} = t_i$ or $t_{i} = t_{i+1}$. Since there is a unique minima of $L$, $\tee' = \tee''$. 

    Let us now see the constraints that are obtained by equation $\mathcal{E}_j^2$, where $j \neq i+1$. In our solution $t'$, we have that $t_i = t_{i+1}$. But $\mathcal{E}^2_j$ is identical to what we obtain when we set $t_i = t_{i+1}$ in $\mathcal{E}_j$. Hence, if we start with a solution $t''$ of $\mathcal{E}^2_j$, it will also be a solution to $\mathcal{E}_j$.

    Putting these together, we obtain that $t'$ satisfies all the equations for our original set of equations.
\end{proof}

We now show that an analogous result to \cref{lem:ti=ti1} holds when we set $ t_i=0$ and $t_j = T$, where $j > i$.  Note that at the beginning we can set either $t_1 = 0$ or $t_{n-1} = T$. Observe that this is the most general thing to do, since if we set $t_i = 0$, where $i > 1$, then by the constraints, $t_j = 0$  $\forall j < i$, we will have $t_1 = 0$. Thus the added constraint still encompasses this solution. We now show that as $L$ has a unique optimum,  setting $t_1=0$ leads to $t_{n-1}=T$. We capture this in general by the following lemma.

 \begin{lemma}\label{lem:zero}
     If the optimal solution exists in the subspace $\mathcal{A}$ of $\mathcal D$, obtained by setting $t_1, t_2, \dots, t_i$ to zero and $t_{n-j}, \dots, t_{n-1}$ to zero. Then, there is an optimal solution in the subspace $\mathcal{A}'$ of $\mathcal{D}$ by setting $t_1, t_2, \dots, t_k$ to zero and $t_{n-k}, \dots, t_{n-1}$ to zero, where $k =
     \max(i, j)$.
 \end{lemma}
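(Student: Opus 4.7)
The plan is to exploit the reflective symmetry of the loss function $L$ together with the uniqueness of its minimum (\cref{thm: unique-minima}). Define the reflection map $\phi: \mathcal{D} \to \mathcal{D}$ by $\phi(\tee)_l = T - t_{n-l}$. A direct check shows $\phi$ sends feasible schedules to feasible schedules: it preserves the endpoint conditions ($\phi(\tee)_0 = T - t_n = 0$ and $\phi(\tee)_n = T - t_0 = T$), the range constraints $0 \le \phi(\tee)_l \le T$, and the monotonicity $\phi(\tee)_l \le \phi(\tee)_{l+1}$ (which follows from $t_{n-l-1} \le t_{n-l}$). Moreover, $L$ is invariant under $\phi$, since
\[
L(\phi(\tee)) = \sum_{l > m} \delta^{(T-t_{n-l}) - (T-t_{n-m})} = \sum_{l > m} \delta^{t_{n-m} - t_{n-l}} = \sum_{l' > m'} \delta^{t_{l'} - t_{m'}} = L(\tee),
\]
under the change of index $m' = n-l,\, l' = n-m$.

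Given this symmetry, let $\tee^\ast$ be the unique global minimum of $L$ guaranteed by \cref{thm: unique-minima}. Then $\phi(\tee^\ast)$ is also a global minimum of $L$, and by uniqueness $\phi(\tee^\ast) = \tee^\ast$. Hence the optimum satisfies the pointwise fixed-point relation
\[
t_l^\ast = T - t_{n-l}^\ast \qquad \text{for every } l \in \{0,1,\ldots,n\}.
\]
In particular, $t_l^\ast = 0$ if and only if $t_{n-l}^\ast = T$.

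By hypothesis, $\tee^\ast \in \mathcal{A}$, i.e.\ $t_1^\ast = \cdots = t_i^\ast = 0$ and $t_{n-j}^\ast = \cdots = t_{n-1}^\ast = T$. Assume without loss of generality that $k = \max(i,j) = i$ (the other case is symmetric). From $t_l^\ast = 0$ for $l \le i$, the fixed-point relation gives $t_{n-l}^\ast = T$ for $l \le i$, i.e.\ $t_{n-i}^\ast = \cdots = t_{n-1}^\ast = T$. Combined with the existing equalities $t_1^\ast = \cdots = t_k^\ast = 0$, this shows $\tee^\ast \in \mathcal{A}'$, as claimed.

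The main obstacle is simply to verify the symmetry of $L$ under $\phi$ cleanly, after which the argument is a one-line application of uniqueness; no additional calculus or optimization machinery is needed beyond what has already been established.
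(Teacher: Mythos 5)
Your proof is correct and follows essentially the same approach as the paper: both exploit the reflection symmetry $\tee \mapsto T-\tee$ (with index reversal), the invariance of $L$ under it, and the uniqueness of the global minimum from \cref{thm: unique-minima}. Your write-up is in fact somewhat more careful than the paper's — you state the reflection with explicit index reversal and derive the fixed-point relation $t_l^\ast = T - t_{n-l}^\ast$ directly, whereas the paper phrases the same idea as a contradiction without spelling out these details.
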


 \begin{proof}
     We show this by contradiction. Let us assume without loss of generality that $i< j$. Assume that a solution exists in the subspace $\mathcal{A}$, but not in $\mathcal{A}'$. Then this solution has a non-zero entry among $(t_{i}, \cdots, t_k)$, say at the $x^{th}$ position. 
     Let us call this solution $\tee = (t_1, t_2, \dots, t_{n-1})$, where $t_1 = \dots = t_i = 0$ and $t_{n-j} = \dots = t_{n-1} = 0$.

     For a constant $T$, let $T-\tee = (T-t_1, \dots, T-t_n)$. We call this solution $\tee'=(T-t_1, \dots, T-t_n)$.

     Observe that $L(\tee') = L(\tee)$. Therefore, if $\tee$ is a global minimum to $L$, then $\tee'$ is also a global minimum to $L$. But this contradicts the existence of unique global minima.
 \end{proof}

Let $\mathcal{D}_i$ be the subspace obtained by setting $t_i = 0$ and $t_{n-i} = T$. Similarly, define $\mathcal{D}_{i+1}$. Then, we have the following theorem:

\begin{theorem}\label{thm:domain}
    Consider the subspaces $\mathcal{D}_1, \mathcal{D}_2, \mathcal{D}_3, \dots, \mathcal{D}_{n/2}$. For each $i\in \{1, \dots, n/2\}$, one of the following hold:
    \begin{enumerate}
        \item The solution lies in the interior of some $\mathcal{D}_j$, where $j \leq i$.
        \item The solution lies on the boundary of $\mathcal{D}_1,\dots, D_{i}$ and in $\mathcal{D}_{i+1}$. 
    \end{enumerate}
\end{theorem}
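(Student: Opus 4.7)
The plan is to argue by strong induction on $i$, using the uniqueness of the global minimum $\tee^*$ of $L$ on $\mathcal{D}$ (\cref{thm: unique-minima}) together with the symmetry property supplied by \cref{lem:zero}. The underlying picture is that the subspaces form a nested chain $\mathcal{D}_1 \supseteq \mathcal{D}_2 \supseteq \dots$ whose defining constraints are paired symmetrically on the two sides ($t_i = 0$ paired with $t_{n-i} = T$), and at each level either the minimum already lives in the relative interior, or it is forced by symmetry to descend one level deeper.

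For the base case $i = 1$, I would split on whether $\tee^*$ lies in the relative interior of $\mathcal{D}_1$ (immediately yielding case~1 with $j = 1$) or on its relative boundary. The relative boundary means that, beyond the equalities defining $\mathcal{D}_1$, at least one of $t_2^* = 0$ or $t_{n-2}^* = T$ holds. \cref{lem:zero} then forces the symmetric partner to hold as well, placing $\tee^* \in \mathcal{D}_2$, which is case~2.

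The inductive step is the same pattern lifted one level. Assuming the dichotomy at level $i-1$: if case~1 already held with some $j \leq i-1$, then the same $j$ discharges the conclusion at level $i$. Otherwise case~2 at level $i-1$ places $\tee^* \in \mathcal{D}_i$, and I would rerun the base-case split: either $\tee^*$ lies in the relative interior of $\mathcal{D}_i$ (case~1 with $j = i$), or it lies on its relative boundary, in which case at least one of $t_{i+1}^* = 0$ or $t_{n-i-1}^* = T$ is forced, and \cref{lem:zero} propagates the collapse to the symmetric partner, placing $\tee^* \in \mathcal{D}_{i+1}$ (case~2 for $i$). Along the way, \cref{lem:ti=ti1} is invoked to ensure that when several consecutive $t_j$'s coincide (for instance $t_{i-1} = t_i = 0$), the critical-point system of the restricted optimization is consistent with the ambient one, so appealing to the same unique $\tee^*$ inside each restricted subspace is legitimate.

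The main obstacle I expect is the careful handling of the relative boundary in the inductive step: after fixing $t_1,\dots,t_i = 0$ and $t_{n-i},\dots,t_{n-1} = T$, the $t \mapsto T - t$ symmetry that underlies \cref{lem:zero} must continue to act on the remaining free coordinates. This is the case because $L$ is manifestly invariant under this involution and the previously imposed restrictions are themselves paired symmetrically, so the lemma applies verbatim to the restricted problem. A minor bookkeeping concern is the terminal value $i = n/2$, where $\mathcal{D}_{i+1}$ degenerates to a single point; this is absorbed as a trivial terminating step of the induction.
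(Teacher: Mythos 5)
Your overall induction skeleton matches the paper's: induct on $i$, use the uniqueness of the global minimum, and use \cref{lem:zero} to propagate a tight constraint $t_j=0$ to its symmetric partner $t_{n-j}=T$ (and down to $t_{i+1}=0$, $t_{n-i-1}=T$). However, there is a genuine gap at the pivotal step. You assert that if the minimizer lies on the relative boundary of $\mathcal{D}_i$, then ``at least one of $t_{i+1}^*=0$ or $t_{n-i-1}^*=T$ holds.'' That is not what the relative boundary of $\mathcal{D}_i$ looks like: after fixing the first $i$ coordinates to $0$ and the last $i$ to $T$, the remaining feasible set is the chain polytope $0\le t_{i+1}\le \cdots \le t_{n-i-1}\le T$, whose boundary also contains the faces $t_j=t_{j+1}$ with $0<t_j<T$, i.e., two ads coinciding strictly inside the interval without any coordinate hitting $0$ or $T$. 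Nothing in your argument rules out the minimum sitting on such a face, and without ruling it out, neither \cref{lem:zero} nor the descent into $\mathcal{D}_{i+1}$ can be triggered.

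This is precisely where the paper's proof does its real work: it invokes \cref{lem:ti=ti1} to show that a minimizer whose only newly active constraints are of the form $t_j=t_{j+1}$ would also solve the original first-order system $\nabla L=0$, and hence cannot be a genuinely new boundary optimum (it would contradict the situation in which no interior critical point exists in the current subspace, given the unique minimum of \cref{thm: unique-minima}). Consequently some constraint of the form $t_j=0$ or $t_k=T$ with $j,k>i$ must be active, and only then does \cref{lem:zero} yield $t_{i+1}=0$ and $t_{n-i-1}=T$. In your write-up \cref{lem:ti=ti1} is invoked only for a tangential bookkeeping purpose (consistency of the restricted critical-point systems), so the case of coinciding interior times is left unaddressed; to repair the proof you need to add the elimination of the $t_j=t_{j+1}$ faces along the lines above.
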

\begin{proof}
    Let us prove the theorem by induction on $i$. For $i=1$, we have demonstrated in \cref{sec:uniqueminima}, that the solution lies in $\mathcal{D}_1$. Hence, if it does not lie in the interior of $\mathcal{D}_1$, then it lies at the boundary of $\mathcal{D}_1$. In \cref{lem:ti=ti1}, we have shown that if the minima of $L$ only satisfy equations of the form $t_i=t_{i+1}$, then our original set of equations also has a solution. But this cannot be the case as we have seen that $\mathcal{E}_j$ ($t_i \le t_{i+1}$) has a solution. Hence, the minima for $L$ must also satisfy some equations of the form $t_j = 0$ or/and $t_k = T$ for some $j, k > 1$. However, if it satisfies some equation of the form $t_j=0$ for some $j>i$, then by \cref{lem:zero} it also satisfies $t_{2}=0$, and $t_{n-2} = T$ and so on.

    Now, let us assume that the theorem holds upto $i-1$ and demonstrate it for $i$. By the induction hypothesis, we know that either the solution lies in the interior of $\mathcal{D}_1, \dots, \mathcal{D}_{i-1}$ or it lies in $\mathcal{D}_i$. Let us consider the case where it lies in $\mathcal{D}_i$. If it does not lie in the interior of $\mathcal{D}_i$, then it lies on some boundary of $\mathcal{D}_i$. If the solution lies at the boundary of $\mathcal{D}_i$, then it must satisfy some of the equations that form the boundary. As the minima lies in $\mathcal{D}_i$, it satisfies $t_i = 0$ and $t_{n-i} = 0$, using \cref{lem:zero}.
    
    Again, by \cref{lem:ti=ti1}, we know that if the minima to $L$ only satisfy equations of the form $t_i=t_{i+1}$ and not any other equation, then our original set of equations has a solution. Hence, the minima of $L$ must satisfy some equations of the form $t_j = 0$ or/and $t_k = T$ for some $j, k > i$. However, if it satisfies the equation of the form $t_j=0$ for some $j>i$, it also satisfies $t_{i+1}=0$. Also using  \cref{lem:zero}, it must satisfy $t_{n-i-1} = T$. Similarly by \cref{lem:zero} if for some $j > i$, our equation satisfies $t_{n-i-1} = T$, then it also satisfies, $t_{i+1} = 0$.
    
    Therefore, by these inequalities, the minima must satisfy $t_{i+1}=0$ and $t_{n-i-1} = T$. This proves our lemma.
\end{proof}

In the next section, we show how the $T_i$'s depend on one another, similar to \cref{sec:decoupling}, where we expressed each $T_i$ in terms of $T_1$.

\subsection{Dependency of $T_i$}

In this section, we show how the $T_i$'s relate to one another under the new set of equations. While many of the proofs mirror those in ~\cref{sec:decoupling}, we now extend them to a more general setting by incorporating the constraint $t_a = 0$ for different values of $a$.

As we have already seen, if the solution of $L$ does not lie in the interior of $\mathcal{D}$, then it must lie in the subspace with the added constraints $t_a=0$ and $t_{n-a} = T$ for some $a\in \{1,2,\dots,n/2\}$. To this end, we show how $T_i$ can be expressed in terms of $T_a$.
 
 \begin{lemma}\label{lem:ta}
     Assuming $t_0 = \dots = t_{a-1} = 0$, we can write $T_{a+i}$ in terms of $T_a$ as 
     \begin{equation}
     \label{eq:gen t_i}
         T_{a+i} = \frac{1}{a}\cdot \frac{(aT_a)^{i+1}}{(aT_a+1)^{i}}.
     \end{equation}
 \end{lemma}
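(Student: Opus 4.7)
The plan is to mirror the proof of Theorem \ref{thm:T_i expression} but generalized to the setting where the first $a$ ads are pinned at time $0$ (so $T_0 = T_1 = \dots = T_{a-1} = 1$) and the last $a$ ads are pinned at time $T$. The proof proceeds in three stages: (i) derive the KKT/gradient conditions for the free variables $t_a, t_{a+1}, \dots, t_{n-a}$, yielding a quadratic recurrence for $T_{a+k}$; (ii) compute the partial sums $H_{a+k}$ in closed form using the induction hypothesis; (iii) plug everything into the quadratic formula and simplify.

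First, I would differentiate $L = \sum_{j<i}\delta^{t_i-t_j}$ with respect to $t_{a+k}$ for $k \geq 0$, and mimic the derivation in Lemma \ref{lem:quardatic}. The stationarity condition gives
\[
H_{a+k-1}\, T_{a+k}^2 \;=\; \sum_{\ell > a+k} T_\ell,
\]
where $H_{a+k-1} = \sum_{j=0}^{a+k-1} 1/T_j$. Because $T_0 = \dots = T_{a-1} = 1$, this sum starts with a contribution of $a$, so in particular $H_{a-1} = a$. Subtracting the analogous equation for $k-1$ from the one for $k$ yields, exactly as in Lemma \ref{lem:Ti}, the quadratic
\[
H_{a+k-1} T_{a+k}^2 + T_{a+k} - H_{a+k-2} T_{a+k-1}^2 = 0,
\]
whose positive root is $T_{a+k} = \bigl(-1 + \sqrt{1 + 4 H_{a+k-2} H_{a+k-1} T_{a+k-1}^2}\bigr)/(2 H_{a+k-1})$.

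Next I would proceed by induction on $i$. The base case $i=0$ gives $T_a = \frac{1}{a}\cdot\frac{aT_a}{1} = T_a$, which is trivially true. For the inductive step, assuming the formula holds up to $i-1$, I would compute $H_{a+k}$ for $k \leq i-1$ via a geometric series: with $r = (aT_a+1)/(aT_a)$, the induction hypothesis gives $1/T_{a+j} = a(aT_a+1)^j/(aT_a)^{j+1}$, and summing geometrically over $j = 0, \dots, k$ produces the clean identity $H_{a+k} = a \cdot r^{k+1}$. Substituting $H_{a+i-2} = a r^{i-1}$, $H_{a+i-1} = a r^i$, and the inductive expression for $T_{a+i-1}^2$ into the discriminant, the factors of $a^2$ and powers of $r$ collapse to give $1 + 4 H_{a+i-2} H_{a+i-1} T_{a+i-1}^2 = 1 + 4 aT_a(aT_a+1) = (2aT_a+1)^2$. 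Taking the positive square root and simplifying yields exactly $T_{a+i} = \frac{1}{a}\cdot\frac{(aT_a)^{i+1}}{(aT_a+1)^i}$, completing the induction.

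The only real obstacle is bookkeeping: keeping track of the offset by $a$ in the indices and correctly handling the contribution of the $a$ pinned terms to $H_{a+k}$. Once the closed form $H_{a+k} = a \cdot r^{k+1}$ is identified, the rest of the algebra collapses in the same elegant way as the $a=1$ case, so no new ideas beyond those in Section \ref{sec:decoupling} are needed.
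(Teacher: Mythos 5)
Your proposal is correct and follows essentially the same route as the paper: it uses the quadratic recurrence from Lemma \ref{lem:Ti}, proves the closed form by induction with base case $i=0$, computes $H_{a+k} = a\left(\frac{aT_a+1}{aT_a}\right)^{k+1}$ via a geometric series (using that the $a$ pinned ads contribute $a$ to the sum), and collapses the discriminant to $(2aT_a+1)^2$. The algebraic simplifications match the paper's computation step for step, so no substantive difference to report.
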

\begin{proof}
Using \cref{lem:Ti}, we have

    \begin{equation*}
        T_i = \frac{-1+\sqrt{1+4H_{i-1}H_{i-2}T_i^2}}{2H_{i-1}}
    \end{equation*}

    As a base case, when $i=0$ we have, $ \frac{1}{a}\cdot \frac{(aT_a)^1}{(aT_a+1)^{0}}=T_a$. 
    
    As the first $a$ terms are zero, using the induction hypothesis, we now calculate $H_{a+k}$, which is the sum of the first $a+k+1$ terms.
    \begin{align*}
        H_{a+k} &= \sum_{j=0}^{k} \frac{1}{T_j}\\
        &= a+a\sum_{j = 0}^{k}\frac{(1+aT_1)^{j}}{(aT_1)^{j+1}} \hspace{2.6cm}\text{(Using induction hypothesis)}\\
        &= a+a\sum_{j = 1}^{k+1}\frac{(1+aT_1)^{j-1}}{(aT_1)^{j}} \hspace{2cm}\\
        &= a+ \frac{1}{T_1}\cdot \frac{(\frac{1+aT_1}{aT_1})^{k+1}-1}{\frac{1+aT_1}{aT_1}-1}\hspace{2cm}\text{(Using geometric progression formula)}\\
        &= a+\frac{1}{T_1}\cdot \frac{(\frac{1+aT_1}{aT_1})^{k+1}-1}{\frac{1}{aT_1}}\\
        &= a+ a\left(\frac{1+aT_a}{aT_1}\right)^{k+1}-a\\
        &= a\left(\frac{1+aT_a}{aT_a}\right)^{k+1}
    \end{align*}

Now, we have
\begin{align*}
        T_{a+i} &= \frac{-1+\sqrt{1+4H_{a+i-2}H_{a+i-1}(T_{a+i-1}^2)}}{2H_{a+i-1}}\\
        &= \frac{-1 + \sqrt{1+4\cdot a(\frac{(1+aT_1)}{aT_1})^{i-1}\cdot a(\frac{(1+aT_1)}{aT_1})^{i}\cdot (\frac{1}{a}\frac{(aT_1)^{i}}{(1+aT_1)^{i-1}})^2}}{2a(\frac{1+aT_1}{aT_1})^{i}}\\
        &= (aT_1)^{i}\cdot \frac{-1 + \sqrt{1 + 4(aT_1+1)(aT_1)}}{2a(1+aT_1)^{i}}\\
        &= (aT_1)^{i}\cdot \frac{-1 +2aT_1 + 1}{2a(1+aT_1)^{i}}\\
        &= \frac{1}{a}\frac{(aT_1)^{i+1}}{(1+aT_1)^{i}}.
    \end{align*}

\end{proof}
Next, similar to \cref{lem:T_nexpression}, we show how to express $T_{n-a+1}$ in terms of $T_a$.
\begin{lemma}\label{lem:endcondition}
     Assuming $t_0 = \dots = t_{a-1} = 0$, we have
     \begin{equation}\label{eq:general t_n}
    T_{n-a+1}=\frac{1}{a^2}\frac{(aT_a)^{n-2a+2}}{(1+aT_a)^{n-2a}}.
\end{equation}
 \end{lemma}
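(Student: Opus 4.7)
The plan is to mirror the derivation of~\cref{lem:T_nexpression}, but applied at the boundary index $a$ rather than at index $1$, taking into account the cluster of $a$ ads pinned at $0$ and the symmetric cluster of $a$ ads pinned at $T$ (which is the setting guaranteed by~\cref{thm:domain}, so in particular $t_{n-a+1}=\cdots=t_n=T$).

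First, I would invoke the stationarity condition from~\cref{lem:quardatic} at index $i=a$, which gives
\[
H_{a-1}\, T_a^{2} \;=\; \sum_{j=a+1}^{n} T_j.
\]
Since $t_0=\cdots=t_{a-1}=0$ forces each corresponding $T_j=\delta^{0}=1$, we have $H_{a-1}=a$. The boundary constraint $t_{n-a+1}=\cdots=t_n=T$ makes the last $a$ terms of the sum all equal to $T_{n-a+1}$, contributing $a\, T_{n-a+1}$ collectively. Splitting the sum accordingly gives
\[
a\, T_a^{2} \;=\; \sum_{i=1}^{n-2a} T_{a+i} \;+\; a\, T_{n-a+1}.
\]

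Next, I would substitute the closed form from~\cref{lem:ta}, namely $T_{a+i}=\tfrac{1}{a}\cdot\tfrac{(aT_a)^{i+1}}{(aT_a+1)^{i}}$, into the finite sum on the right-hand side, and evaluate it as a geometric series with common ratio $\tfrac{aT_a}{aT_a+1}$. Using the standard geometric-sum formula, the inner sum simplifies to
\[
\sum_{i=1}^{n-2a} T_{a+i} \;=\; a\, T_a^{2} \;-\; \frac{(aT_a)^{n-2a+2}}{a\,(aT_a+1)^{n-2a}}.
\]
Plugging this back cancels the $a T_a^{2}$ terms on both sides, leaving
\[
a\, T_{n-a+1} \;=\; \frac{(aT_a)^{n-2a+2}}{a\,(aT_a+1)^{n-2a}},
\]
which rearranges to the claimed identity.

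The main obstacle is just the bookkeeping in the geometric-series simplification: tracking the index shifts so that the resulting expression has the form $aT_a^{2}$ minus a clean single remainder term, which is what enables the $aT_a^{2}$ cancellation. This is structurally identical to the computation inside the proof of~\cref{lem:T_nexpression}, with the factor $T_1$ there replaced throughout by $aT_a$ here (reflecting the $a$-fold multiplicity at the left boundary via $H_{a-1}=a$, $H_{a+k}=a\bigl(\tfrac{1+aT_a}{aT_a}\bigr)^{k+1}$). Once this simplification is executed, the lemma follows in a single line.
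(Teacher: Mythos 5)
Your proposal is correct and matches the paper's proof essentially step for step: both use the stationarity condition $\partial L/\partial t_a = 0$ with $t_0=\cdots=t_{a-1}=0$ (giving $H_{a-1}=a$) and $t_{n-a+1}=\cdots=t_n=T$, substitute the closed form of \cref{lem:ta} into the interior sum, and collapse the geometric series to cancel the $aT_a^2$ term. The only cosmetic difference is that you carry $T_{n-a+1}$ through to the end while the paper writes it as $\delta^T$, which is the same quantity.
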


\begin{proof}
    Similar to how we obtained ~\cref{eq:t12}, we use $\partial L/\partial t_a = 0$, and the fact that $t_0=\dots=t_{a-1}=0$ and $t_{n-a+1} = \dots = t_n = T$, we get the following

\[
T_{a}^2 = \frac{1}{a}(T_{a+1}+\dots+T_{n-a}+a\delta^T) 
\]
Hence,
\[
aT_{a}^2  - T_{a+1}+\dots+T_{n-a} = a\delta^T
\]
Now, using Lemma \ref{lem:ta} we have
\begin{align*}
    T_{a+1} + T_{a+2} + \cdots + T_{n-a}&=\frac{1}{a}(\frac{(aT_a)^2}{(1+aT_a)}+ \frac{(aT_a)^3}{(1+aT_a)^2}+\cdots +\frac{(aT_a)^{n-2a+1}}{(1+aT_a)^{n-2a}})\\
    &= \frac{1}{a}(\frac{(aT_a)^2}{(1+aT_a)}\cdot \frac{1-(\frac{aT_a}{1+aT_a})^{n-2a}}{1-\frac{aT_a}{1+aT_a}})\\
    &= a(T_a)^2- \frac{1}{a}\frac{(aT_a)^{n-2a+2}}{(1+aT_a)^{n-2a}} 
\end{align*}

Thus, putting everything together, we have 
\begin{equation}\label{eq:tan}
    \frac{(aT_a)^{n-2a+2}}{(1+aT_a)^{n-2a}} = a^2\delta^T.
\end{equation}
\end{proof}

\begin{corollary}
     Assuming $t_0 = \dots = t_{a-1} = 0$, we have
     \begin{equation}\label{eq:general t_n for t}
    \delta^T=\frac{1}{a^2}\frac{(a\delta^{t_a})^{n-2a+2}}{(1+a\delta^{t_a})^{n-2a}}.
    \end{equation}
\end{corollary}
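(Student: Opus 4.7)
The plan is to derive this corollary as an immediate consequence of \cref{lem:endcondition} and the definition $T_i = \delta^{t_i}$. Specifically, \cref{lem:endcondition} already establishes that
\[
T_{n-a+1} \;=\; \frac{1}{a^2}\,\frac{(aT_a)^{n-2a+2}}{(1+aT_a)^{n-2a}},
\]
and en route to that identity the proof actually produces \cref{eq:tan}, namely $\frac{(aT_a)^{n-2a+2}}{(1+aT_a)^{n-2a}} = a^2\delta^T$. So the entire content of the corollary is already packaged in the lemma; only a change of variables remains.

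Concretely, I would proceed in two short steps. First, I would invoke the boundary convention $t_{n-a+1} = \cdots = t_n = T$ (equivalently, $a$ of the ads are placed at $T$), which by the definition $T_i = \delta^{t_i}$ forces $T_{n-a+1} = \delta^T$. Substituting this into the conclusion of \cref{lem:endcondition} yields exactly $\delta^T = \tfrac{1}{a^2}\tfrac{(aT_a)^{n-2a+2}}{(1+aT_a)^{n-2a}}$. Second, I would replace $T_a$ by $\delta^{t_a}$ on the right-hand side, which is just unfolding notation, to obtain
\[
\delta^T \;=\; \frac{1}{a^2}\,\frac{\bigl(a\delta^{t_a}\bigr)^{n-2a+2}}{\bigl(1+a\delta^{t_a}\bigr)^{n-2a}},
\]
which is the claimed identity.

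There is no genuine obstacle here; the corollary is a bookkeeping restatement of \cref{lem:endcondition} in the $t_i$-variables rather than the $T_i$-variables. The only mild care needed is to verify that the hypothesis $t_0 = \cdots = t_{a-1} = 0$ of the lemma matches the hypothesis of the corollary (it does verbatim) and that the mirrored boundary condition $t_{n-a+1} = \cdots = t_n = T$ used in the proof of \cref{lem:endcondition} is indeed in force in the subspace under consideration, as guaranteed by \cref{lem:zero} and the characterization of the subspaces $\mathcal{D}_a$ in \cref{thm:domain}. Once these are noted, the substitution $T_a \mapsto \delta^{t_a}$ delivers the corollary in a single line.
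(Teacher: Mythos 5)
Your proposal is correct and matches the paper's intent: the corollary is stated without a separate proof precisely because \cref{eq:tan} in the proof of \cref{lem:endcondition} already gives $\frac{(aT_a)^{n-2a+2}}{(1+aT_a)^{n-2a}} = a^2\delta^T$, and the corollary is just that identity rewritten with $T_a = \delta^{t_a}$ and $T_{n-a+1} = \delta^T$. Your added remark about the boundary condition $t_{n-a+1} = \cdots = t_n = T$ being in force is the same convention the lemma's proof already uses, so nothing genuinely new is needed.
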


Having derived the relation between $T_{a+i}$ and $T_a$, we next use them to approximate $T_a$ and provide our near-optimal algorithm.

\subsection{Checking minima and approximation}
In this section, we derive the two conditions used in ~\cref{alg:aT_a getting} to find the value of $a$. We then provide the fleshed-out version of ~\cref{alg:aT_a getting} in ~\cref{alg:binary search with checking} and derive our main theorem for the general setting ($a\neq1$).

Similar to \cref{sec:approx}, we first define 
\begin{equation}
    g(T_a)= \frac{1}{a^2}\cdot \frac{(aT_a)^{n-2a+2}}{(1+aT_a)^{n-2a}}
\end{equation}

\begin{lemma}
\label{lem: gen montonic increasing}
$g(T_a)$ is strictly monotonically increasing with $T_a\in [\delta^T, 1]$.
\end{lemma}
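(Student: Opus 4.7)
The plan is to mirror the argument of \cref{lem:strictlyinc} (which handled the special case $a = 1$), but to carry it out for the general function $g(T_a) = \frac{1}{a^2}\cdot \frac{(aT_a)^{n-2a+2}}{(1+aT_a)^{n-2a}}$. The cleanest way is to make the substitution $u = aT_a$, so that up to the constant factor $1/a^2$ we only need to show that $f(u) := \frac{u^{n-2a+2}}{(1+u)^{n-2a}}$ is strictly increasing in $u$ on the interval $u \in [a\delta^T, a]$. Since $u$ is a strictly increasing linear function of $T_a$, monotonicity in $u$ immediately implies monotonicity in $T_a$.

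To check monotonicity, I would differentiate $f(u)$ using the quotient (or product) rule and simplify:
\begin{align*}
\frac{d}{du}\,\frac{u^{n-2a+2}}{(1+u)^{n-2a}}
&= \frac{(n-2a+2)\,u^{n-2a+1}(1+u)^{n-2a} - (n-2a)\,u^{n-2a+2}(1+u)^{n-2a-1}}{(1+u)^{2(n-2a)}} \\
&= \frac{u^{n-2a+1}\bigl[(n-2a+2)(1+u) - (n-2a)\,u\bigr]}{(1+u)^{n-2a+1}} \\
&= \frac{u^{n-2a+1}\bigl[(n-2a+2) + 2u\bigr]}{(1+u)^{n-2a+1}}.
\end{align*}
Because the algorithm restricts the search to $a \in [n/2]$, we have $n - 2a \ge 0$, hence $n - 2a + 2 > 0$. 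Combined with $u = aT_a > 0$ (since $T_a \in [\delta^T, 1] \subset (0,1]$), both the numerator and denominator are strictly positive, so the derivative is strictly positive on the relevant range.

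The only step that requires any care is the algebraic simplification of the bracketed factor $(n-2a+2)(1+u) - (n-2a)u$ to $(n-2a+2) + 2u$; this is a short cancellation but is where a sign error would kill the argument. Once this simplification is in hand, strict positivity of the derivative on $(0, \infty)$ is immediate, and therefore $g(T_a)$ is strictly monotonically increasing on $T_a \in [\delta^T, 1]$, completing the proof. Note that this generalizes \cref{lem:strictlyinc} in the natural way, with the $a = 1$ case recovered by setting $u = T_1$.
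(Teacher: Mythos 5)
Your proposal is correct and follows essentially the same route as the paper: differentiate, combine the two quotient-rule terms so the bracket simplifies to $(n-2a+2)+2aT_a$, and conclude strict positivity of the derivative since $a\in[n/2]$ gives $n-2a\ge 0$ and $T_a>0$. The substitution $u=aT_a$ is only a cosmetic tidying of the same computation (and in fact cleans up the chain-rule/constant factors that the paper's displayed derivative glosses over).
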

\begin{proof}
    \begin{align*}
     \frac{d}{dT_a}\frac{1}{a^2}\cdot \frac{(aT_a)^{n-2a+2}}{(1+T_a)^{n-2a}} &=   \frac{(n-2a+2)(aT_a)^{n-2a+1}}{(1+aT_a)^{n-2a}} - (n-2a)\frac{(aT_a)^{n-2a+2}}{(1+aT_a)^{n-2a+1}}\\
     &=\frac{(n-2a+2)(aT_a)^{n-2a+1}+2(aT_a)^{n-2a+2}}{(1+aT_a)^{n-1}}\\
     &> 0
    \end{align*}
\end{proof}

\begin{corollary}
     $g(T_a) = \delta^T$ admits at most one solution. 
\end{corollary}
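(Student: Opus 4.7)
The plan is to derive this corollary directly from the preceding \cref{lem: gen montonic increasing}, which established that $g(T_a)$ is strictly monotonically increasing on the interval $T_a \in [\delta^T, 1]$. The key observation is that strict monotonicity immediately implies injectivity: if $g$ were to take the value $\delta^T$ at two distinct points $T_a^{(1)} < T_a^{(2)}$ in the domain, then strict monotonicity would force $g(T_a^{(1)}) < g(T_a^{(2)})$, contradicting $g(T_a^{(1)}) = g(T_a^{(2)}) = \delta^T$.

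Concretely, I would structure the proof as follows. First, recall from \cref{lem: gen montonic increasing} that $g'(T_a) > 0$ throughout $[\delta^T, 1]$, so $g$ is strictly increasing on this interval. Next, argue by contradiction: suppose there exist two solutions $T_a^{(1)} \neq T_a^{(2)}$ with $g(T_a^{(1)}) = g(T_a^{(2)}) = \delta^T$. Without loss of generality $T_a^{(1)} < T_a^{(2)}$, and by strict monotonicity $g(T_a^{(1)}) < g(T_a^{(2)})$, which is the desired contradiction. Hence at most one solution exists.

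There is no substantial obstacle here; the corollary is essentially a one-line consequence of strict monotonicity. The only subtlety worth noting is that the corollary states ``at most one'' rather than ``exactly one'': existence is not guaranteed simply from monotonicity, and indeed existence has to be handled separately (analogous to how \cref{lem:T>} in the $a=1$ case established existence via the intermediate value theorem using the sign change of $g(T_a) - \delta^T$ at the endpoints of $[\delta^T, 1]$). So the proof should carefully claim only uniqueness and leave existence to be verified by the linear search over $a$ described in \cref{alg:aT_a getting}, which checks exactly the boundary conditions that guarantee a sign change of $g(T_a) - \delta^T$ within the feasible interval.
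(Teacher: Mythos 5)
Your proposal is correct and matches the paper's approach: the corollary is stated as an immediate consequence of the strict monotonicity established in \cref{lem: gen montonic increasing}, exactly the injectivity argument you give. Your remark that only uniqueness (not existence) follows, with existence deferred to the boundary checks in \cref{alg:aT_a getting}, is also consistent with how the paper handles the general case.
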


We have shown in ~\cref{lem:endcondition} that any solution that satisfies ~\cref{eq:tan} is the optimum solution.
To check if ~\cref{eq:tan}  admits a solution, it suffices to check the value of $g(T_a)$ at $t_a = 0$ and $t_a = T$. Precisely, we need to check if
\begin{itemize}
    \item at $t_a = 0$
\[
\frac{a^{n-2a}}{(a+1)^{n-2a}} \geq \delta^T
\]
\item  at $t_a = T$
\[
\frac{1}{a^2}\cdot\frac{(a\delta^T)^{n+2-2a}}{(a\delta^T+1)^{n-2a}}\leq \delta^T
\]
\end{itemize}
The two above inequalities can be verified in \( O(\ln(nT)) \) time. According to ~\cref{thm:domain}, domain reduction, which involves checking whether the inequalities hold and incrementing \( a \) if they do not, needs to be performed at most \( n/2 \) times.
If we find that a feasible solution to these inequalities appears after performing domain reduction up to \( a - 1 \) times (for some \( a \leq n/2 \)), then, by ~\cref{thm:domain}, this solution lies in the interior of the domain \( D_a \).

If we have to do domain reduction $n/2$ times, then there is only one point in the feasible region i.e., $t_i = 0$ for $i \leq n/2$ and $t_i = T$ for $i > n/2$ (if $n$ is odd, then $t_{n/2} = T/2$).

Once we find a value of \( a \) that satisfies both the conditions, we perform binary search using ~\cref{alg:binary search with checking}, which runs in \( O\left((3n + T\log_2(1/\delta))\ln(nT)\right) \) time. This yields an approximate solution to \( T_a \) with accuracy \( 1/8^n \), following an analysis similar to that in ~\cref{lem:binary}. As in ~\cref{lem:corresponding}, we can show that a solution to $g(T_a) = \delta^T$ corresponds to a solution to $\nabla L = 0$, and thus a minimum to $L$. 

After computing \( T_a \), we can recover the value of \( T_i \)'s using ~\cref{lem:ta} in \( O(n\log(nT)) \) time. Applying a similar analysis as in ~\cref{lem:approxt_n}, we obtain all \( T_i \) with accuracy \( 1/4^n \). Finally, using these approximate values of \( T_i \), we compute the optimal schedule \( t_i \) in an additional \( O(n\ln(nT)\ln(1/\delta)) \) time, with an additive error of at most \( 
\frac{1}{2^n} \).

Note that ~\cref{alg:binary search with checking} is the complete description of ~\cref {alg:aT_a getting}. We now prove one of our main theorem.

\begin{algorithm}
    \caption{Binary Search with checking}
    \label{alg:binary search with checking}
    \begin{algorithmic}[1]
    \STATE $low \gets \delta^T$
    \STATE $high \gets 1$
    \STATE $n_{rounds} \gets 3n+T\log{(1/\delta)}$
    \STATE $mid \gets (low + high)/2$
    \STATE $curr \gets 1$
    \STATE $a\gets 1$
    \STATE $ctr \gets 1$
    \WHILE{$ctr \leq n/2$}
        \IF{$\frac{a^{n-2a}}{(a+1)^{n-2a}} < \delta^T$}
            \STATE $ctr++$
        \ELSIF{$\frac{1}{a^2}\cdot\frac{(a\delta^T)^{n+2-2a}}{(a\delta^T+1)^{n-2a}}> \delta^T$}
            \STATE $ctr++$
        \ELSE   
            \STATE $a\gets ctr$
            \STATE break
        \ENDIF
    \ENDWHILE
    \IF{$ctr\geq n/2$}
        \RETURN $(n/2, \bot)$
    \ENDIF
    \STATE $h(T_a)= \frac{1}{a^2}\cdot \frac{(aT_a)^{n-2a+2}}{(1+aT_a)^{n-2a}}$
    \WHILE{$curr \leq n_{rounds}$}
        \STATE $mid \gets  (low + high) / 2 $
        \IF{$h(mid) = 0$}
            \RETURN $mid$
        \ELSIF{$h(mid) < 0$}
            \STATE $low \gets mid$
        \ELSE
            \STATE $high \gets mid$
        \ENDIF
        \STATE $curr \gets curr + 1$
    \ENDWHILE
    \RETURN $(a, mid)$
    \end{algorithmic}
    \end{algorithm}

    \begin{theorem}
        From ~\cref{alg:binary search with checking}, we can derive a $1/2^n$-approximate schedule to the optimal schedule using ~\cref{alg: optimal schedule}.
    \end{theorem}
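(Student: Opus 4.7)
The plan is to mirror the $a=1$ analysis of \cref{sec:approx} in the general $a$ setting, combining the three sources of error that arise in the pipeline: (i) identifying the correct subspace $\mathcal{D}_a$, (ii) multiplicatively approximating $T_a$ via binary search, and (iii) propagating this error first to $T_{a+i}$ via the closed-form from \cref{lem:ta}, and then to $t_i$ via $t_i = \log_\delta T_i$.

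First, I would verify that \cref{alg:binary search with checking} selects the correct $a$. By \cref{thm:domain}, the unique minimum of $L$ lies in the interior of some $\mathcal{D}_a$ for $a \in \{1,\dots,n/2\}$. By \cref{lem: gen montonic increasing}, $g(T_a)$ is strictly monotonically increasing on $[\delta^T, 1]$, so the equation $g(T_a) = \delta^T$ has a solution in the interior of $\mathcal{D}_a$ iff $g(\delta^T) < \delta^T < g(1)$, which is exactly the pair of conditions tested in the outer loop. The linear search over $a$ therefore finds the right subspace in $O(n)$ evaluations.

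Second, I would re-run the binary-search analysis (\cref{lem:binary}) verbatim in the general case: the initial interval $[\delta^T, 1]$ shrinks by a factor of $2$ per iteration, so after $r = 3n + T\log_2(1/\delta)$ iterations its length is at most $2^{-r} = \delta^T/8^n \leq T_a^\ast/8^n$, giving a multiplicative error of $\epsilon = 1/8^n$ on $T_a$. Correctness (i.e., that the returned $T_a$ corresponds to $\nabla L = 0$) follows from the general analogue of \cref{lem:corresponding}, since \cref{lem:endcondition} and \cref{lem:ta} together encode a bijection between solutions of $g(T_a) = \delta^T$ in $\mathcal{D}_a$ and critical points of $L$ constrained to $\mathcal{D}_a$.

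Third, I would lift the $\epsilon = 1/8^n$ multiplicative error on $T_a$ to each $T_{a+i}$ by plugging $T_a^\ast(1\pm\epsilon)$ into the closed form $T_{a+i} = \frac{1}{a}\cdot\frac{(aT_a)^{i+1}}{(aT_a+1)^i}$ from \cref{lem:ta}. Bounding numerator and denominator separately with $(1+\epsilon)^{i+1}/(1-\epsilon)^i$ and using $(1+x)^r \leq 1 + (2^r-1)x$, $\tfrac{1}{1-x} \leq 1+2x$ (for $\epsilon < 1/2$) as in \cref{lem:approxt_n}, the resulting multiplicative error is at most $4^n \epsilon \leq 1/2^n$. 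Finally, converting to an additive bound on $t_{a+i} = \log_\delta T_{a+i}$ reproduces \cref{lem:optimalti}: a multiplicative error of $\epsilon'$ on $T_{a+i}$ yields $|t_{a+i} - t_{a+i}^\ast| \leq 2\ln(2)\epsilon'/\log(1/\delta)$. Choosing the binary-search accuracy to be $\epsilon = \tfrac{1}{8^n}\cdot\tfrac{\log(1/\delta)}{2\ln(2)}$ (which only inflates the iteration count by $O(\log\log(1/\delta))$) gives $|t_i - t_i^\ast| \leq 1/2^n$ for every $i$, as required. The equally-spaced fill in step 4 of \cref{alg: optimal schedule} also inherits this additive guarantee because the endpoints $t_a, t_{n-a}$ themselves are within $1/2^n$ of the optima.

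The main obstacle is the error-propagation step: the exponent $i+1$ in \cref{eq:gen t_i} interacts with the extra factor of $a$ inside $(aT_a)$ and $(aT_a+1)$, so one must confirm that the same $(1+\epsilon)^{i+1}/(1-\epsilon)^i$ style bound goes through without the $a$ factor amplifying the error. Since $a$ enters only as a multiplicative constant inside each power, the error analysis of \cref{lem:approxt_n} extends essentially unchanged, but this is the only step where the general-$a$ argument is not an immediate copy of the $a=1$ case.
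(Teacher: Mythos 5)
Your proposal is correct and follows essentially the same route as the paper: identify the correct subspace $\mathcal{D}_a$ via the two boundary conditions (\cref{thm:domain}, \cref{lem: gen montonic increasing}), approximate $T_a$ by binary search as in \cref{lem:binary}, propagate the multiplicative error through the closed form of \cref{lem:ta} and \cref{lem:endcondition} as in \cref{lem:approxt_n}, and convert to the additive $1/2^n$ bound on the $t_i$'s via \cref{lem:optimalti}, with the remaining ads equispaced. The paper's proof is merely a terser assembly of these same ingredients, and your extra care about the factor $a$ inside the powers and about rescaling the binary-search accuracy by $\log(1/\delta)/(2\ln 2)$ matches (indeed slightly sharpens) the paper's own treatment.
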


    \begin{proof}
        
    As shown in ~\cref{thm: unique-minima} of ~\cref{sec:uniqueminima}, there exists a unique optimal schedule. In ~\cref{sec:decoupling}, ~\cref{lem:Ti} and~\cref{lem:T_nexpression} establish a system of equations whose solution yields this optimal schedule. ~\cref{lem:optimalti} in ~\cref{sec:approx} shows that our algorithm provides a \(1/2^n\)-approximation to the solution of these equations. For the general case \(a \neq 1\), ~\cref{lem:ta} and~\cref{lem:endcondition} in ~\cref{sec:assume} present analogous expressions and we can observe that the remaining ads are equispaced between the first and last non-end ad, and we show that the algorithm achieves the same \(1/2^n\)-approximation in this broader context. Putting everything together, we conclude that our algorithm yields a \(1/2^n\)-approximation to the optimal schedule.  We next discuss several important implications and behaviors of the solution.

    \end{proof}

\section{Implications}\label{sec:implication}
Having already established the optimality for the objective function $L$, we now proceed to analyze its structural properties. These properties reveal how the placement of advertisements varies with $\delta$, offering deeper insights into the behavior of our strategy under different values of $\delta$. To
begin, we first present the following property of our solution.
\begin{lemma}\label{lem:firstlastgap}
    For the optimal solution $\tee$, we have $t_1-t_0 = t_n-t_{n-1}$.
\end{lemma}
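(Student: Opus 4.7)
The plan is to use a symmetry argument combined with the uniqueness of the minimizer (Theorem~\ref{thm: unique-minima}). Concretely, I would define the reflection $\tee' = (t'_0, \ldots, t'_n)$ by $t'_i := T - t_{n-i}$, and verify that $\tee' \in \mathcal{D}$: the endpoints satisfy $t'_0 = T - t_n = 0$ and $t'_n = T - t_0 = T$, the monotonicity $t'_i \le t'_{i+1}$ follows from $t_{n-i-1} \le t_{n-i}$, and the range condition $t'_i \in [0,T]$ is immediate. So the reflection is a well-defined involution $\mathcal{D} \to \mathcal{D}$.

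Next, I would compute $L(\tee')$ directly using the definition:
\begin{align*}
L(\tee') = \sum_{i > j} \delta^{t'_i - t'_j} = \sum_{i > j} \delta^{(T - t_{n-i}) - (T - t_{n-j})} = \sum_{i > j} \delta^{t_{n-j} - t_{n-i}}.
\end{align*}
Re-indexing with $k = n-i$, $\ell = n-j$, the condition $i > j$ becomes $k < \ell$, and the sum equals $\sum_{\ell > k} \delta^{t_\ell - t_k} = L(\tee)$. Hence $L$ is invariant under reflection.

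Now by Theorem~\ref{thm: unique-minima}, $L$ admits a unique minimizer $\tee^*$ on $\mathcal{D}$. Since $(\tee^*)'$ also attains the minimum value, uniqueness forces $(\tee^*)' = \tee^*$, i.e., $t_i^* = T - t_{n-i}^*$ for every $i$. Setting $i = 1$ gives $t_1^* = T - t_{n-1}^* = t_n^* - t_{n-1}^*$, and since $t_0^* = 0$, this is exactly $t_1 - t_0 = t_n - t_{n-1}$.

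I do not expect any real obstacle: the only subtle point is making sure the reflected sequence lies in $\mathcal{D}$ and that the index substitution in the double sum is carried out carefully, both of which are routine. The argument relies crucially on the uniqueness established earlier (so the same proof template will extend to show the analogous symmetry $t_i^* = T - t_{n-i}^*$ for all $i$, which underlies the observation in the paper's overview that ads concentrate symmetrically at $0$ and $T$).
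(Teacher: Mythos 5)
Your proof is correct, but it takes a genuinely different route from the paper. The paper proves this lemma by plugging in the closed-form expressions already derived at the stationary point: from Theorem~\ref{thm:T_i expression} and Lemma~\ref{lem:T_nexpression} one has $T_n = \frac{T_1^n}{(1+T_1)^{n-2}}$ and $T_{n-1} = \frac{T_1^{n-1}}{(1+T_1)^{n-2}}$, so $\frac{T_n}{T_{n-1}} = T_1 = \frac{T_1}{T_0}$, i.e.\ $\delta^{t_n-t_{n-1}} = \delta^{t_1-t_0}$, which gives the claim. You instead use the reflection $t'_i = T - t_{n-i}$, verify that it preserves $\mathcal{D}$ and leaves $L$ invariant, and invoke the uniqueness of the minimizer (Theorem~\ref{thm: unique-minima}) to conclude the minimizer is a fixed point of the reflection, hence $t_i^* = T - t_{n-i}^*$ for every $i$. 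Your argument buys more: it yields the full symmetry of the optimal schedule (not just the first/last gap), it does not rely on the stationarity relations and therefore applies verbatim in the boundary case $a \neq 1$ (where the paper's relations are only derived in \cref{sec:assume}), and it is essentially the same symmetry idea the paper itself deploys later in Lemma~\ref{lem:zero}. The paper's computation, on the other hand, is a one-line consequence of machinery it has already built and sets up the subsequent ratio lemma (Lemma~\ref{lem:equalratio}) in the same style. Both proofs are valid; just note that your route implicitly uses that $\delta \in (0,1)$ strictly, since that is what underlies the strict convexity behind Theorem~\ref{thm: unique-minima}, the same standing assumption the paper makes.
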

\begin{proof}
    At optimum, we have $T_n = \frac{T_1^n}{(1+T_1)^{n-2}}$ and $T_{n-1} = \frac{T_1^{n-1}}{(1+T_1)^{n-2}}$. Hence, 
    
    \[\frac{T_n}{T_{n-1}} = T_1 = \frac{T_1}{T_0}\]

    The above equation can be written as
$$\delta^{t_n-t_{n-1}} = \delta^{t_1-t_0}$$

Therefore, we get

$$t_n-t_{n-1}=t_1-t_0$$

\end{proof}
Next, we show that the ratio between $T_i$ and $T_{i-1}$ is always fixed.

\begin{lemma}\label{lem:equalratio}
For $i \in \{2, \dots, n-1\}$, we have
\begin{equation}
    \frac{T_{i}}{T_{i-1}} = \frac{T_1}{(1+T_1)}.
\end{equation}    
\end{lemma}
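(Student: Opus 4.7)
The plan is to invoke Theorem~\ref{thm:T_i expression} directly, which already provides a closed-form expression for each $T_i$ in terms of $T_1$ at any critical point of $L$ (satisfying $\nabla L = 0$). Once we have $T_i = \frac{T_1^i}{(1+T_1)^{i-1}}$ for all $i \in \{1, \dots, n-1\}$, the claim about the ratio $T_i / T_{i-1}$ reduces to an algebraic simplification.

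More concretely, first I would write out both $T_i$ and $T_{i-1}$ using Theorem~\ref{thm:T_i expression}, noting that this expression is valid for any $i$ in the range $\{2, \dots, n-1\}$ as required by the statement. Next I would form the ratio
\begin{equation*}
    \frac{T_i}{T_{i-1}} = \frac{T_1^i / (1+T_1)^{i-1}}{T_1^{i-1}/(1+T_1)^{i-2}},
\end{equation*}
and simplify numerator and denominator of the powers of $T_1$ and $(1+T_1)$ separately. The exponent of $T_1$ collapses from $i - (i-1) = 1$, while the exponent of $(1+T_1)$ becomes $(i-2) - (i-1) = -1$, giving $T_1 / (1+T_1)$.

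The main obstacle is essentially nil here: the hard work has already been done in establishing Theorem~\ref{thm:T_i expression} by induction. The only minor subtlety is ensuring that the index range $i \in \{2, \dots, n-1\}$ avoids boundary issues: for $i=1$ the formula would degenerate (involving $T_0$, which equals $1$, so the ratio formula would still hold but with a different interpretation), and for $i = n$ we would need Lemma~\ref{lem:T_nexpression} rather than Theorem~\ref{thm:T_i expression} since the closed form for $T_n$ has a different exponent $(1+T_1)^{n-2}$ instead of $(1+T_1)^{n-1}$. Restricting to $i \in \{2, \dots, n-1\}$ sidesteps both issues, so the proof is a direct one-line computation once the prior theorem is in hand.
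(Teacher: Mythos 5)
Your proposal is correct and matches the paper's proof essentially verbatim: the paper also just substitutes the closed form $T_i = \frac{T_1^i}{(1+T_1)^{i-1}}$ from Theorem~\ref{thm:T_i expression} into the ratio $\frac{T_i}{T_{i-1}}$ and simplifies to $\frac{T_1}{1+T_1}$. Your remark on why the index range excludes $i=n$ (where Lemma~\ref{lem:T_nexpression} gives a different exponent) is a sensible extra observation but not needed for the argument.
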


\begin{proof}
    We have,
    \begin{align*}
    \frac{T_{i}}{T_{i-1}} &= \frac{\frac{T_1^{i}}{(1+T_1)^{i-1}}}{\frac{T_1^{i-1}}{(1+T_1)^{i-2}}}  \hspace{1cm}\text{Using \cref{thm:T_i expression}} \\
    &= \frac{T_1}{(1+T_1)}
    \end{align*}
\end{proof}
\begin{corollary}\label{cor:equalspace}
    $t_2-t_1 = t_3-t_2 = \dots = t_{n-1}-t_{n-2}$.
\end{corollary}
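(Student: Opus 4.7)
The plan is to derive this corollary as an immediate consequence of the preceding Lemma \ref{lem:equalratio}. Recall that, by definition, $T_i = \delta^{t_i}$, so the ratio $T_i/T_{i-1}$ can be rewritten as $\delta^{t_i - t_{i-1}}$. Lemma \ref{lem:equalratio} states that this ratio equals $T_1/(1+T_1)$ for every $i \in \{2, \dots, n-1\}$, and in particular the right-hand side does not depend on $i$. Consequently, $\delta^{t_i - t_{i-1}}$ takes the same value for each $i$ in that range.

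Since $\delta \in (0,1)$, the map $x \mapsto \delta^x$ is a strictly decreasing (in particular, injective) function on $\mathbb{R}$, so equality of the exponentials forces equality of the exponents. Applying $\log_\delta$ to both sides of $\delta^{t_i - t_{i-1}} = T_1/(1+T_1)$, I obtain
\[
t_i - t_{i-1} = \log_\delta\!\left(\frac{T_1}{1+T_1}\right) \quad \text{for all } i \in \{2, \dots, n-1\},
\]
and since the right-hand side is independent of $i$, this yields $t_2 - t_1 = t_3 - t_2 = \cdots = t_{n-1} - t_{n-2}$, which is exactly the claim.

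There is no real obstacle here: the corollary is a direct translation of the multiplicative statement in Lemma \ref{lem:equalratio} (about $T_i$) into the additive statement (about $t_i$) via the exponential change of variables. The only thing to be careful about is the index range, namely that the lemma provides the ratio only for $i \in \{2, \dots, n-1\}$ — which is precisely the range of consecutive gaps appearing in the corollary — so the boundary gaps $t_1 - t_0$ and $t_n - t_{n-1}$, handled separately in Lemma \ref{lem:firstlastgap}, are not part of this chain of equalities.
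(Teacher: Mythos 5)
Your proposal is correct and matches the paper's intent exactly: the paper states \cref{cor:equalspace} as an immediate consequence of \cref{lem:equalratio}, with the implicit step being precisely your observation that $T_i/T_{i-1}=\delta^{t_i-t_{i-1}}$ is constant in $i$, so injectivity of $x\mapsto\delta^x$ (for $\delta\in(0,1)$) forces the consecutive gaps to be equal. Your remark on the index range, excluding the boundary gaps covered by \cref{lem:firstlastgap}, is also consistent with the paper.
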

Now, in the following two lemmas, we show that as we increase the value of $\delta$, the value of $t_1$ decreases. We first show that $t_1$ cannot be greater than $T/n$.
\begin{lemma}
   Given $\delta, T  \text{ and } n$, in any optimal solution $\tee$ we have $t_1< {T/n}$ .
\end{lemma}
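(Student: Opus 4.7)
The plan is to use the structural lemmas already established for the $a=1$ regime, namely \cref{lem:firstlastgap} and \cref{cor:equalspace}, together with \cref{lem:equalratio}, to compare the boundary gap $t_1 - t_0 = t_1$ with the common interior gap. The $a>1$ regime is trivial since then $t_1 = 0 < T/n$ directly.

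So assume $a=1$. First I would write the total horizon as a telescoping sum. By \cref{lem:firstlastgap}, the last gap $t_n - t_{n-1}$ equals $t_1$. By \cref{cor:equalspace}, the consecutive differences $t_{i+1} - t_i$ for $i \in \{1, \dots, n-2\}$ are all equal to some common value $d$. Therefore
\begin{equation*}
T \;=\; (t_1 - t_0) + \sum_{i=1}^{n-2}(t_{i+1}-t_i) + (t_n - t_{n-1}) \;=\; 2 t_1 + (n-2)\, d.
\end{equation*}

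The heart of the argument is then to show $d > t_1$. For this I would use \cref{lem:equalratio}, which gives $\delta^{d} = T_{i}/T_{i-1} = T_1 / (1 + T_1)$, together with $\delta^{t_1} = T_1$. Since $T_1 = \delta^{t_1} > 0$, we have $1 + T_1 > 1$, so
\begin{equation*}
\delta^{d} \;=\; \frac{\delta^{t_1}}{1 + \delta^{t_1}} \;<\; \delta^{t_1}.
\end{equation*}
Because $\delta \in (0,1)$, the map $x \mapsto \delta^{x}$ is strictly decreasing, so this inequality implies $d > t_1$. Substituting back into the horizon identity gives $T = 2 t_1 + (n-2)\, d > 2 t_1 + (n-2)\, t_1 = n\, t_1$, and therefore $t_1 < T/n$, as claimed.

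There is no real obstacle here; the computation is short. The only subtlety is making sure we are in the regime where \cref{cor:equalspace} and \cref{lem:equalratio} apply (the interior solution $a=1$), and handling the degenerate case $a \neq 1$ separately, where $t_1 = 0$ so the bound is immediate. The comparison $d > t_1$ is really just a restatement that the first inner ratio $T_2/T_1 = T_1/(1+T_1)$ is strictly smaller than the boundary ratio $T_1/T_0 = T_1$, which is the structural reason the optimal schedule packs its endpoints more tightly than its middle.
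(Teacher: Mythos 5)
Your proof is correct, but it takes a different route from the paper's. The paper works directly with the closed-form stationarity relation $\frac{\delta^{n t_1}}{(1+\delta^{t_1})^{n-2}} = \delta^T$ (\cref{cor: t_1 wrt t_n}): it defines $h(t_1) = g(\delta^{t_1})$, observes that $h$ is strictly decreasing in $t_1$ (via \cref{lem:strictlyinc}), checks that $h(T/n) = \delta^T/(1+\delta^{T/n})^{n-2} < \delta^T$, and concludes $t_1 < T/n$ since the optimum must satisfy $h(t_1)=\delta^T$. You instead decompose the horizon into its $n$ consecutive gaps, use \cref{lem:firstlastgap} and \cref{cor:equalspace} to write $T = 2t_1 + (n-2)d$, and use \cref{lem:equalratio} to show the common interior gap $d$ exceeds the boundary gap $t_1$ because $\delta^d = T_1/(1+T_1) < T_1 = \delta^{t_1}$. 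The two arguments are algebraically equivalent — exponentiating your identity $T = 2t_1+(n-2)d$ with $\delta^d = T_1/(1+T_1)$ recovers exactly the relation the paper starts from — but your packaging is more structural: it explains \emph{why} the bound holds (there are $n$ gaps, the two endpoint gaps are the strictly smallest, so their common length is below the average $T/n$), whereas the paper's version is a one-line monotonicity/evaluation argument on the closed form. Your separate treatment of $a\neq 1$ (where $t_1=0$) is fine but not needed, since the lemma is stated within the $a=1$ regime. One shared caveat: strictness in both arguments requires $n>2$ (your step $(n-2)d > (n-2)t_1$ and the paper's $h(T/n)<\delta^T$ both degenerate to equality at $n=2$, where indeed $t_1 = T/2$); the paper only makes this explicit later, in \cref{lem:incdec}.
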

\begin{proof}
    From ~\cref{cor: t_1 wrt t_n}, we know that when $\nabla L = 0$, we have 
    \begin{equation}
        \frac{\delta^{t_1n}}{(1+\delta^{t_1})^{n-2}} = \delta^T
    \end{equation}

    Let us define $h(t_1) = \frac{\delta^{t_1n}}{(1+\delta^{t_1})^{n-2}}$. At optimum, we would like $h(t_1) = \delta^T$.
    
    For a fixed $\delta$, 
    
    \begin{align*}
        h(T/n) &= \frac{\delta^{n\frac{T}{n}}}{(1+\delta^\frac{T}{n})^{n-2}}\\
        &< \delta^T
    \end{align*}
    
   For a fixed $\delta, T \text{ and } n$, we know that $g(T_1) = \frac{T_1^{n}}{(1+T_1)^{n-2}}$ is strictly monotonically decreasing with $T_1$ (using \cref{lem:strictlyinc}). 

    We have that $h(t_1) = g(\delta^{t_1})$. Since $\delta^{t_1}$ monotonically decreases with $t_1$, we obtain that $h(t_1)$ monotonically decreases with $t_1$.
   
   We know that $h(T/n) <\delta^T$. Since in the optimal solution, we require $h(t_1) = \delta^T$, we obtain $t_1< T/n$.
\end{proof}

\begin{lemma}\label{lem:incdec}
    Fix $n$ and $T$, then as $\delta$ increases, the value of $t_1$ decreases.
\end{lemma}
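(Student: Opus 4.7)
The plan is to apply the implicit function theorem to the optimality characterization from \cref{cor: t_1 wrt t_n}, which asserts that at the optimum $h(t_1,\delta) = 0$, where
\[
h(t,\delta) \;:=\; \delta^{nt}(1+\delta^{t})^{-(n-2)} - \delta^{T}.
\]
Establishing $dt_1/d\delta < 0$ then reduces to showing that $\partial h/\partial t$ and $\partial h/\partial \delta$ are both strictly negative at $(t_1,\delta)$.

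First I would compute $\partial h/\partial t$. Differentiating the product and applying the algebraic identity $n(1+\delta^t) - (n-2)\delta^t = n + 2\delta^t$, the expression collapses to
\[
\frac{\partial h}{\partial t} \;=\; \delta^{nt}\ln(\delta)\,(1+\delta^{t})^{-(n-1)}(n+2\delta^{t}),
\]
which is strictly negative, because $\ln\delta<0$ while every other factor is positive.

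Next, computing $\partial h/\partial \delta$ produces two terms, and the critical algebraic step is to use the optimality relation $\delta^{nt_1} = \delta^{T}(1+\delta^{t_1})^{n-2}$ to merge them into the clean form
\[
\frac{\partial h}{\partial \delta}\bigg|_{t=t_1} \;=\; \delta^{T-1}\!\left[\frac{t_1(n+2\delta^{t_1})}{1+\delta^{t_1}} - T\right].
\]
Negativity of the bracket is equivalent to $T(1+\delta^{t_1}) > t_1(n+2\delta^{t_1})$, which I would rewrite as $(T - nt_1) + (T - 2t_1)\delta^{t_1} > 0$. The first summand is positive by the preceding lemma ($t_1 < T/n$), and the second is positive because $n \ge 2$ in the $a=1$ regime gives $T - 2t_1 > T - 2T/n \ge 0$.

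Finally, the implicit function theorem gives $dt_1/d\delta = -(\partial h/\partial \delta)/(\partial h/\partial t)$; since both partial derivatives are negative, their ratio is positive and $dt_1/d\delta < 0$, as required. The main obstacle is the algebraic collapse used in the third paragraph: the two terms arising in $\partial h/\partial \delta$ only combine into a readable single-fraction expression after substituting the optimality constraint, and without this substitution one is left with a cumbersome difference whose sign is not visible by inspection.
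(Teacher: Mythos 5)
Your proposal is correct and is essentially the paper's argument: the paper also implicitly differentiates the optimality relation $\delta^{T} = \delta^{nt_1}/(1+\delta^{t_1})^{n-2}$ from \cref{cor: t_1 wrt t_n} and arrives at $\frac{\partial t_1}{\partial \delta} = \frac{\delta^{t_1}(T-2t_1)+(T-nt_1)}{\delta\ln\delta\,(n+2\delta^{t_1})}$, which is exactly your implicit-function-theorem ratio after simplification, with the same sign argument using $t_1 < T/n$ and $\ln\delta < 0$. The only difference is presentational (you package the computation as two partial derivatives of $h$ rather than differentiating the equation directly), so no substantive gap.
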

\begin{proof}
We know from ~\cref{cor: t_1 wrt t_n}, that at optimal value of $\tee$, we have
    \begin{align*}
     \delta^T  &= \frac{\delta^{nt_1}}{(1+\delta^{t_1})^{n-2}}
    \end{align*}
    Differentiating the above equation from both sides, we get
    \begin{align*}
        \frac{T}{\delta}\frac{\delta^{nt_1}}{(1+\delta^{t_1})^{n-2}}&= \frac{(1+\delta^{t_1})^{n-2} \left(nt_1\delta^{nt_1-1}+n\delta^{nt_1}\ln(\delta)\frac{\partial t_1}{\partial \delta}\right)}{(1+\delta^{t_1})^{2n-4}} 
        &\\&- 
        \frac{\delta^{nt_1}(\left(n-2)(1+\delta^{t_1})^{n-3}t_1\delta^{t_1-1}+(n-2)(1+\delta^{t_1})^{n-3}\delta^{t_1}\ln(\delta)\frac{\partial t_1}{\partial \delta}\right)}{(1+\delta^{t_1})^{2n-4}}
    \end{align*}
    This implies that 
    \begin{align*}
        T(1+\delta^{t_1})&=(1+\delta^{t_1})(nt_1+n\delta\ln(\delta)\frac{\partial t_1}{\partial \delta}) - \delta((n-2)t_1\delta^{t_1-1}+(n-2)\delta^t\ln(\delta)\frac{\partial t_1}{\partial \delta})
    \end{align*}
    Hence, we have
    \begin{align*}
        T(1+\delta^{t_1})&= \frac{\partial t_1}{\partial \delta}((1+\delta^{t_1})n\delta \ln\delta - (n-2)\delta^{t+1}\ln(\delta))+nt_1+2t_1\delta^{t_1}
    \end{align*}
    Therefore,
    \begin{align*}
        (\frac{\partial t_1}{\partial \delta})(\delta \ln \delta) (n+2\delta^{t_1}) &= T(1+\delta^{t_1}) - (n+2\delta^{t_1})t_1
    \end{align*}

    Rearranging the terms, we get
    \begin{align*}
        (\frac{\partial t_1}{\partial \delta})(\delta \ln \delta) (n+2\delta^{t_1}) &= \delta^{t_1}(T-2t_1)+(T-nt_1)
    \end{align*}

    Finally, we obtain
    \begin{align*}
        \frac{\partial t_1}{\partial \delta} &= \frac{\delta^{t_1}(T-2t_1) + (T-nt_1)}{\delta\ln{\delta}(n+2\delta^{t_1})}
    \end{align*}

    As $n> 2$ and $t_1 < T/n$,  the numerator in the above equation is strictly positive. Note that $\ln(\delta) < 0$ and therefore the denominator is strictly negative. Hence, the derivative must also be strictly less than 0. This implies that $t_1$ is monotonically decreasing with $\delta$.
\end{proof}
Next, we show that as $\delta\rightarrow0$, $t_i$'s are equally spaced in $T$.
\begin{lemma}
    As $\delta\rightarrow 0$, we have $t_i \rightarrow Ti/n$.
\end{lemma}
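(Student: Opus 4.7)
The plan is to analyze the defining equation from \cref{cor: t_1 wrt t_n} in the limit $\delta\to 0$, derive $t_1\to T/n$, and then propagate this to all $t_i$ via the closed form in \cref{thm:T_i expression}.

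First, I would observe that the assumption $a=1$ (needed to invoke \cref{cor: t_1 wrt t_n} and \cref{thm:T_i expression}) is eventually satisfied as $\delta\to 0$: the threshold $n\log_{1/\delta}(2)=n\ln 2/\ln(1/\delta)$ tends to $0$, so for all $\delta$ sufficiently close to $0$ we have $T>n\log_{1/\delta}(2)$, and the analysis in \cref{sec:merge} applies.

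Second, starting from the optimality relation
\[
\frac{\delta^{nt_1}}{(1+\delta^{t_1})^{n-2}}=\delta^T,
\]
I would take logarithms to obtain
\[
t_1 \;=\; \frac{T}{n}\;+\;\frac{(n-2)\,\ln\!\bigl(1+\delta^{t_1}\bigr)}{n\,\ln\delta}.
\]
Since $t_1\in[0,T]$ we have $\delta^{t_1}\in[\delta^T,1]$, hence $\ln(1+\delta^{t_1})\in[0,\ln 2]$ is uniformly bounded, while $\ln\delta\to-\infty$ as $\delta\to 0$. Therefore the correction term vanishes and $t_1\to T/n$. (I would also note that this is consistent with \cref{lem:incdec}: $t_1$ is monotone in $\delta$, and strictly bounded above by $T/n$, so the limit from below is well defined.)

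Third, for general $i$ I would invoke \cref{thm:T_i expression} to write
\[
t_i \;=\; \log_\delta T_i \;=\; i\,t_1 \;-\; (i-1)\,\frac{\ln(1+\delta^{t_1})}{\ln\delta}.
\]
Because $t_1\to T/n>0$, we have $\delta^{t_1}\to 0$, so $\ln(1+\delta^{t_1})\to 0$, while $\ln\delta\to-\infty$; the second term vanishes and $t_i\to iT/n$. For $i=n$ the statement reduces to $t_n=T$, which is enforced by construction.

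The main obstacle I foresee is purely a bookkeeping one: ensuring that the informal ``$\delta\to 0$'' limit is treated rigorously, i.e., that $t_1$ genuinely converges (not merely has $T/n$ as a subsequential limit). This is handled by the monotonicity from \cref{lem:incdec} together with the upper bound $t_1<T/n$, which together force convergence. No further technical ingredients beyond the closed-form expressions already established are needed.
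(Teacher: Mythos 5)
Your proposal is correct, and it reaches the conclusion through the same underlying identities as the paper (the optimality relation $\delta^{nt_1}/(1+\delta^{t_1})^{n-2}=\delta^T$ from \cref{cor: t_1 wrt t_n} and the closed forms of \cref{thm:T_i expression}), but the argument is organized differently. The paper substitutes the candidate value $T_1=\delta^{T/n}$ into the ratio $\tfrac{1}{\delta^T}\tfrac{T_1^n}{(1+T_1)^{n-2}}$, shows this ratio tends to $1$ as $\delta\to 0$, and then passes to all $t_i$ via the equal-spacing property (\cref{cor:equalspace}) together with the symmetry of the first and last gaps (\cref{lem:firstlastgap}); strictly speaking, concluding from "the candidate nearly satisfies the equation" that the true optimizer converges to the candidate needs an extra monotonicity/continuity step that the paper leaves implicit. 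You instead take logarithms of the defining equation to obtain $t_1=T/n+\tfrac{(n-2)\ln(1+\delta^{t_1})}{n\ln\delta}$, where the correction is uniformly bounded by $\tfrac{(n-2)\ln 2}{n\ln(1/\delta)}\to 0$, which yields an explicit quantitative rate and makes the convergence of $t_1$ immediate without any appeal to monotonicity (your invocation of \cref{lem:incdec} is harmless but unnecessary); you then propagate to general $i$ by taking logs of $T_i=T_1^i/(1+T_1)^{i-1}$ rather than via the spacing lemmas, and you correctly note that the $a=1$ regime is eventually in force since $n\log_{1/\delta}(2)\to 0$. In short, your route is a more direct and slightly more rigorous version of the same computation, at the cost of not reusing the structural spacing results the paper highlights.
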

\begin{proof}
    As $T_n=\delta^T$, for the optimum solution, the corresponding value of $T_1$ satisfies
    \begin{align*}
        \delta^T &= \frac{T_1^n}{(1+T_1)^{n-2}}
    \end{align*}

 But if the solution is not optimal, the above equation might not be true. Hence, we try to estimate the value of the quantity:
    \[
    \frac{1}{\delta^T}\frac{T_1^n}{(1+T_1)^{n-2}}
    \]

    We would like to show that as $\delta\rightarrow 0$, the value of the above equation at $T_1 = \delta^{T/n}$ converges to 1, which will imply that $\frac{T_1^n}{(1+T_1)^{n-2}}\rightarrow \delta^T$. We now show that this is true.
    \begin{align*}
        \lim_{\delta\rightarrow0}\frac{T_1^{n}}{\delta^T(1+T_1)^{n-2}} &=\lim_{\delta\rightarrow0}\frac{\delta^T}{\delta^T(1+\delta^T)^{n-2}}\\
        &=\lim_{\delta\rightarrow0}\frac{1}{(1+\delta^T)^{n-2}}\\
        &= 1
    \end{align*}

   As we have already seen that $t_i$'s are equally spaced between $t_1$ and $t_{n-1}$ ( \cref{cor:equalspace}) and since $t_1 \rightarrow T/n$ and $t_{n-1} \rightarrow T-T/n$ (using \cref{lem:firstlastgap}), as $\delta \rightarrow 0$, this gives $t_i \rightarrow iT/n$.
\end{proof}

The properties that we have shown so far in this section hold when $T >  n\log_{1/\delta}(2)$ (i.e., $a=1$). We now show that similar properties hold when $T \le  n\log_{1/\delta}(2)$ (i.e., $a \neq 1$). From ~\cref{sec:assume}, we observe that as $\delta$ increases, $t_{a-1}=0$ and $t_{n-a+1} = T$ for $a \in [n/2+1]$ hold.  We now, in the next three lemmas, argue that as $\delta$ increases, $t_a$ decreases for $a\neq 1$.

Firstly, from the analysis in ~\cref{sec:assume} we observe that as $\delta$ increases, the constraint $t_a = 0$ holds for larger and larger $a$. When $t_{a-1}=0$ and $t_a\neq 0$, from ~\cref{eq:general t_n for t}, we can write $
    \delta^T = \dfrac{1}{a^2}\dfrac{(a\delta^{t_a})^{n-2a+2}}{(1+a\delta^{t_a})^{n-2a}}
    $. We now try to find the ranges of $\delta$ for which  $
    \delta^T = \dfrac{1}{a^2}\dfrac{(a\delta^{t_a})^{n-2a+2}}{(1+a\delta^{t_a})^{n-2a}}
    $ holds.

\begin{lemma}
    The equation
    \begin{equation}
    \label{eq: ta in terms of T}
    \delta^T = \frac{1}{a^2}\frac{(a\delta^{t_a})^{n-2a+2}}{(1+a\delta^{t_a})^{n-2a}}    
    \end{equation}
    
    holds in the domain $\delta \in \left[\left(\dfrac{1}{(1+\frac{1}{a-1})^{n-2a+2}}\right)^{1/T}, \left(\dfrac{1}{(1+\frac{1}{a})^{n-2a}}\right)^{1/T}\right]$
\end{lemma}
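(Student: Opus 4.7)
The plan is to identify the two endpoints of the interval as the structural transitions bordering the $a$-configuration. By ~\cref{thm:domain} the optimum lives in precisely one of the nested subspaces $\mathcal{D}_1 \supset \mathcal{D}_2 \supset \cdots$, and ~\cref{eq: ta in terms of T} is the first-order condition inside $\mathcal{D}_a$ when the optimum is interior to it. The upper endpoint is therefore the value of $\delta$ at which $t_a$ collapses to $0$ (so the $a$-structure gives way to the $(a{+}1)$-structure), and the lower endpoint is the value at which, in the adjacent $(a{-}1)$-structure, its first interior ad $t_{a-1}$ collapses to $0$ (so the $(a{-}1)$-structure gives way to the $a$-structure). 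Each endpoint is obtained by substituting the appropriate boundary value into the optimality equation of ~\cref{lem:endcondition} for the relevant structure, and taking a $T$-th root. Throughout the interval, ~\cref{lem: gen montonic increasing} guarantees that the root $t_a \in (0,T)$ of ~\cref{eq: ta in terms of T} exists and depends continuously and monotonically on $\delta$.

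For the upper endpoint I would substitute $t_a = 0$ (so $\delta^{t_a} = 1$) into ~\cref{eq: ta in terms of T} and simplify:
\[
\delta^T \;=\; \frac{1}{a^2}\cdot\frac{a^{n-2a+2}}{(1+a)^{n-2a}} \;=\; \frac{a^{n-2a}}{(a+1)^{n-2a}} \;=\; \frac{1}{(1+\tfrac{1}{a})^{n-2a}},
\]
yielding $\delta = \bigl(1/(1+1/a)^{n-2a}\bigr)^{1/T}$. For $\delta$ just beyond this threshold the first inequality of ~\cref{alg:binary search with checking} fails at $a$ and the algorithm increments to $a{+}1$. For the lower endpoint I would apply the same reasoning to the $(a{-}1)$-structure: write the $(a{-}1)$-analogue of ~\cref{lem:endcondition}, namely $\delta^T = \tfrac{1}{(a-1)^2}\cdot\tfrac{((a-1)\delta^{t_{a-1}})^{n-2a+4}}{(1+(a-1)\delta^{t_{a-1}})^{n-2a+2}}$, set $t_{a-1} = 0$, and compute
\[
\delta^T \;=\; \frac{1}{(a-1)^2}\cdot\frac{(a-1)^{n-2a+4}}{a^{n-2a+2}} \;=\; \frac{(a-1)^{n-2a+2}}{a^{n-2a+2}} \;=\; \frac{1}{(1+\tfrac{1}{a-1})^{n-2a+2}},
\]
i.e., $\delta = \bigl(1/(1+\tfrac{1}{a-1})^{n-2a+2}\bigr)^{1/T}$. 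Just below this threshold the first inequality of ~\cref{alg:binary search with checking} begins to hold at $a{-}1$ and the algorithm selects $a{-}1$ in place of $a$.

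The main subtlety, and where I expect most of the care to lie, is in ruling out a ``skipped'' transition inside the interval---i.e., showing that the clamping count cannot jump by more than one as $\delta$ sweeps through the stated range. This is handled by combining uniqueness of the minimizer (~\cref{thm: unique-minima}) with the strict monotonicity of each $g_b$ from ~\cref{lem: gen montonic increasing}: together they force the map $\delta \mapsto t_a(\delta)$ to be continuous on each regime, so the only admissible changes in the structural index are single-step crossings, and these occur precisely at the two explicit thresholds computed above. Taking $T$-th roots of the two $\delta^T$-values and combining gives the stated interval.
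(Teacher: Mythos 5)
Your proposal is correct and follows essentially the same route as the paper: the upper endpoint comes from substituting $t_a=0$ into \cref{eq: ta in terms of T} and the lower endpoint from substituting $t_{a-1}=0$ into the analogous $(a-1)$-equation from \cref{lem:endcondition}, with the monotonicity of $g$ (\cref{lem: gen montonic increasing}) ensuring each equation is solvable exactly on the resulting $\delta$-range. The only difference is that you explicitly flag and argue the single-step-transition point via \cref{thm: unique-minima}, which the paper leaves implicit in its appeal to \cref{thm:domain}.
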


\begin{proof}
    ~\cref{eq: ta in terms of T} holds when 
    \begin{itemize}
        \item Solving ~\cref{eq: ta in terms of T} gives a positive solution for $t_a$.
        \item Solving the equation 
        \begin{equation}
        \label{eq: ta-1 in terms of T}
            \delta^T = \frac{1}{(a-1)^2}\frac{((a-1)\delta^{t_a})^{n-2a+4}}{(1+(a-1)\delta^{t_{a-1}})^{n-2a+2}}
        \end{equation} 
        gives a negative solution for $t_{a-1}$, i.e., the equation is invalid.
    \end{itemize}

    Let us define $h_a(t_a) = \dfrac{1}{a^2}\dfrac{(a\delta^{t_a})^{n-2a+2}}{(1+a\delta^{t_a})^{n-2a}}$.
    
    Using \cref{lem: gen montonic increasing}, we know that $g(T_a)= \dfrac{1}{a^2}\cdot \dfrac{(aT_a)^{n-2a+2}}{(1+aT_a)^{n-2a}}$ is monotonically increasing. with $T_a$ 

    We know that $h_a(t_a) = g(\delta^{t_a})$. Since $\delta^{t_a}$ monotonically decreases with $t_a$, we have that $h_a(t_a)$ monotonically decreases with $t_a$.
    
    Substituting $t_a=0$, we get 
    \begin{align*}
        h(0) = \frac{1}{a^2}\frac{a^{n-2a+2}}{(1+a)^{n-2a}}\\
    \end{align*}

    We know that $h(t_a)\leq h(0)$ since $t_a\geq 0$ by monotonicity. We also know that if Equation~\ref{eq: ta in terms of T} has a solution $t_a$, we would have $h(t_a) = \delta^T$. Using these, we obtain that: 
    \begin{align*}
        \delta^T &\leq \frac{1}{a^2}\frac{a^{n-2a+2}}{(1+a)^{n-2a}}\\
        \delta &\leq \left(\frac{1}{(1+1/a)^{n-2a}}\right)^{1/T}
    \end{align*}

    Hence, ~\cref{eq: ta in terms of T} is valid only for $\delta \leq \left(\frac{1}{(1+1/a)^{n-2a}}\right)^{1/T}$.

    By a similar argument, we can show that ~\cref{eq: ta-1 in terms of T} is valid only for $\delta \leq \left(\frac{1}{(1+\frac{1}{a-1})^{n-2a+2}}\right)^{1/T}$. For $\delta\geq \left(\frac{1}{(1+\frac{1}{a-1})^{n-2a+2}}\right)^{1/T}$, we would obtain $t_{a-1} = 0$ and ~\cref{eq: ta-1 in terms of T} would no longer hold. Therefore, it becomes valid to now write ~\cref{eq: ta in terms of T} as long as $\delta \leq \left(\frac{1}{(1+1/a)^{n-2a}}\right)^{1/T}$.

    In the domain $\delta \in \left[\left(\frac{1}{(1+\frac{1}{a-1})^{n-2a+2}}\right)^{1/T}, \left(\frac{1}{(1+\frac{1}{a})^{n-2a}}\right)^{1/T}\right]$, we have that:
    \begin{itemize}
        \item The equation $\delta^T = \dfrac{1}{(a-1)^2}\dfrac{((a-1)\delta^{t_a})^{n-2a+4}}{(1+(a-1)\delta^{t_{a-1}})^{n-2a+2}}$ is invalid.
        \item The equation $\delta^T = \dfrac{1}{a^2}\dfrac{(a\delta^{t_a})^{n-2a+2}}{(1+a\delta^{t_a})^{n-2a}}$ is valid.
    \end{itemize}
    Hence, the domain in which \cref{eq: ta in terms of T} holds is $\delta \in \left[\left(\dfrac{1}{(1+\frac{1}{a-1})^{n-2a+2}}\right)^{1/T}, \left(\dfrac{1}{(1+\frac{1}{a})^{n-2a}}\right)^{1/T}\right]$. 
\end{proof}
 
\begin{lemma}
    \label{lem: val t_a at boundary}
    When $\delta = \left(\dfrac{1}{(1+\frac{1}{a-1})^{n-2a+2}}\right)^{1/T},$ we have $t_a = T/(n-2a)$ as the only solution to ~\cref{eq: ta in terms of T},    i.e. $\delta^T = \dfrac{1}{a^2}\dfrac{(a\delta^{t_a})^{n-2a+2}}{(1+a\delta^{t_a})^{n-2a}}$
    .
\end{lemma}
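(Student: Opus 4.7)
My plan is to prove the lemma in two steps: (i) verify directly that the stated value of $t_a$ satisfies \cref{eq: ta in terms of T} at the specified boundary value of $\delta$, and (ii) invoke the monotonicity result of \cref{lem: gen montonic increasing} to conclude that this is the unique solution.

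For the existence step, I would first rewrite the boundary condition as $\delta^T = \bigl(\tfrac{a-1}{a}\bigr)^{n-2a+2}$, so that the task reduces to a purely algebraic check. Substituting the proposed value of $t_a$ into \cref{eq: ta in terms of T}, I would compute $\delta^{t_a}$ explicitly using the power identity $(\delta^{t_a})^{k} = \delta^{k t_a}$ together with the boundary formula for $\delta^T$, and then simplify the right-hand side $\frac{1}{a^2}\cdot\frac{(a\delta^{t_a})^{n-2a+2}}{(1+a\delta^{t_a})^{n-2a}}$. The key algebraic step is to cancel powers of $\tfrac{a-1}{a}$ between numerator and denominator, reducing the identity to a trivial equality.

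For the uniqueness step, \cref{lem: gen montonic increasing} already shows that the map $g(T_a) = \frac{1}{a^2}\cdot\frac{(aT_a)^{n-2a+2}}{(1+aT_a)^{n-2a}}$ is strictly monotone in $T_a$ on $[\delta^T, 1]$. Because the change of variables $t_a \mapsto T_a = \delta^{t_a}$ is itself a monotone bijection on $[0, T]$, the equation $g(T_a) = \delta^T$ admits at most one solution in $T_a$, and hence in $t_a$. Combined with the verification above, this establishes that the stated $t_a$ is the \emph{only} solution.

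The main obstacle is the algebraic bookkeeping in the substitution step: the exponents $n-2a+2$ and $n-2a$ are distinct, so matching powers between numerator and denominator must be done carefully. If this direct calculation turns out to be unwieldy, I would fall back on a continuity/limit argument: at the lower boundary of the range described in the preceding lemma, the $(a-1)$-regime solution has $t_{a-1} = 0$ by construction, and the equal-spacing structure of the middle ads (the analog of \cref{cor:equalspace} for the $(a-1)$-regime) then pins down the first strictly positive time. Since both sides of \cref{eq: ta in terms of T} are continuous in $\delta$, the same configuration must satisfy the $a$-regime equation at the boundary, yielding the claimed value without any messy expansion.
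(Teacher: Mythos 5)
Your proposal takes essentially the same route as the paper's proof: plug the boundary value of $\delta$ (so that $\delta^T = \left(\frac{a-1}{a}\right)^{n-2a+2}$) into both sides of \cref{eq: ta in terms of T} and verify the algebraic identity, then get uniqueness from the strict monotonicity of $g$ in \cref{lem: gen montonic increasing} combined with the monotone change of variables $t_a \mapsto \delta^{t_a}$. One caution when you carry out the substitution: the cancellation only goes through if $\delta^{t_a} = \frac{a-1}{a}$, i.e.\ $t_a = T/(n-2a+2)$, which is the value the paper itself uses in its right-hand-side computation and in the following lemma, so the exponent $T/(n-2a)$ in the statement is a typo you will uncover rather than an identity you can verify.
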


\begin{proof}
  We can verify the statement of the lemma by substituting $\delta = \left(\frac{1}{(1+\frac{1}{a-1})^{n-2a+2}}\right)^{1/T}$, and $t_a = T/(n-2a)$ into both sides of the equation. 
    
    The LHS becomes
    \begin{align*}
        \delta^T &= \left(\left(\frac{1}{(1+\frac{1}{a-1})^{n-2a}}\right)^{1/T}\right)^T
        \\ &=\frac{1}{(1+\frac{1}{a-1})^{n-2a+2}}\\ 
        &= (\frac{a-1}{a})^{n-2a+2}.
    \end{align*}

    The RHS becomes:
    \begin{align*}
        \frac{1}{a^2}\frac{(a\delta^{t_a})^{n-2a+2}}{(1+a\delta^{t_a})^{n-2a}} &= \frac{1}{a^2}\frac{a^{n-2a+2}(\frac{a-1}{a})^{n-2a+2}}{(1+a-1)^{n-2a}}\\
        &= \frac{1}{a^2}\frac{(a-1)^{n-2a+2}}{a^{n-2a}}\\
        &= (\frac{a-1}{a})^{n-2a+2}.
    \end{align*}

    As we know that $h_a(t_a) = \frac{1}{a^2}\frac{(a\delta^{t_a})^{n-2a+2}}{(1+a\delta^{t_a})^{n-2a}}$ is monotonically decreasing with $t_a$. Hence, $t_a = T/(n-2a)$ is the only solution to ~\cref{eq: ta in terms of T}.
\end{proof}

\begin{lemma}
    For $\delta \in \left[\left(\frac{1}{(1+\frac{1}{a-1})^{n-2a+2}}\right)^{1/T}, \left(\frac{1}{(1+\frac{1}{a})^{n-2a}}\right)^{1/T}\right]$, the value of $t_a$ which satisfies ~\cref{eq: ta in terms of T} strictly monotonically decreases  with $\delta$ from $\frac{T}{n-2a+2}$ to $0$.
\end{lemma}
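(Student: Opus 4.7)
The plan is to mirror the strategy used for Lemma~\ref{lem:incdec} (the case $a=1$): verify the boundary values explicitly, then use implicit differentiation to show that $t_a(\delta)$ has strictly negative derivative on the open interval.

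First I would verify the two endpoint values. At the upper endpoint $\delta_{\text{hi}} = \bigl(1/(1+1/a)^{n-2a}\bigr)^{1/T}$, substitution into \cref{eq: ta in terms of T} shows $t_a=0$ is the unique solution: the RHS evaluated at $t_a=0$ gives $\tfrac{a^{n-2a+2}}{a^{2}(a+1)^{n-2a}} = \bigl(\tfrac{a}{a+1}\bigr)^{n-2a}$, which matches $\delta_{\text{hi}}^{T}$. At the lower endpoint $\delta_{\text{lo}} = \bigl(1/(1+1/(a-1))^{n-2a+2}\bigr)^{1/T}$, a direct check with $t_a = T/(n-2a+2)$ (so that $\delta_{\text{lo}}^{t_a}=(a-1)/a$, $a\delta_{\text{lo}}^{t_a}=a-1$, and $1+a\delta_{\text{lo}}^{t_a}=a$) makes the RHS equal $(a-1)^{n-2a+2}/a^{n-2a+2}=\bigl(\tfrac{a-1}{a}\bigr)^{n-2a+2}$, which equals $\delta_{\text{lo}}^{T}$. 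Uniqueness of the solution at each endpoint follows from the strict monotonicity of $h_a$ established in \cref{lem: gen montonic increasing}.

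Next I would establish strict monotonic decrease via implicit differentiation. Taking logs of \cref{eq: ta in terms of T}:
\begin{equation*}
T\ln\delta = (n-2a)\ln a + (n-2a+2)\,t_a\ln\delta - (n-2a)\ln\bigl(1+a\delta^{t_a}\bigr).
\end{equation*}
Differentiating with respect to $\delta$ (using $\tfrac{d}{d\delta}\delta^{t_a}=\delta^{t_a}(t_a'\ln\delta + t_a/\delta)$) and collecting terms yields
\begin{equation*}
\frac{T}{\delta} \;=\; K(\delta,t_a)\Bigl[t_a'\ln\delta + \frac{t_a}{\delta}\Bigr], \qquad K(\delta,t_a) := \frac{(n-2a+2) + 2a\delta^{t_a}}{1+a\delta^{t_a}},
\end{equation*}
so that $t_a' = \dfrac{T - K t_a}{K\,\delta\,\ln\delta}$. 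Since $\delta\in(0,1)$, we have $\ln\delta<0$ and $K>0$, so the denominator is strictly negative. It therefore suffices to show $Kt_a < T$ throughout the interval, which will give $t_a'<0$.

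The central obstacle is proving the inequality $K(\delta,t_a)\,t_a < T$, and this is where I expect the main work. My plan is to exploit the boundary values: at $\delta_{\text{lo}}$, substituting $t_a = T/(n-2a+2)$ and $a\delta^{t_a}=a-1$ gives $Kt_a = \tfrac{nT}{a(n-2a+2)}$, and $Kt_a<T$ reduces to $n < a(n-2a+2)$, i.e.\ $(a-1)(n-2a) > 0$, which holds since $a\ge 2$ and $n>2a$ in this regime. At the upper endpoint $t_a=0$, trivially $Kt_a=0<T$. To extend this to all interior $\delta$, I would argue by contradiction: if $Kt_a = T$ at some interior $\delta^\star$, then at $\delta^\star$ we have $t_a'=0$ (provided the numerator vanishes) while $t_a$ is otherwise continuously differentiable; combined with the unique-solution structure from \cref{lem: gen montonic increasing} and the sign of $t_a'$ near the endpoints (negative near $\delta_{\text{hi}}$ since $Kt_a\to 0<T$), this would force $t_a$ to be non-monotone, contradicting a second analysis of the behavior of the curve $h_a(t_a)=\delta^T$ as $\delta$ varies. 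Alternatively, one can strengthen the bound by noting that $K$ is itself monotone in $t_a$ and checking that the product $Kt_a$ stays below $T$ along the trajectory parameterized by $\delta$. Either route combined with the boundary checks completes the strictly monotonic decrease from $T/(n-2a+2)$ to $0$.
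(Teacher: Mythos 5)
Your derivative computation is correct and matches the paper's (your $t_a' = \frac{T-Kt_a}{K\delta\ln\delta}$ is the paper's expression up to the positive factor $1+a\delta^{t_a}$), and your endpoint checks reproduce what the paper establishes in \cref{lem: val t_a at boundary}. However, there is a genuine gap exactly where you flag it: you never actually prove $K(\delta,t_a)\,t_a < T$ on the interior of the interval. Your contradiction route does not work as stated --- knowing that $t_a'=0$ at some interior $\delta^\star$ where $Kt_a=T$ does not by itself produce a contradiction; you defer to ``a second analysis of the behavior of the curve,'' which is precisely the missing argument. Likewise, the alternative of ``checking that $Kt_a$ stays below $T$ along the trajectory'' is the claim to be proved, not a proof.

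The missing observation is elementary and is what the paper exploits implicitly: since $a<n/2$ in this regime, $K=\frac{(n-2a+2)+2a\delta^{t_a}}{1+a\delta^{t_a}} < n-2a+2$, so $Kt_a<T$ follows from the bound $t_a\le \frac{T}{n-2a+2}$. Equivalently, writing the numerator as the paper does, $T(1+a\delta^{t_a})-\bigl[(n-2a+2)+2a\delta^{t_a}\bigr]t_a = \bigl[T-(n-2a+2)t_a\bigr]+a\delta^{t_a}\bigl[T-2t_a\bigr]$, which is strictly positive whenever $t_a\le \frac{T}{n-2a+2}$ (the second bracket is positive because $n>2a$ forces $t_a\le T/2$). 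Combining this with your lower-endpoint computation ($t_a=\frac{T}{n-2a+2}$ at $\delta_{\mathrm{lo}}$, by \cref{lem: val t_a at boundary}) gives a forward-invariance bootstrap: the derivative is strictly negative at $\delta_{\mathrm{lo}}$, and it remains strictly negative as long as $t_a\le \frac{T}{n-2a+2}$, so $t_a$ can never re-cross that threshold and is therefore strictly decreasing on the whole interval, reaching $0$ at $\delta_{\mathrm{hi}}$ by your upper-endpoint check. This is the paper's argument (cf.\ the analogous step in \cref{lem:incdec}); with this inequality inserted, your proposal closes.
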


\begin{proof}
    We know that when ~\cref{eq: ta in terms of T} holds, we have, 
    \[
    \delta^T = \frac{1}{a^2}\frac{(a\delta^{t_a})^{n-2a+2}}{(1+a\delta^{t_a})^{n-2a}}
    \]
    Differentiating with respect to $\delta$, we obtain that 
    \begin{align*}
        \frac{T}{a^2\delta}\frac{(a\delta^{t-a})^{n-2a+2}}{(1+a\delta^{t_a})^{n-2a}}&= \frac{1}{a^2}\dfrac{(1+a\delta^{t_a})^{n-2a}(n-2a+2)(a\delta^{t_a})^{n-2a+1}a(\delta^{t_a}\ln{\delta}\frac{\partial t_a}{\partial \delta}+t_a\delta^{t_a-1})}{(1+a\delta^{t_a})^{2n-4a}}\\
        &-\frac{1}{a^2}\dfrac{(a\delta^{t_a})^{n-2a+2}(n-2a)(1+a\delta^{t_a})^{n-2a-1}a(\delta^{t_a}\ln{\delta}\frac{\partial t_a}{\partial \delta}+t_a\delta^{t_a-1})}{(1+a\delta^{t_a})^{2n-4a}}
    \end{align*}
    Hence,
    \begin{align*}
        \frac{T}{\delta}(1+a\delta^{t_a})(a\delta^{t_a})&=[(1+a\delta^{t_a})(n-2a+2)a-(a\delta^{t_a})(n-2a)a]\delta^{t_a-1}(\delta\ln{\delta}\frac{\partial t_a}{\partial \delta}+t_1)
    \end{align*}

    Rearranging the terms, we get 
    \begin{align*}
        T(1+a\delta^{t_a})&=[(n-2a)+2(1+a\delta^{t_a})][\delta\ln{\delta}\frac{\partial t_a}{\partial \delta}+t_a]
    \end{align*}

    Therefore, separating variables, we get
    \begin{align*}
        \delta\ln{\delta}\frac{\partial t_a}{\partial\delta}&=[T-(n-2a+2)t_a]+(a\delta^{t_a})[T-2t_a]
    \end{align*}
    Finally, we can write $\partial t_a/ \partial\delta$ as:
    \begin{align*}
        \frac{\partial t_a}{\partial\delta} &= \frac{[T-(n-2a+2)t_a]+(a\delta^{t_a})[T-2t_a]}{\delta\ln{\delta}}
    \end{align*}

    When $\delta = \left(\frac{1}{(1+\frac{1}{a-1})^{n-2a+2}}\right)^{1/T}$, we have for the optimal $\tee$, by ~\cref{lem: val t_a at boundary}, we have that $t_a = \frac{T}{n-2a+2}$. Observe that at this point, the derivative of $T$ with respect to $\delta$ is negative, and it remains negative as long as $t_a \leq T/(n-2a+2)$. Hence, in the domain $\delta \in \left[\left(\frac{1}{(1+\frac{1}{a-1})^{n-2a+2}}\right)^{1/T}, \left(\frac{1}{(1+\frac{1}{a})^{n-2a}}\right)^{1/T}\right]$, we have that $t_a$ decreases strictly monotonically with respect to $\delta$.
\end{proof}




We now prove a lemma similar to  \cref{cor:equalspace}.
\begin{lemma}
    At the minima, for the first non-zero and last non-$T$ time we have that the ads are equispaced, i.e., $t_{a}-t_{a-1} = t_{a}-0 = T-t_{n-a} = t_{n-a+1}-t_{n-a}$.
\end{lemma}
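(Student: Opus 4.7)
The outer equalities $t_a - t_{a-1} = t_a$ and $t_{n-a+1} - t_{n-a} = T - t_{n-a}$ are immediate from the subspace constraints $t_{a-1} = 0$ and $t_{n-a+1} = T$ that define the regime $a \neq 1$. Hence the substantive content of the lemma is the single identity $t_a = T - t_{n-a}$, or equivalently, in multiplicative form, $T_a \cdot T_{n-a} = \delta^T$. The plan is to prove this identity directly by plugging the closed-form expressions from \cref{sec:assume} into each other.

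Specifically, applying \cref{lem:ta} with $i = n-2a$ yields
\[
T_{n-a} \;=\; \frac{1}{a}\cdot\frac{(aT_a)^{n-2a+1}}{(aT_a+1)^{n-2a}}.
\]
Multiplying by $T_a$ and pulling a factor of $a$ into the numerator gives
\[
T_a\cdot T_{n-a} \;=\; \frac{a^{n-2a}\,T_a^{\,n-2a+2}}{(aT_a+1)^{n-2a}} \;=\; \frac{(aT_a)^{n-2a+2}}{a^{2}\,(aT_a+1)^{n-2a}},
\]
which by \cref{lem:endcondition} is exactly $\delta^{T}$. Taking logarithms base $\delta$ yields $t_a + t_{n-a} = T$, and chaining this with the two trivial endpoint equalities above produces the four-term equality claimed in the lemma.

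As a sanity check, the same conclusion also follows from a symmetry argument: $L$ is invariant under the reflection $\sigma: t_i \mapsto T - t_{n-i}$, which also preserves the subspace obtained by fixing $t_0 = \cdots = t_{a-1} = 0$ and $t_{n-a+1} = \cdots = t_n = T$, since those constraints are themselves symmetric under $i \leftrightarrow n-i$. By \cref{thm: unique-minima} the minimizer of $L$ on $\mathcal{D}$ is unique, so it must be a fixed point of $\sigma$, and in particular $t_a = T - t_{n-a}$. The only real obstacle in either route is bookkeeping — verifying that the algebraic simplification collapses to precisely the right-hand side of \cref{lem:endcondition}, or confirming that $\sigma$ genuinely maps the relevant constraint set to itself — and there is no deeper difficulty.
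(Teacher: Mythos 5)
Your main argument is correct and essentially the same as the paper's: both use \cref{lem:ta} at $i=n-2a$ together with \cref{lem:endcondition}, the paper phrasing it as the ratio $T_{n-a+1}/T_{n-a}=T_a$ while you phrase it as the product $T_a\cdot T_{n-a}=\delta^T$, which is algebraically identical, and the endpoint equalities follow trivially from $t_{a-1}=0$, $t_{n-a+1}=T$ in both. Your auxiliary symmetry argument via the reflection $t_i\mapsto T-t_{n-i}$ and uniqueness of the minimizer is also sound (and echoes the paper's own use of that reflection in \cref{lem:zero}), but it is only a supplementary check here.
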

\begin{proof}
    From ~\cref{eq:general t_n}, we obtain that \[
    T_{n-a+1}=\delta^T=\frac{1}{a^2}\frac{(aT_a)^{n-2a+2}}{(1+aT_a)^{n-2a}}
    \]
    From ~\cref{eq:gen t_i}, we obtain that \[
         T_{a+i} = \frac{1}{a}\cdot \frac{(aT_a)^{i+1}}{(aT_a+1)^{i}}
    \]
    Substituting $i=n-2a$, we get
    \[
        T_{n-a} = \frac{1}{a}\cdot \frac{(aT_a)^{n-2a+1}}{(aT_a+1)^{n-2a}}
    \]

    Taking the ratio, we obtain 
    \[
    \frac{T_{n-a+1}}{T_{n-a}} = T_a = \frac{T_{a}}{T_{a-1}}
    \]

   This implies
    \[
    \delta^{t_{n-a+1}-t_{n-a}} = \delta^{t_a-t_{a-1}}
    \]

    Using the fact that $t_{n-a+1} = T$ and $t_{a-1} = 0$, we have
    \[
    \delta^{T-t_{n-a}} = \delta^{t_a-0}
    \]
    Taking $\log$ on both sides, we get
    \[
    T- t_{n-a} = t_a - 0
    \]
    This proves our lemma.
\end{proof}

Now, similar to \cref{lem:equalratio}, we have the following lemma.
\begin{lemma}
    \begin{equation*}
        \frac{T_i}{T_{i-1}} = \frac{aT_1}{aT_1+1}
    \end{equation*}

\end{lemma}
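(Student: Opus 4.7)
The plan is to mirror the proof of the analogous ratio lemma from the $a=1$ case (Lemma on $T_i/T_{i-1}=T_1/(1+T_1)$), now using the closed-form for $T_i$ in the general-$a$ setting. The preceding lemma gives
\[
T_{a+j} \;=\; \frac{1}{a}\cdot\frac{(aT_a)^{j+1}}{(aT_a+1)^{j}}
\]
for the valid range of $j$, so I would simply write $i = a+j$, substitute this expression into both the numerator and the denominator of $T_i/T_{i-1}$, and simplify.

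Concretely, the $1/a$ prefactors cancel immediately, and the powers of $aT_a$ in the numerator and $aT_a+1$ in the denominator telescope down to exactly one surviving factor apiece, leaving
\[
\frac{T_i}{T_{i-1}} \;=\; \frac{(aT_a)^{(j+1)-j}}{(aT_a+1)^{j-(j-1)}} \;=\; \frac{aT_a}{aT_a+1}.
\]
Here $T_a$ plays the role that $T_1$ does in the $a=1$ case (indeed, when $a=1$ we recover $T_1/(1+T_1)$); I read the statement's ``$T_1$'' as referring to the first non-trivially-determined variable, namely $T_a$, consistent with the rest of the section where all $T_i$'s for $i\ge a$ are expressed in terms of $T_a$.

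The only care needed is with index bookkeeping: the closed-form for $T_{a+j}$ applies for $j\in\{0,1,\ldots,n-2a\}$, so the ratio identity holds for $i\in\{a+1,\ldots,n-a\}$ (the cases adjacent to the boundary ads at $t_0=\cdots=t_{a-1}=0$ and $t_{n-a+1}=\cdots=t_n=T$ are already covered by the equispacing at the endpoints shown in the preceding lemma). There is no substantive obstacle in the argument; it is a direct algebraic corollary of the closed-form expression, exactly analogous to the earlier $a=1$ derivation. The main conceptual value is that, combined with the endpoint equality $t_a-t_{a-1}=t_{n-a+1}-t_{n-a}$, this ratio shows the interior ads are geometrically (hence, in log-time, uniformly) spaced, matching the structural picture described in the implications section.
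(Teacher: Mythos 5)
Your proposal is correct and follows essentially the same route as the paper: both substitute the closed form $T_{a+i}=\frac{1}{a}\cdot\frac{(aT_a)^{i+1}}{(aT_a+1)^{i}}$ from the preceding lemma into the ratio and cancel, obtaining $\frac{aT_a}{aT_a+1}$. Your reading of the statement's ``$T_1$'' as $T_a$ matches what the paper's own proof actually establishes (the displayed ratio there is likewise $\frac{aT_a}{aT_a+1}$), so the notational mismatch is in the paper's statement, not in your argument.
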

\begin{proof}
From ~\cref{eq:gen t_i}, we obtain that \[
         T_{a+i} = \frac{1}{a}\cdot \frac{(aT_a)^{i+1}}{(aT_a+1)^{i}}
    \]    

    \begin{align*}
        \frac{T_{a+i+1}}{T_{a+i}} &= \frac{1}{a}\cdot \frac{(aT_a)^{i+2}}{(aT_a+1)^{i+1}} \times a \cdot \frac{(aT_a+1)^i}{(aT_a)^{i+1}}\\
        &=\frac{aT_a}{aT_a+1}
    \end{align*}
    
\end{proof}
Hence, the ratio between successive terms is some constant. Now we prove a lemma similar to  \cref{lem:incdec}.
\begin{lemma}
    The value of $t_a$ decreases as $\delta$ increases.
\end{lemma}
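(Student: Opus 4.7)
The plan is to deduce this directly from the derivative formula derived in the immediately preceding lemma, namely
\[
\frac{\partial t_a}{\partial \delta} \;=\; \frac{[T-(n-2a+2)t_a] \;+\; (a\delta^{t_a})[T-2t_a]}{\delta \ln \delta},
\]
together with the boundary value established in \cref{lem: val t_a at boundary}. Concretely, I want to show that within each validity interval the numerator is non-negative and the denominator is strictly negative, so $\partial t_a / \partial \delta < 0$; stitching these intervals together will then give global monotonic decrease in $\delta$.

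First I would argue that the denominator $\delta \ln \delta$ is strictly negative for every $\delta \in (0,1)$, which is immediate. Then for the numerator, the key input is the bound $t_a \le T/(n-2a+2)$ throughout the relevant domain $\delta \in \left[\left(\tfrac{1}{(1+\frac{1}{a-1})^{n-2a+2}}\right)^{1/T}, \left(\tfrac{1}{(1+\frac{1}{a})^{n-2a}}\right)^{1/T}\right]$. This bound follows from \cref{lem: val t_a at boundary}, which pins $t_a = T/(n-2a+2)$ at the left endpoint, combined with the monotonicity argument of the preceding lemma (which establishes strictly decreasing behavior on the interior). Given this bound, $T-(n-2a+2)t_a \ge 0$, and since $n-2a+2 \ge 2$ we also get $t_a \le T/2$, so $T - 2t_a \ge 0$; because $a\delta^{t_a} > 0$, the second summand is strictly positive whenever $t_a < T/2$. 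Hence the numerator is positive and the quotient is negative.

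To finish, I would glue together the different regimes as $\delta$ sweeps from $0$ to $1$. When $\delta$ is small enough that $a = 1$, the desired decrease in $t_1$ is exactly \cref{lem:incdec}. As $\delta$ crosses the upper threshold $\left(\tfrac{1}{(1+1/a)^{n-2a}}\right)^{1/T}$, the current $t_a$ hits $0$ and the active index increments to $a+1$; continuity of the minimizer (ensured by the unique-minimum structure from \cref{thm: unique-minima}) plus the interior decrease derived above means $t_a$ is strictly decreasing across the transition as well. Iterating this regime-by-regime argument gives the claim for every value of $a$.

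The main obstacle is really only a bookkeeping one: verifying that the sign conditions on the numerator hold uniformly up to the right endpoint of each validity interval, which reduces to the bound $t_a \le T/(n-2a+2)$ that \cref{lem: val t_a at boundary} supplies at the left endpoint and the interior monotonicity propagates to the rest. No new computation is needed beyond reading off signs from the derivative expression.
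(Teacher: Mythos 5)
Your argument is correct in substance but takes a genuinely different route from the paper's proof of this lemma. You deduce the claim from the sign of the derivative expression $\partial t_a/\partial\delta = \bigl([T-(n-2a+2)t_a]+(a\delta^{t_a})[T-2t_a]\bigr)/(\delta\ln\delta)$ obtained in the immediately preceding lemma, together with the boundary value from \cref{lem: val t_a at boundary}; in effect the statement becomes a corollary of that lemma plus a gluing step across the validity intervals. The paper instead gives a derivative-free argument: it reads the stationarity relation \cref{eq:general t_n} as $g(T_a)=\delta^T$ with $g$ strictly increasing in $T_a$ (\cref{lem: gen montonic increasing}), observes that $\delta^T$ increases with $\delta$, concludes that $T_a$ increases, and then infers that $t_a$ decreases from $T_a=\delta^{t_a}$. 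Your route costs more bookkeeping (the sign analysis you redo is exactly the content of the previous lemma), but it is arguably tighter at the final step: the paper's closing inference ``$T_a$ increases and $T_a=\delta^{t_a}$, hence $t_a$ decreases'' is delicate because $\delta$ itself is changing, and the derivative computation you invoke is what handles that joint dependence explicitly. Two small cautions on your write-up: take the bound $t_a\le T/(n-2a+2)$ directly from the statement of the preceding lemma (which already asserts the decrease from $T/(n-2a+2)$ to $0$ on each interval) rather than ``propagating'' it via the monotonicity you are proving, or the argument reads as circular; and at a regime transition the first non-zero time jumps up to $T/(n-2a)$ when the active index increments, so ``strictly decreasing across the transition'' only holds for the fixed index $a$ (whose value then remains $0$), a gluing step the paper does not attempt and the lemma as stated does not require.
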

\begin{proof}
    From ~\cref{eq:general t_n}, we obtain that \[
    T_{n-a+1}=\frac{1}{a^2}\frac{(aT_a)^{n-2a+2}}{(1+aT_a)^{n-2a}}
    \]

    We have already established that the RHS is a monotonically increasing function in ~\cref{lem: gen montonic increasing}. Observe that $T_{n-a+1} = \delta^T$, therefore as $\delta$ increases, $T_{n-a+1}$ increases. This means that the LHS increases with delta. Since the RHS is monotonic, $T_a$ also increases with $\delta$. As $T_a = \delta^{t_a}$, $t_a$ decreases as $\delta$ increases.
\end{proof}

We now summarize here the behavior of our near-optimal solution.

\begin{observation}\label{obs:behavior1}
The near-optimal ad schedule exhibits the following patterns as $\delta$ varies:
\begin{itemize}
\item As $\delta \to 0$, ads are placed at evenly spaced intervals.
\item As $\delta$ increases, more ads concentrate at times $0$ and $T$.
\item As $\delta$ increases, the first $t_i > 0$ moves towards 0, and the last $t_j < T$ moves towards $T$. The remaining ads are evenly spaced between $t_i$ and $t_j$.
\end{itemize}
\end{observation}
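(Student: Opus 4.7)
The plan is to derive each of the three bulleted behaviors as a direct consequence of lemmas already proven in \cref{sec:merge,sec:assume} and the present section. Because the heavy lifting has been done, the proof is essentially an assembly of named results rather than new computation.

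First I would handle the limit $\delta \to 0$. For $\delta$ sufficiently small, the condition $T > n\log_{1/\delta}(2)$ holds automatically, placing us in the $a=1$ regime of \cref{sec:merge}. I would then invoke the lemma (stated just before \cref{lem:incdec} in the existing text) showing $t_i \to iT/n$, by combining (i) the equispacing of the interior timings from \cref{cor:equalspace}, (ii) the symmetric end gap $t_1-t_0=t_n-t_{n-1}$ from \cref{lem:firstlastgap}, and (iii) the limit computation that the optimal $T_1$ converges to $\delta^{T/n}$ via \cref{cor: t_1 wrt t_n}.

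For the concentration claim (middle bullet), I would read it off \cref{thm:domain} together with the definition of $a$ in \cref{alg:aT_a getting}. As $\delta$ grows toward $1$, $\delta^T$ grows and the boundary condition $a^{n-2a}/(a+1)^{n-2a} > \delta^T$ fails for smaller values of $a$. By \cref{thm:domain}, the optimal schedule then lives in $\mathcal{D}_{a+1}$ rather than the interior of $\mathcal{D}_a$, so an additional ad is forced to both $0$ and $T$. I would make the monotonicity of the discrete function $a(\delta)$ explicit by comparing the two boundary inequalities at $\delta$ and $\delta'>\delta$ and noting that increasing $\delta^T$ can only invalidate the first condition, never restore it.

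For the final bullet, the monotone decrease of $t_a$ in $\delta$ is exactly \cref{lem:incdec} when $a=1$ and the final lemma of \cref{sec:implication} for general $a$. The corresponding movement of $t_{n-a}$ toward $T$ follows from the left-right symmetry of $L$ under the involution $t_i \mapsto T-t_{n-i}$, which preserves the loss and hence its unique minimum. The equispacing of the remaining ads between $t_a$ and $t_{n-a}$ is the constant-ratio identity $T_{a+i+1}/T_{a+i} = aT_a/(aT_a+1)$ already derived; in the $t_i = \log_\delta T_i$ coordinates, a constant ratio is exactly a constant additive gap. The step I expect to be the main obstacle is the monotonicity argument for $a(\delta)$ in the second bullet, since both boundary conditions of \cref{alg:aT_a getting} depend non-trivially on $\delta$ and require a careful continuity-plus-comparison argument; the first and third bullets amount to direct citations of prior lemmas.
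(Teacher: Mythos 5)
Your proposal is correct and follows essentially the same route as the paper: the observation is stated as a summary of the preceding lemmas of \cref{sec:implication} (the $\delta\to 0$ limit lemma together with \cref{cor:equalspace} and \cref{lem:firstlastgap}, \cref{lem:incdec} and its general-$a$ analogue, the constant-ratio lemmas, and the domain analysis of \cref{sec:assume}), which is exactly the assembly you describe. The step you flag as the main obstacle—monotonicity of $a(\delta)$—is handled in the paper by the lemma giving explicit consecutive $\delta$-intervals $\left[\left(\tfrac{1}{(1+\frac{1}{a-1})^{n-2a+2}}\right)^{1/T}, \left(\tfrac{1}{(1+\frac{1}{a})^{n-2a}}\right)^{1/T}\right]$ on which the $t_a$-equation is valid (with $t_a$ decreasing from $T/(n-2a+2)$ to $0$ across each interval), which makes the incrementing of $a$ with $\delta$ immediate rather than requiring a separate comparison of the two boundary inequalities.
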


The above properties indicate a form of clustering behavior in our near-optimal solution. When $\delta \to 0$, it is optimal to display the advertisements at uniformly spaced intervals. As $\delta$ increases, a greater number of advertisements are placed at the endpoints, $0$ and $T$, while the remaining ones are evenly distributed in the interior of the interval. In the limit as $\delta \to 1$, the majority of the ads are concentrated at $0$ and $T$, with only a few ads placed uniformly in between.

So far, our analysis has assumed that ads are instantaneous. We now extend our approach to the setting where all ads have equal size and show that a near-optimal schedule can still be achieved.

\subsection{Advertisements with equal sizes}
Let us dive into the case where the ads are of equal size, say $s$. We want that at any point in time, there is only one ad schedule. For this problem, we consider a new loss function which is defined as
\begin{equation}
    L_s(\tee) = \sum_{j < i}\delta^{t_i - t_j - s}
\end{equation}

The above loss function can be written as
\begin{equation}\label{eq:size}
    L_s(\tee) = \delta^{-s}\sum_{j < i}\delta^{t_i - t_j}
\end{equation}

Note that the loss function of \cref{eq:size} is similar to $L(\tee)$. Hence, the minima obtained for $L(\tee)$ will also be minima for $L_s(\tee)$. However, for feasibility, we need that the ads are at least $s$ time apart. Observe that for \cref{eq:size}, we need an interior point solution, and hence we work under the assumption that $T > n\log_{1/\delta}(2)$ (i.e., $a=1$).

\begin{lemma}
    When all ads are equal in length, if the second ad can be scheduled as per our near-optimal schedule, without overlapping with the first ad, then all ads can be scheduled as per our near-optimal schedule.
\end{lemma}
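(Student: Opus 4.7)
The plan is to use the structural properties of the near-optimal schedule established earlier in the Implications section, namely Lemma \ref{lem:firstlastgap} (the first and last gaps are equal) and Corollary \ref{cor:equalspace} together with Lemma \ref{lem:equalratio} (all interior gaps are equal to each other), and to compare these gaps with the initial gap $t_1 - t_0 = t_1$. The key observation is that the interior gaps are strictly larger than the first gap, so feasibility of the first (and by symmetry the last) gap already implies feasibility of every gap.

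More concretely, I would proceed as follows. First, let $g_0 = t_1 - t_0 = t_1$ denote the first gap, let $g_n = t_n - t_{n-1}$ denote the last gap, and let $g = t_{i+1} - t_i$ for $i \in \{1,\dots,n-2\}$ denote the common interior gap. By Lemma \ref{lem:firstlastgap}, $g_n = g_0 = t_1$, so the last gap is exactly the first gap. By Lemma \ref{lem:equalratio}, for $i \in \{2,\dots,n-1\}$ we have
\[
\delta^{g} \;=\; \delta^{t_i - t_{i-1}} \;=\; \frac{T_1}{1+T_1} \;=\; \frac{\delta^{t_1}}{1+\delta^{t_1}}.
\]

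Next I would compare $g$ with $g_0 = t_1$. Since $\delta \in (0,1)$, we have $1 + \delta^{t_1} > 1$, and therefore
\[
\delta^{g} \;=\; \frac{\delta^{t_1}}{1+\delta^{t_1}} \;<\; \delta^{t_1}.
\]
Because $\delta^x$ is a strictly decreasing function of $x$ when $\delta \in (0,1)$, this inequality yields $g > t_1 = g_0$. Thus every interior gap is strictly larger than the first gap.

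Finally, I would combine these facts. The hypothesis ``the second ad can be scheduled without overlapping the first'' is precisely the statement $g_0 = t_1 \ge s$. By the previous step, every interior gap satisfies $g > t_1 \ge s$, and by Lemma \ref{lem:firstlastgap} the last gap also satisfies $g_n = t_1 \ge s$. Consequently, every consecutive pair of ads in the near-optimal schedule is separated by at least $s$, so the full schedule is feasible when each ad has length $s$. There is no real obstacle here beyond being careful that the case analysis covers the first, interior, and last gaps; the heavy lifting has already been done in Lemmas \ref{lem:firstlastgap} and \ref{lem:equalratio}.
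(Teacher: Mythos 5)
Your proof is correct and follows essentially the same route as the paper: both use \cref{lem:firstlastgap} to handle the last gap and the ratio $\frac{T_i}{T_{i-1}} = \frac{T_1}{1+T_1} < T_1$ from \cref{lem:equalratio} to show every interior gap is at least the first gap $t_1 \ge s$. The only detail the paper adds is the bookkeeping step of placing the last ad at $T-s$ and solving the scheduling problem over the shortened horizon $T' = T - s$ so the final ad fits inside $[0,T]$, which does not change the core argument.
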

\begin{proof}
    To construct the schedule, we place our first ad at $t_0=0$ and the last ad at $t_n= T-s$. We then solve the problem for $T' = T-s$.

    We now want to show that if the second ad can be scheduled optimally, then all ads can be scheduled optimally. To schedule the second ad, we need that $t_1-t_0 \geq s$, which implies $t_1 \geq s$ and hence, $T_1 = \delta^{t_1} \leq \delta^s$.  This immediately ensures that the second last ad i.e., $t_{n-1}$ can be scheduled as $t_1-t_0=t_n-t_{n-1}$ (using  \cref{lem:firstlastgap}).

We also know that for $i\in \{2, \dots, n-1\}$ 
\begin{align*}
    \frac{T_i}{T_{i-1}} &= \frac{T_1}{1+T_1}\\& \leq T_1 \hspace{1cm}\text{(Using \cref{lem:equalratio})}\\
\end{align*}

Therefore, 
\begin{align*}
    \delta^{t_i-t_{i-1}} &\leq T_1
    \\&\leq \delta^s 
\end{align*}

Taking $\log$ to the base $\delta$ on both sides, we obtain
\[
    t_i - t_{i+1}\geq s
\]

Hence, there is enough space for ads to be scheduled optimally.
\end{proof}

So far, we have assumed that the number of ads $n$ is known. In the next section, we show how to determine the optimal number of ads for a user.

\subsection{Optimizing the number of ads to display}

Given that the relative strengths of mere exposure and operant conditioning are known, a natural question arises: how many ads should be shown to maximize the overall reward? To answer this, we observe that for any fixed number of ads $n$, the loss due to operant conditioning can be computed exactly. In addition, $\sum B(i)$ can be calculated since it is just a function of the number of pulls, allowing us to calculate the total reward $R(\tee)$ precisely.

To find the optimal number of ads, we evaluate $R(\tee)$ for each $n \in \{1, \dots, \tilde{n}\}$, where $\tilde{n}$ is the maximum number of ads that can fit within the time horizon $T$. The value of $n$ that maximizes $R(\tee)$ gives the optimal number of ads. ~\cref{alg: optimal number of ads and schedule} performs this procedure and returns both the optimal number of ads and a corresponding near-optimal schedule in quasi-quadratic time with respect to $\tilde{n}$. Note that in most real-world scenarios, the maximum number of ads is either known in advance or can be learned efficiently. Moreover, prior work \cite{redcircle2023ads,pandorameasuring} suggests that the maximum number of ads is typically not very large. For example, platforms like YouTube and Spotify generally display around 10 and 5 ads per hour, respectively. Hence, we can conclude that ~\cref{alg: optimal number of ads and schedule} performs in quasi-linear time for all practical purposes.

\section{Experiments}\label{sec:experiment}
In this section, we present our experimental results. All experiments were conducted on an 11th-generation Intel Core i5 laptop with 8GB of RAM, and each experiment completed within a few seconds. Since our experiments are designed to be lightweight, they can be easily run on any personal computer. The complete code is available at \href{https://anonymous.4open.science/r/Ads-that-Stick-5E13/README.md}{https://anonymous.4open.science/r/Ads-that-Stick-5E13/README.md}.

 We evaluate the performance of our near-optimal strategy through four distinct experiments:
\begin{enumerate}
    \item How does the near-optimal strategy vary as the value of $\delta$ changes?

    \item Comparison between our strategy and other baseline strategies?

    \item How does the loss function change with a change in the number of ads? 

    \item How to find the optimal number of ads?
\end{enumerate}

\subsection{Variation in near optimal strategy with change in $\delta$}\label{sec:ex_first}
In this experiment, we illustrate how our strategy evolves as the parameter $\delta$ increases from 0 to 1. ~\cref{fig:exp1a} depicts the outcome when the number of ads $n$ is odd.
Initially, for $\delta \leq 0.4$, the ads are placed nearly equidistantly. As $\delta$ increases beyond 0.4, the ads gradually bifurcate—half of them shift towards $t = 0$, and the other half move towards $t = 20$. This experiment aligns precisely with the behavior we obtained in ~\cref{obs:behavior1}.

When $n$ is even, we observe that the pattern is largely similar with the odd case, with one key difference: for odd number of ads, the $\frac{n+1}{2}$-th ad remains fixed at $t = T/2$, regardless of the value of $\delta$ (See ~\cref{fig:exp1a}). But for an even number of ads (See  \cref{fig:exp1b}), the optimal times to show the ads in the first half shift towards $0$ while the optimal time to show ads in the second half shifts towards $T$ as the value of $\delta$ increases. Hence, it also supports the theoretical behavior described in ~\cref{obs:behavior1}.

\begin{figure}
    \centering
    \begin{subfigure}{0.48\textwidth}
        \includegraphics[width=\textwidth]{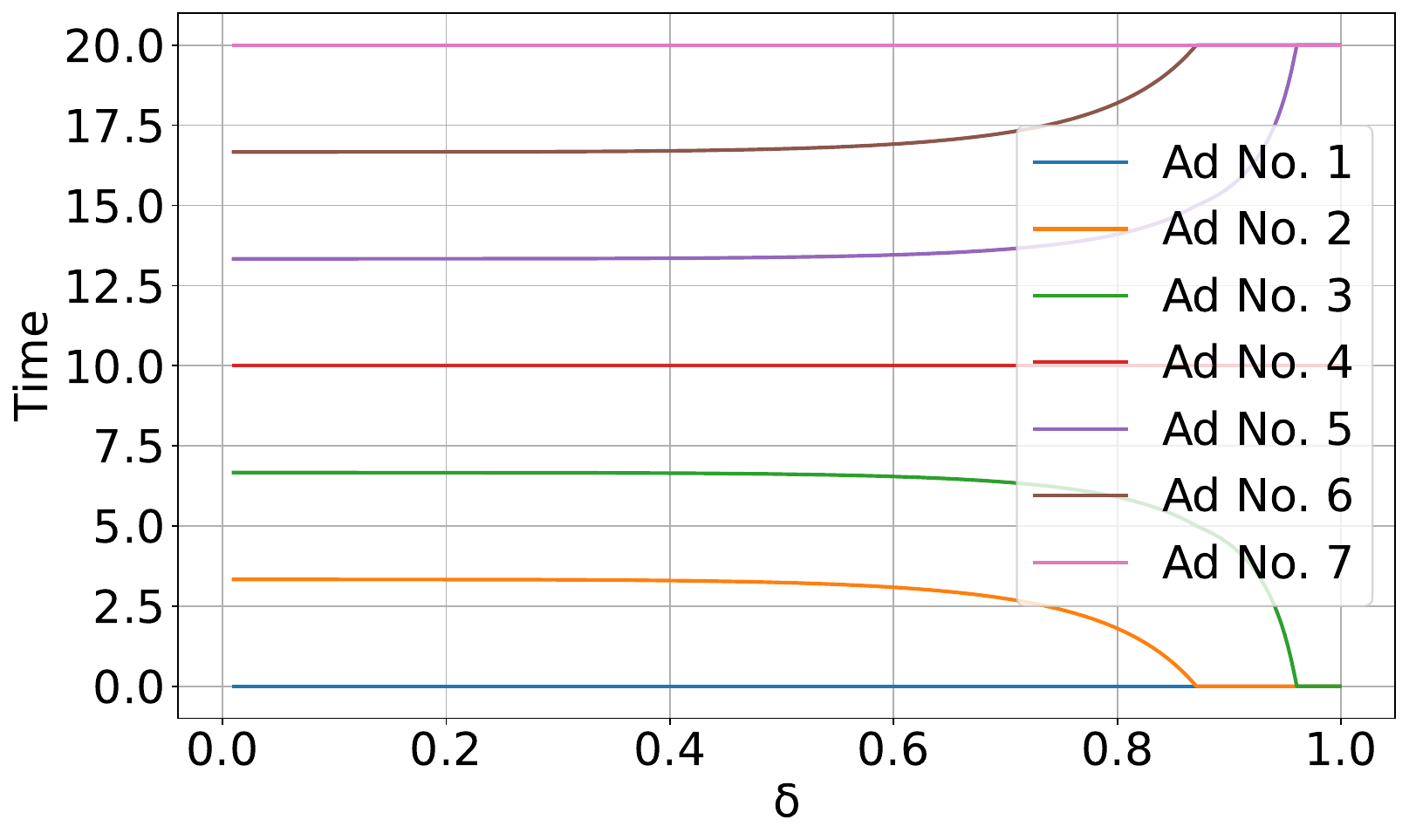}
        \caption{Our strategy for $n=7$, $T=20$}
        \label{fig:exp1a}
    \end{subfigure}
   \begin{subfigure}{0.48\textwidth}
    \centering
    \includegraphics[width=\textwidth]{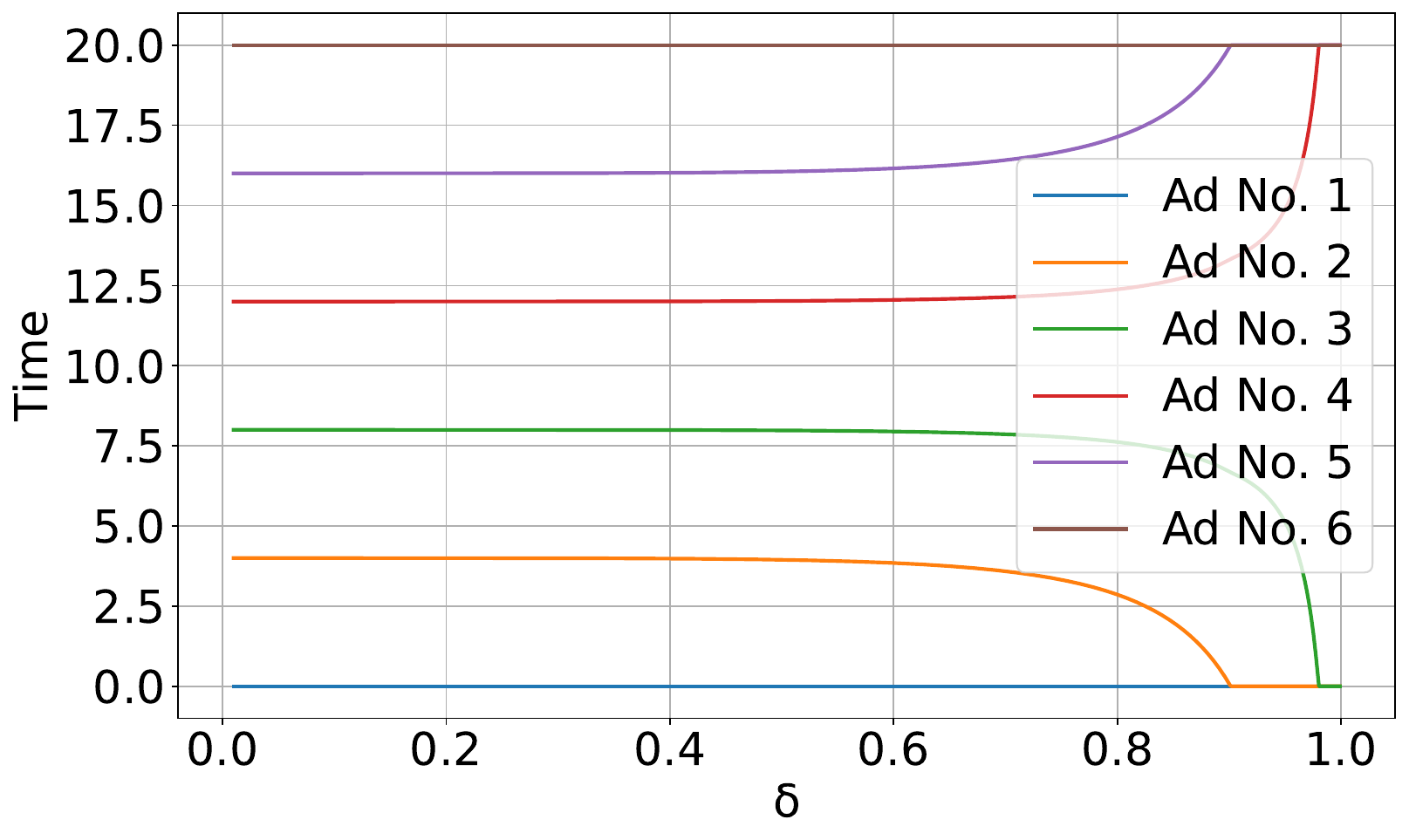}
    \caption{Our strategy for $n=6$, $T=20$}
    \label{fig:exp1b}
\end{subfigure}
    \caption{
        Change in near-optimal strategy with $\delta$ for odd and even numbers of ads.
    }
    \label{fig:exp1}
\end{figure}

\subsection{Our strategy vs baseline strategies}\label{sec:ex_second}

We have presented the performance of our near-optimal strategy compared to three baseline strategies, namely Uniform, Corner, and Random, as described below.
\begin{itemize}
\item \textbf{Uniform:} Ads are placed uniformly over $[0, T]$, with equal spacing.

\item \textbf{Corner:} Half of the ads are placed at $t = 0$, half at $t = T$. 
\item \textbf{Random:} The first and last ads are fixed at $t = 0$ and $t = T$, respectively, while the remaining $n-2$ ads are distributed uniformly at random over the interval $(0, T)$.
\end{itemize}

In \cite{curmei2022towards}, the authors note that in real-world scenarios, the value of $\delta$ is usually high, around 0.98. Other studies \cite{murre2015replication,goldstein2011effects} on ads also suggest that $\delta$ typically lies between 0.7 and 0.99. That said, our theoretical framework is valid for any value of delta and gives a near-optimal solution. We have mainly used high $\delta$ values in this experiment based on what is reported in the literature.

Our first experiment (\cref{fig:exp2a}) models a video streaming setting, where users engage for 1.5–2 hours and are shown approximately 15 ads. Therefore, we have $n=15$ and $T=100$. For this experiment, we focus on a high value of $\delta$ (greater than $0.9$). As observed in the previous experiment (\cref{fig:exp1}) and in ~\cref{obs:behavior1}, when $\delta$ is small, the Uniform strategy performs well. Conversely, when $\delta$ approaches 1, the Corner strategy becomes more effective. Our experiment (\cref{fig:exp2}) shows that the near-optimal strategy adapts to $\delta$ and consistently outperforms all baseline strategies. Note that even though the difference in reward seems small, it can have a large impact on customer retention and revenue.

 We now present another experiment, but for a music streaming setting (\cref{fig:exp2b}), where a typical session lasts around 60 minutes with approximately 6 ads. This is to emulate what popular music streaming services offer in their free plan. It can be observed that even in this case, our strategy performs better than all the baseline strategies.

\begin{figure}
    \centering
    \begin{subfigure}{0.48\textwidth}
        \includegraphics[width=\textwidth]{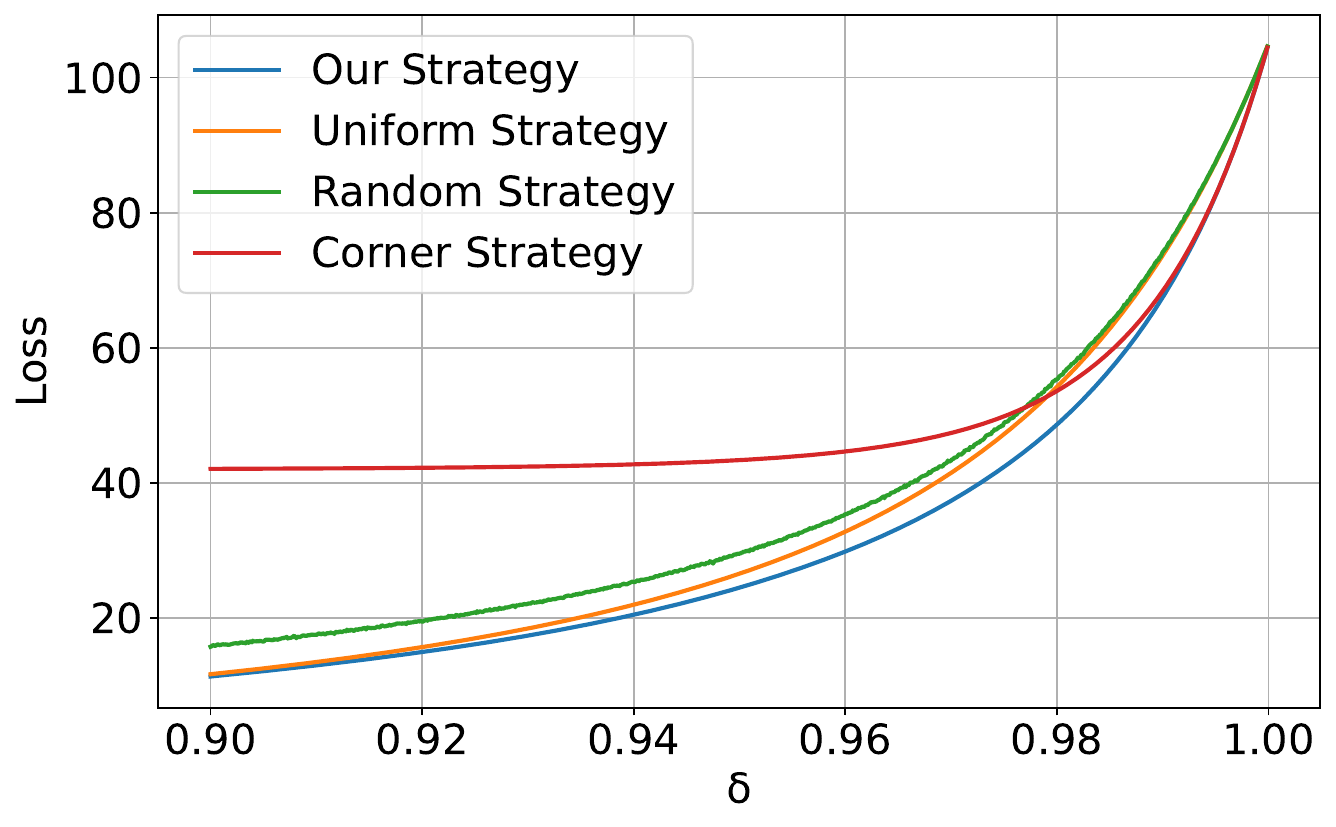}
        \caption{Loss vs $\delta$ for $n=15$, $T=100$}
        \label{fig:exp2a}
    \end{subfigure}
   \begin{subfigure}{0.48\textwidth}
    \centering
    \includegraphics[width=\textwidth]{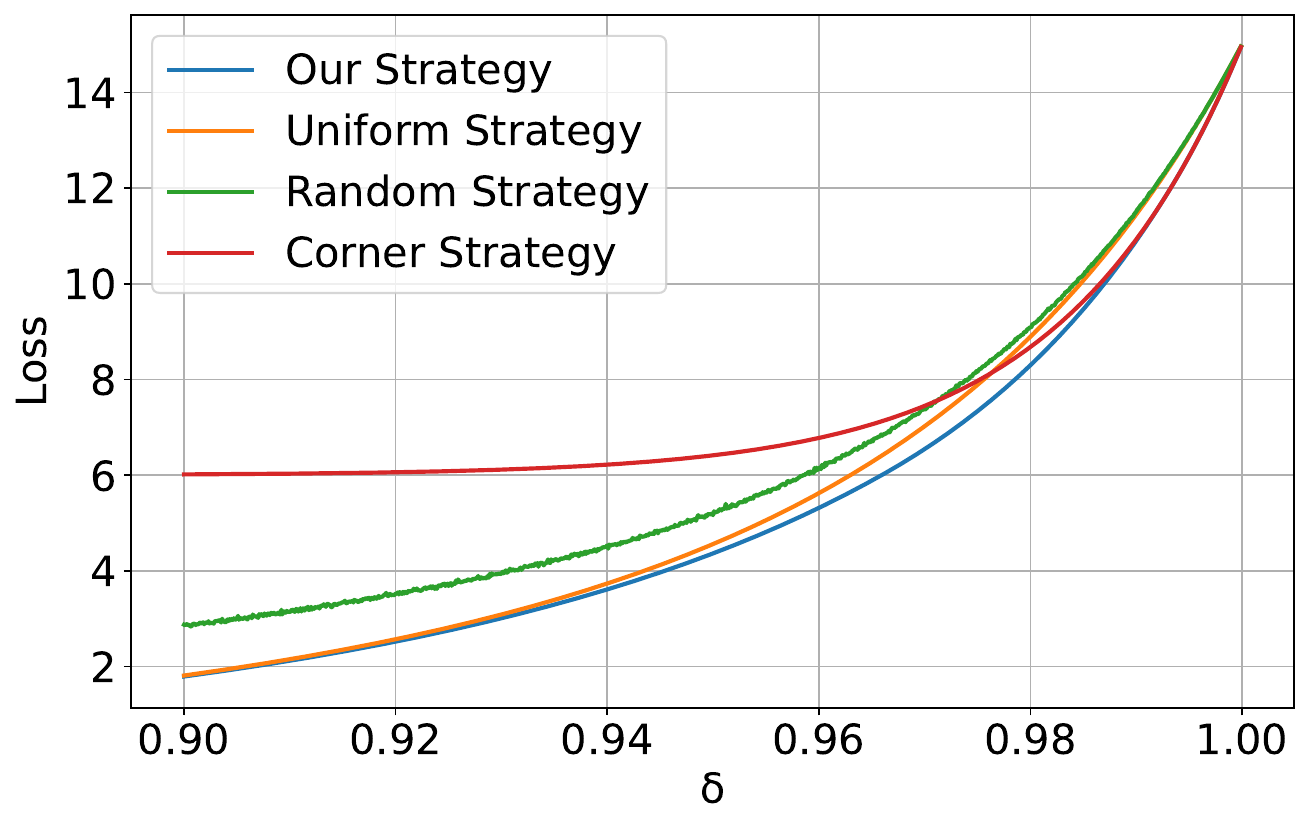}
    \caption{Loss with $\delta$ for $n=6$, $T=60$}
    \label{fig:exp2b}
\end{subfigure}
    \caption{
        Loss between near-optimal and baseline strategies for $n=15$, $T=100$ and $n=6$, $T=60$.
    }
    \label{fig:exp2}
\end{figure}

\subsection{Change in loss with number of ads}\label{sec:ex_third}
We now conduct another experiment to quantify the loss incurred by our near-optimal strategy due to the effect of operant conditioning, as the number of ads increases. Let $L^{\#}(n)$ denote the loss associated with showing $n$ ads under our strategy. A natural intuition is that if showing $n$ ads results in a loss of $L^{\#}(n)$, then doubling the number of ads would roughly double the loss, i.e., $L^{\#}(2n) \approx 2 \cdot L^{\#}(n)$. Our experiments actually support this intuition (see \cref{fig:exp3}).

Interestingly, for smaller values of $\delta$ (around 0.7), the loss remains relatively stable even as the number of ads increases. However, we observe a sharp rise in loss as $\delta$ increases from 0.9 to 0.99, indicating increased sensitivity to operant conditioning in this regime (see  \cref{fig:exp3a}). We further extended our experiment by plotting $\log(\text{loss})$ versus the number of ads (~\cref{fig:exp3b}), and observed trends similar to before.

\begin{figure}
    \centering
    \begin{subfigure}{0.48\textwidth}
        \includegraphics[width=\textwidth]{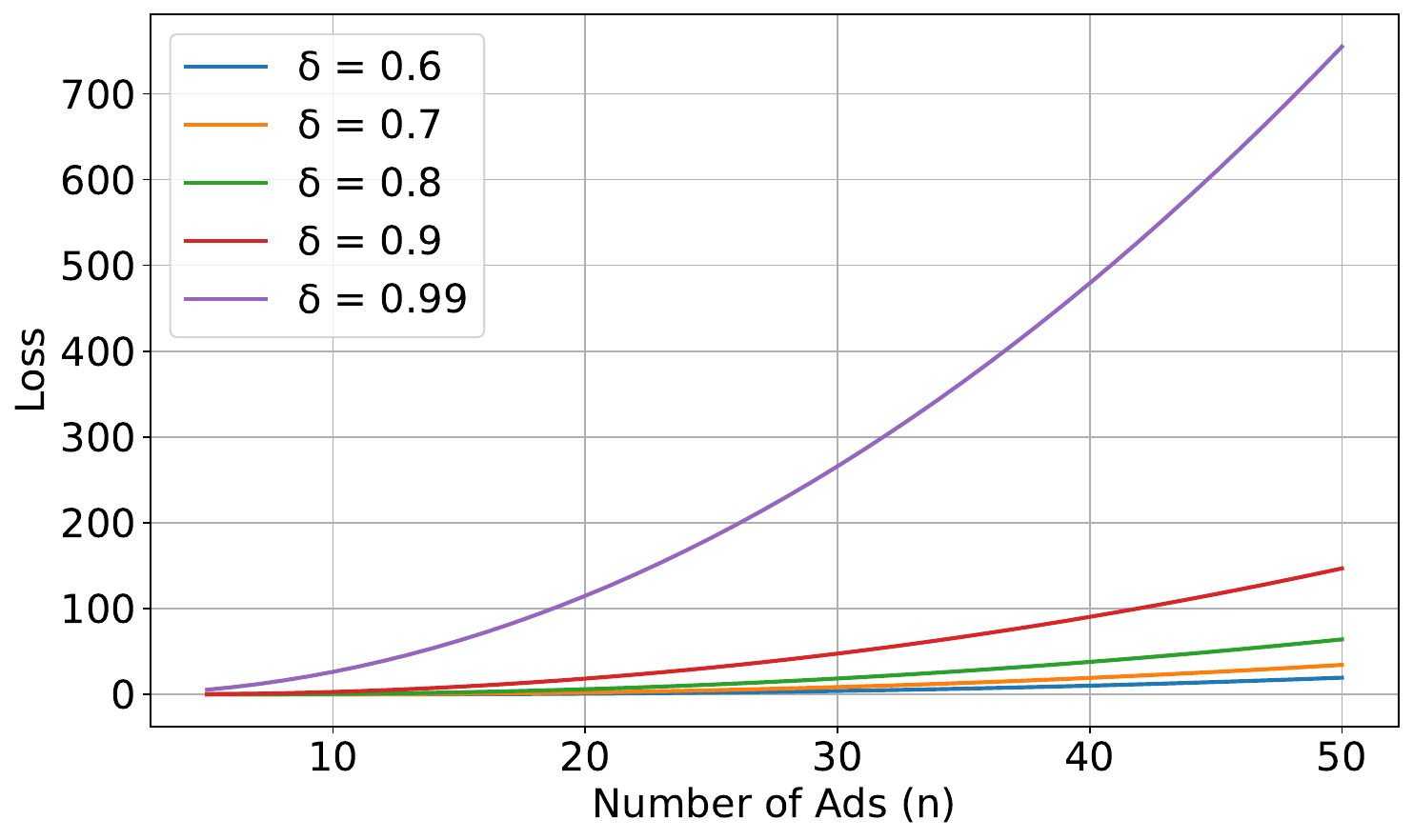}
        \caption{Loss vs $n$\_ads}
        \label{fig:exp3a}
    \end{subfigure}
   \begin{subfigure}{0.48\textwidth}
    \includegraphics[width=\textwidth]{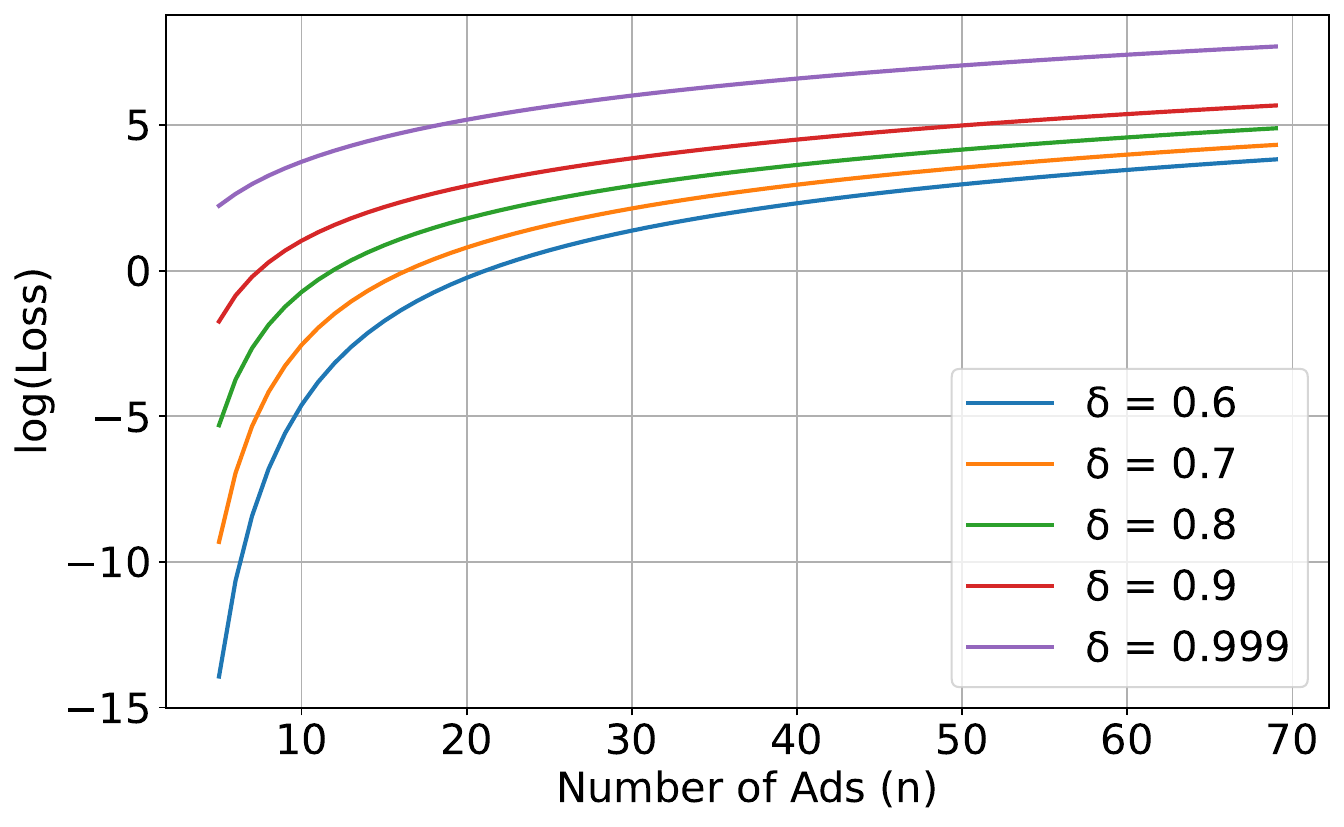}
    \caption{$\log$(loss) vs $n$\_ads}
    \label{fig:exp3b}
\end{subfigure}
    \caption{
       Change in loss and $\log$(loss) for different values of $\delta$ with an increase in the number of ads.
    }
    \label{fig:exp3}
\end{figure}

\subsection{Optimum number of ads}\label{sec:ex_four}

In our final experiment, we demonstrate that for a given user (fixed $\delta$), the optimal number of ads $n$ changes under different mere exposure and hedonic adaptation functions. We first use a sigmoid reward function $B(i) = k \cdot \frac{1}{1 + e^{-ci}}$, where $k$ captures the overall strength of these effects and $c$ controls sensitivity to the number of ads. For $k, c > 0$, the function is concave and increasing.

As shown in \cref{fig:exp4a}, the reward initially rises with more ads due to the dominance of mere exposure. Beyond a point, the negative impact of hedonic adaptation and operant conditioning becomes significant, causing the reward to decline. The peak of this curve corresponds to the optimal number of ads. We perform another experiment using $B(i) = k(1 - e^{-cx})$ (See ~\cref{fig:exp4b}), and show that the peak of the curve corresponds to the optimal number of ads.

\begin{figure}
    \centering
     \begin{subfigure}{0.48\textwidth}
        \includegraphics[width=\textwidth]{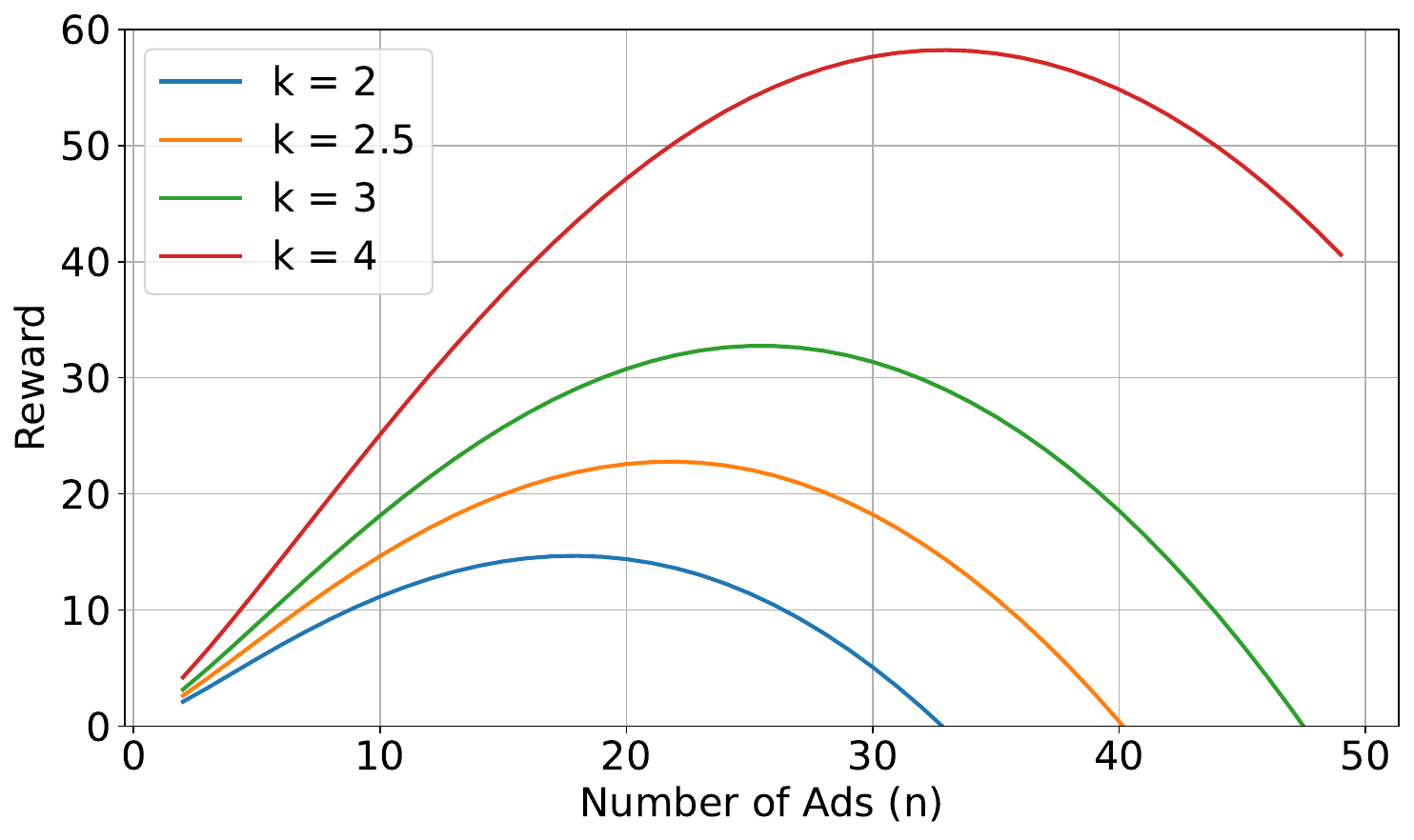}
        \caption{Reward v/s $n$ for $B(i)=k\cdot\frac{1}{1+e^{-i/5}}$}
        \label{fig:exp4a}
    \end{subfigure}
   \begin{subfigure}{0.48\textwidth}
    \includegraphics[width=\textwidth]{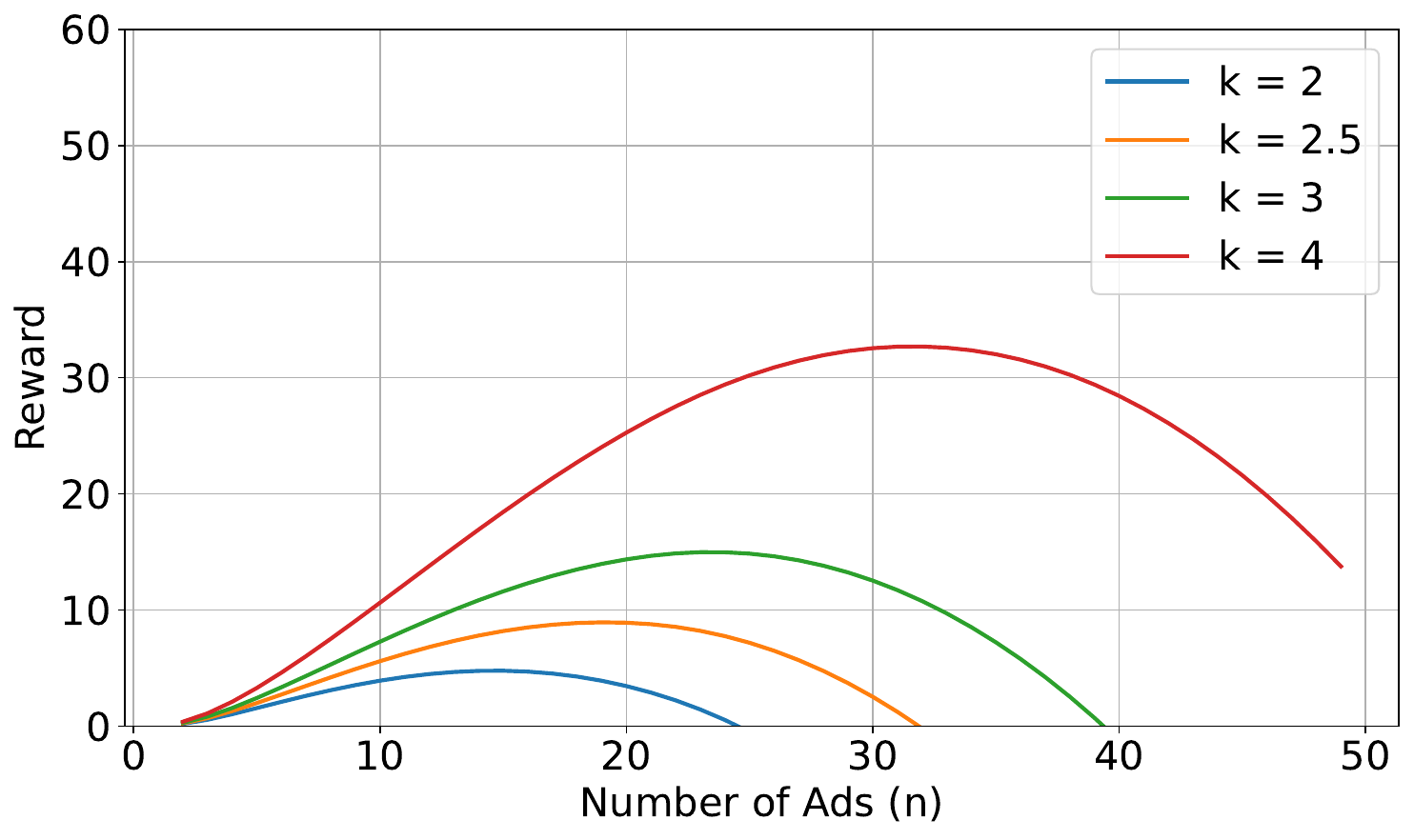}
    \caption{Reward v/s $n$ for $B(i)=k \cdot(1-e^{-i/10})$}
    \label{fig:exp4b}
\end{subfigure}
    \caption{
       Gain functions used to find the optimal number of ads for a user with $\delta = 0.9$.
    }
    \label{fig:exp4}
\end{figure}

\section{Conclusion}\label{sec:conclusion}\label{sec:conclusion}
In this paper, we consider a model that incorporates dynamic psychological effects -- mere exposure, hedonic adaptation, and operant conditioning into the problem of ad scheduling.
We present a near-optimal strategy to schedule ads based on our behavioral model. Our strategy leads to several insights into the problem of ad scheduling, for example, equal spacing of ads might not be optimal under many settings, and it might be better to show more ads in the beginning and at the end as compared to the middle of the time-horizon.
We also support these theoretical results using simulations.  

\subsection{Limitations/Extensions}
\noindent
\textbf{Seasonality and non-stationary rewards.}
While our model can work well for real-world settings where the rewards are approximately stationary, such as inserting ads into a (live) video stream,
our model does not handle scenarios where the rewards are affected by seasonality or time-of-day effect.
For example, it is unlikely that sending a push notification at night will result in user attention.

It will be interesting to extend our model to account for non-stationary rewards.

\noindent
\textbf{Competition between ads.}
In our work, we consider the optimization of the ad schedule from the point of view of a single advertiser or an ad agency running multiple homogeneous ads. 
In the future it will be interesting to account for externalities in the form of competing ads across various channels.

\noindent
\textbf{Incorporating context or side-information.}
Our model does not incorporate the context or side-information of the user or the ad into the optimization problem.
This is motivated by the fact that under many scenarios, advertisers do not have access to user information, such as advertising on streaming platforms.
Another future direction is to incorporate additional context of the user and ad.

\noindent
\textbf{Learning the reward function.}
Our current setup assumes knowledge of the reward function (including the parameter $\delta$). While our methodology is flexible enough to allow various types of reward functions, it will be interesting to study our problem as a
joint learning and optimization problem in a multi-armed bandits setting.

\bibliographystyle{alpha}
\newcommand{\etalchar}[1]{$^{#1}$}

\end{document}